\newtheorem{definition}{Definition}
\newtheorem{remark}{Remark}
\newtheorem{assumption}{Assumption}
\newtheorem{theorem}{Theorem}
\newtheorem{lemma}{Lemma}
\definecolor{PalaceRed}{RGB}{190,30,45}
\definecolor{DeepSeaBlue}{RGB}{0,51,102}
\definecolor{ForestGreen}{RGB}{49,103,69}
\begin{document}

\title{Nash equilibrium seeking in coalition games for multiple Euler-Lagrange systems: Analysis and application to USV swarm confrontation}

	\author{Cheng Yuwen, Jialing Zhou, \textit{Senior Member, IEEE}, Meng Luan, Guanghui Wen, \textit{Senior Member, IEEE}, and Tingwen Huang, \textit{Fellow, IEEE}
	\thanks{Cheng Yuwen and Guanghui Wen are with the School of Automation, Southeast University, Nanjing 210096, China (e-mail: ywc@nuaa.edu.cn; ghwen@seu.edu.cn). 
		
		Jialing Zhou is with the State Key Laboratory of CNS/ATM, Beijing Institute of Technology, Beijing 519088, China (e-mail: jlzhou@bit.edu.cn).
		
		Meng Luan is with the Department of System Science, School of Mathematics, Southeast University, Nanjing 211189, China (228433lm@seu.edu.cn).
		
		Tingwen Huang is with the Faculty of Computer Science and Control Engineering, Shenzhen University of Advanced Technology, Shenzhen 218055, China (e-mail: huangtw2024@163.com).}
}

\markboth{Manuscripts}%
{Shell \MakeLowercase{\textit{et al.}}: A Sample Article Using IEEEtran.cls for IEEE Journals}


\maketitle
\thispagestyle{ieeefirst}
\begin{abstract}
 
  This paper addresses a class of Nash equilibrium (NE) seeking problems in coalition games involving both local and coupling constraints for multiple Euler-Lagrange (EL) systems subject to disturbances of unknown bounds. Within each coalition, agents cooperatively minimize a shared cost function while competing against other coalitions. A distributed strategy is proposed to seek the NE under informational constraints, where each agent has access only to its own action, cost function, and constraint parameters. In the proposed distributed NE seeking strategy, adaptive techniques are combined with sign functions to handle model uncertainties and disturbances with unknown bounds in the EL systems. To deal with the Lagrange multipliers associated with local and coupling constraints, primal-dual techniques are integrated with consensus protocols. Additionally, a dynamic average consensus algorithm is employed to estimate the gradient of the coalition cost function, while a leader-following protocol is utilized to estimate the actions of other agents. Under standard convexity and graph-connectivity assumptions, global convergence of the closed-loop EL system to the NE is established. As an illustrative application, a swarm confrontation of unmanned surface vehicles involving formation, encirclement, and interception tasks is modeled within the coalition game framework, and numerical simulations are conducted under this model to validate the theoretical results.
\end{abstract}
\begin{IEEEkeywords}
Coalition game, distributed Nash equilibrium seeking, coupling constraint, uncertain Euler-Lagrange system, swarm confrontation.
\end{IEEEkeywords}
 
\section{Introduction}
\IEEEPARstart{G}{ame} theory establishes a rigorous mathematical architecture for modeling strategic interdependencies among rational decision-makers, offering quantitative tools to analyze complex multi-agent interactions \cite{rubinstein2007} and \cite{weiss1999multiagent}. Coalition games synthesize distributed optimization principles \cite{Nedic2009} and distributed non-cooperative game paradigms \cite{SALEHISADAGHIANI2016209}, establishing fundamental framework to characterize the interactions of multi-agent systems involving intra-coalition collaborative coordination and inter-coalition adversarial competition \cite{YE2018266,huang2022,zhou2024a}. Specifically, each coalition, composed of a group of agents, acts as a virtual player in non-cooperative games. The cost function of a coalition combines the cost functions of each agent within it, and the agents cooperate strategically to minimize this collective cost.

A key challenge in coalition games lies in identifying stable action profiles, namely those profiles in which no player or coalition has an incentive to deviate unilaterally. This kind of stable action profile is exactly the Nash equilibrium (NE), a fundamental solution concept in game theory that characterizes strategic stability. Currently, the problem of NE seeking in coalition games has been addressed in the existing literature. For example, a kind of NE seeking strategy was proposed in \cite{Ye2019Unified} to solve the partial NE seeking problem of coalition games. By considering coupling nonlinear constraints imposed on the leader within each coalition, a distributed nonsmooth algorithm based on projected differential inclusions was proposed in \cite{ZENG201920}. In \cite{Ye2019Unified} and \cite{ZENG201920}, it is assumed that all actions are globally accessible. However, in practice, global action information is often not directly available. To address this limitation, distributed estimation methods have been introduced in \cite{CHENG10156887,zou2023,nguyen2024,liu2024,Deng9772719}. Particularly, a class of distributed NE seeking algorithms for double-coalition games were developed in \cite{CHENG10156887} by exploiting the characteristics of signed networks. In multi-coalition games, the estimation error of all agents' actions is incorporated as a constraint, leading to the development of a class of distributed NE seeking algorithms based on the gradient tracking method \cite{zou2023}. The aforementioned works primarily developed NE seeking algorithms for addressing information accessibility issues, yet the key role of constraints like local resource limits and inter-agent coupling constraints in algorithm design still require further consideration. Specifically, the NE seeking problem under local and coupling constraints was addressed in \cite{ZENG201920} and \cite{liu2023dynamic} only if all the actions are globally accessible. For the case where the actions are unknown, equality constraints were considered in \cite{Deng9772719}, leaving the development of distributed NE seeking algorithms for inequality and local constraints yet to be explored.

In \cite{Ye2019Unified,ZENG201920,CHENG10156887,zou2023,Deng9772719,nguyen2024,liu2024,liu2023dynamic}, the research focus has been on designing NE seeking algorithms for coalition games, in which agents are modeled as simple first-order systems, while neglecting their potential physical dynamics in practical scenarios. Given that agents in most practical coalition games have inherent dynamics, recent research has increasingly focused on developing NE seeking algorithms for multi-agent coalition games that incorporate their dynamics \cite{huang2022, zhou2024a} and \cite{deng2024, Yuwen2025,Liu9745386, nian2023a, huang2024a}. In the context of coalition games in multiple high-order systems, a distributed backstepping strategy was developed via a coordinate transformation \cite{deng2024}. Moreover, coalition games in multi-agent systems with general linear dynamic was studied in \cite{Yuwen2025}. As many mechanical motion systems in the real world are affected by external forces, and particularly by constraints of both their own structures and energy conservation, the Euler-Lagrange (EL) system is often used to model them. In \cite{nian2023a}, a kind of distributed NE seeking algorithm for multiple heterogeneous EL system in coalition games was developed. Furthermore, by incorporating coupling constraints, these results were extended to aggregative coalition games \cite{huang2024a}.  The aforementioned results have advanced our understanding of how to design distributed NE seeking algorithms for EL systems in multi-agent coalition games. In practice, obtaining precise EL models of agents is challenging, yet most existing studies assume that the dynamic models are perfectly known. Moreover, the vast majority of practical multi‑agent systems are subject to external disturbances, and the disturbance bounds are always unknown. These features substantially complicate the NE seeking problem, especially when local constraints and inter‑agent coupling constraints must be enforced simultaneously.

In this paper, we consider coalition games with both local and coupling constraints and propose a distributed NE seeking algorithm for a class of uncertain EL systems subject to disturbances with unknown bounds. To handle the uncertainties and disturbances in the EL systems, an adaptive method is utilized to compensate the model uncertainties by exploiting the inherent properties of the EL system, and the disturbances with unknown bounds are suppressed through an integration of adaptive algorithms with sign functions. To deal with the constraints under local information, an adaptive method is employed to update the Lagrange multipliers associated with local constraints, while distributed primal-dual dynamics with consensus terms are harnessed to resolve those linked to coupling constraints. Meanwhile, average consensus and leader-following protocols are leveraged to estimate the gradient of coalition's cost functions and the actions of other agents, respectively. Beyond theory, a USV swarm confrontation case study is presented to highlight the flexibility of the coalition-game framework. The contributions are delineated as follows.

\vspace{-0.01cm}
\begin{enumerate}
	\item Under the condition where each agent only has access to its own objective function, actions, and partial constraint information, both local constraints and coupling inequality constraints in coalition games are considered in this paper, which is more general than those in \cite{ZENG201920} and
	\cite{Deng9772719}. Furthermore, unlike \cite{ZENG201920} and \cite{liu2023dynamic}, the reliance on globally known actions is eliminated.
	\item In the coalition game, the dynamics of all agents are a class of uncertain EL systems subject to disturbances with unknown bounds, which is more general than those in \cite{Huang2024} and \cite{Nian2023}. Moreover, the disturbances can take the form of any time function with unknown bounds, including discontinuous types that lie beyond the handling capacity of the results in \cite{Liu9745386}. In comparison with \cite{Zhang8792368,LIU2026112603,Romano8727896}, less knowledge of disturbances is required.
	\item In the context of USV swarm confrontation, a coalition game is employed to model and analyze the interactions among the USVs. A diverse set of tasks such as formation, encirclement, and interception is formulated to reflect realistic combat scenarios. Each swarm consists of both offensive and defensive USVs, which enables a more faithful simulation of the complex nature of swarm confrontations and distinguishes the model from traditional attack-defense settings in  \cite{Yuwen2025} and \cite{Liu9745386}.
\end{enumerate}

This paper is organized as follows. Section \ref{Sec_Prelm} introduces some relevant preliminaries. Section \ref{Sec_Probm} formulates the coalition game with local and coupling constraints. The distributed NE seeking strategy is further given in Section \ref{Sec_DNES}. The coalition game is applied to model the USV swarm confrontation in Section \ref{Sec_SWARMCONUSV}. Section \ref{Sec_SIMULATION} utilizes a numerical example to demonstrate effectiveness of the proposed algorithm. Finally, Section \ref{Sec_Conclu} concludes this paper.

\textit{Notations}. Let $\mathbb{R}^n$ denote the set of $n$-dimensional real column vectors, and $\mathbb{R}^{m \times n}$ denote the set of $m \times n$ real matrices. $\mathbf{1}_n$ denotes an $n$-dimensional vector with all entries being one. $I_n$ represents the $n\times n$ identity matrix. Let $\mathbf{0}$ be the zero matrix of appropriate dimensions. The superscript $\top$ denotes the transpose. $\otimes$ and $\odot$ represent the Kronecker product and the Hadamard product, respectively. $\|x\|$ denotes the Euclidean norm of a vector $x$ and $\|X\|$ denotes the induced $2$-norm of a matrix $X$. For vectors $x_i$ or matrices $X_i$ with $i=1,\ldots,n$, $\operatorname{col}\{x_1,\ldots,x_n\}$ represents the column stacking of vectors $x_i$, and $\operatorname{diag}\{X_1,\ldots,X_n\}$ denotes a block-diagonal matrix with matrices $X_i$ as diagonal blocks. $\leq$ denotes element-wise inequality between vectors. For a symmetric matrix $P$, $P\succ \mathbf{0}$ indicates that $P$ is positive definite. $\lambda_{\min}(P)$ and $\lambda_{\max}(P)$ denote the minimum and maximum eigenvalues of $P$, respectively. For a real vector $x=[x_1,\ldots,x_n]^{\top}$, $\mathcal{P}_{+}(x)=[\mathcal{P}_{+}(x_1),\ldots,\mathcal{P}_{+}(x_n)]^{\top}$ with $\mathcal{P}_{+}(\cdot)$ being the projection onto the set of nonnegative real numbers. For a real matrix $A=[a_{ij}]_{m\times n}$, $\operatorname{sgn}(A)=[\operatorname{sgn}(a_{ij})]_{m\times n}$ with $\operatorname{sgn}(\cdot)$ being the standard sign function.

\section{Preliminaries}\label{Sec_Prelm}

\subsection{Graph Theory}
A graph $ \mathcal{G} = (\mathcal{V}, \mathcal{E}) $ consists of a vertex set $\mathcal{V}=\{\upsilon_1,\ldots,\upsilon_N\}$ and an edge set $(\upsilon_i,\upsilon_j)\in \mathcal{E}$. For an undirected graph, edges are unordered pairs $(\upsilon_i, \upsilon_j) $, while for a directed graph, edges are ordered pairs $ (\upsilon_i, \upsilon_j) $. A directed graph is strongly connected if there exists a directed path from $ \upsilon_i$ to $\upsilon_j$, for any $ \upsilon_i, \upsilon_j \in \mathcal{V} $ and $\upsilon_i\neq \upsilon_j$. An undirected graph is connected if there exists a path from $ \upsilon_i $ to $ \upsilon_j $, for any $\upsilon_i, \upsilon_j \in \mathcal{V} $ and $\upsilon_i\neq \upsilon_j$. 

The adjacency matrix $ \mathcal{A}=[a_{ij}]_{N\times N} $ is defined as $a_{ij}=1$ if $(\upsilon_i, \upsilon_j) \in \mathcal{E}$, and $a_{ij}=0$ otherwise. Let $ \mathcal{D}=\operatorname{diag}\{d_1,\ldots,d_N\} $ with $ d_i=\sum_{j=1}^{N}a_{ij} $. The Laplacian matrix is given by $\mathcal{L} = \mathcal{D} - \mathcal{A}$. For an undirected graph, the Laplacian matrix is positive semi-definite.

\subsection{Projection Operator}
Let $\Omega \subset \mathbb{R}^m$ be a convex and closed set. A projection operator onto $\Omega$ is defined as $\mathcal{P}_{\Omega}(y) = \arg \min_{x \in \Omega} \|y - x\|$, $\forall y \in \mathbb{R}^m$.
To facilitate the subsequent analysis, the following properties of the projection operator are introduced.
\begin{lemma}[\hspace{-0.001cm}\cite{clarke2008nonsmooth}]\label{Projection_property}
	Define a function $V(x): \mathbb{R}^m \mapsto \mathbb{R}$ as $V(x) = \frac{1}{2} \left( \|x - \mathcal{P}_{\Omega}(\tilde{x})\|^2 - \|x - \mathcal{P}_{\Omega}(x)\|^2 \right)$, $\forall x, \tilde{x} \in \mathbb{R}^m$. Then the following statements hold:
	\begin{enumerate}[label=(\arabic*)]
		\item $(x - \mathcal{P}_{\Omega}(x))^{\top} (y - \mathcal{P}_{\Omega}(x)) \leq 0$, $\forall x \in \mathbb{R}^m,y \in \Omega$.
		\item $(\mathcal{P}_{\Omega}(x) - \mathcal{P}_{\Omega}(\tilde{x}))^{\top} (x - \tilde{x}) \geq \|\mathcal{P}_{\Omega}(x) - \mathcal{P}_{\Omega}(\tilde{x})\|^2$, $\forall x, \tilde{x} \in \mathbb{R}^m$.
		\item $V(x)$ is convex and differentiable with respect to $x$ with $\partial V(x) = \mathcal{P}_{\Omega}(x) - \mathcal{P}_{\Omega}(\tilde{x})$ and $V(x) \geq \frac{1}{2} \|\mathcal{P}_{\Omega}(x) - \mathcal{P}_{\Omega}(\tilde{x})\|^2$, $\forall x, \tilde{x} \in \mathbb{R}^m$.
	\end{enumerate}
\end{lemma}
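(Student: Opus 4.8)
The plan is to prove the three statements in sequence, letting each lean on its predecessor. For statement (1), I would exploit that $p := \mathcal{P}_{\Omega}(x)$ minimizes $\tfrac{1}{2}\|z - x\|^2$ over the convex set $\Omega$. Fixing any $y \in \Omega$, convexity of $\Omega$ keeps the segment $z(t) = p + t(y-p)$ inside $\Omega$ for $t \in [0,1]$, so the scalar map $t \mapsto \tfrac{1}{2}\|z(t) - x\|^2$ attains its minimum at $t = 0$ and has nonnegative right derivative there. Computing that derivative gives $(p - x)^{\top}(y - p) \geq 0$, i.e.\ $(x - p)^{\top}(y - p) \leq 0$, which is the desired variational inequality.

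For statement (2), I would instantiate (1) twice. With $p = \mathcal{P}_{\Omega}(x)$ and $q = \mathcal{P}_{\Omega}(\tilde{x})$, applying (1) at $x$ with the admissible point $y = q$ yields $(x - p)^{\top}(q - p) \leq 0$, and applying it at $\tilde{x}$ with $y = p$ yields $(\tilde{x} - q)^{\top}(p - q) \leq 0$. Adding these and collecting terms around the common factor $(p - q)$ reduces the sum to $(x - \tilde{x})^{\top}(p - q) \geq \|p - q\|^2$, the claimed co-coercivity (firm-nonexpansiveness) estimate.

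Statement (3) carries the real content. The key step is to rewrite $V(x) = \tfrac{1}{2}\|x - \mathcal{P}_{\Omega}(\tilde{x})\|^2 - \tfrac{1}{2} d_{\Omega}^2(x)$, where $d_{\Omega}(x) = \|x - \mathcal{P}_{\Omega}(x)\|$ is the distance to $\Omega$, and to invoke the standard fact that the squared-distance function of a closed convex set is continuously differentiable with $\nabla\!\bigl(\tfrac{1}{2} d_{\Omega}^2\bigr)(x) = x - \mathcal{P}_{\Omega}(x)$. Differentiating $V$ then collapses the two $x$-linear parts and leaves $\nabla V(x) = \mathcal{P}_{\Omega}(x) - \mathcal{P}_{\Omega}(\tilde{x})$, the asserted (sub)gradient, which here is a singleton since $V$ is differentiable. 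For convexity I would \emph{not} split $V$ into its two terms (the subtracted term is itself convex, so that route is unhelpful); instead I would show the gradient is monotone: since $\nabla V(x) - \nabla V(x') = \mathcal{P}_{\Omega}(x) - \mathcal{P}_{\Omega}(x')$, statement (2) gives $(\nabla V(x) - \nabla V(x'))^{\top}(x - x') \geq \|\mathcal{P}_{\Omega}(x) - \mathcal{P}_{\Omega}(x')\|^2 \geq 0$, whence $V$ is convex. The lower bound I would obtain by expanding $V(x)$ directly; after cancellation it reduces to $(x - \mathcal{P}_{\Omega}(x))^{\top}(\mathcal{P}_{\Omega}(x) - \mathcal{P}_{\Omega}(\tilde{x})) \geq 0$, which is precisely statement (1) applied at $x$ with $y = \mathcal{P}_{\Omega}(\tilde{x}) \in \Omega$.

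I expect the main obstacle to be the differentiability assertion in (3): rigorously justifying that $x \mapsto \tfrac{1}{2} d_{\Omega}^2(x)$ is $C^1$ with gradient $x - \mathcal{P}_{\Omega}(x)$ despite the implicit, generally nonsmooth dependence of $\mathcal{P}_{\Omega}(x)$ on $x$. This is a Moreau-envelope / Danskin-type argument in which the derivative of the optimal value does not require differentiating the minimizer itself; once it is granted, the remaining parts are just the first-order optimality condition and elementary algebra chaining the earlier statements.
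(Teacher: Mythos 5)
The paper does not prove this lemma at all---it is imported verbatim from \cite{clarke2008nonsmooth}---so there is no in-paper argument to compare against; I can only assess your proposal on its own terms, and it is correct. Part (1) is the standard first-order optimality condition along the segment $p+t(y-p)\subset\Omega$; part (2) follows by adding the two instances of (1) exactly as you describe; and in part (3) the expansion $V(x)=(x-\mathcal{P}_{\Omega}(x))^{\top}(\mathcal{P}_{\Omega}(x)-\mathcal{P}_{\Omega}(\tilde{x}))+\tfrac{1}{2}\|\mathcal{P}_{\Omega}(x)-\mathcal{P}_{\Omega}(\tilde{x})\|^{2}$ together with (1) gives the lower bound, while monotonicity of $\nabla V$ via (2) gives convexity. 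You correctly identify the only step that is not elementary algebra: the $C^1$ smoothness of $x\mapsto\tfrac12 d_{\Omega}^{2}(x)$ with gradient $x-\mathcal{P}_{\Omega}(x)$. As stated, your proof outsources that to the Moreau-envelope literature, which is legitimate but arguably no more self-contained than the paper's bare citation. If you want to close that gap internally, note that the same expansion already shows $V(y)-V(x)\geq(\mathcal{P}_{\Omega}(x)-\mathcal{P}_{\Omega}(\tilde{x}))^{\top}(y-x)$ for all $y$ (so $\mathcal{P}_{\Omega}(x)-\mathcal{P}_{\Omega}(\tilde{x})$ is a subgradient of the convex function $V$), and the matching upper bound $V(y)-V(x)-(\mathcal{P}_{\Omega}(x)-\mathcal{P}_{\Omega}(\tilde{x}))^{\top}(y-x)\leq\|y-x\|^{2}$ follows from nonexpansiveness of the projection (itself a consequence of (2) via Cauchy--Schwarz); together these give differentiability with the claimed gradient without invoking any external Danskin-type theorem.
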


\section{Problem Statements}\label{Sec_Probm}
In this section, consider a non-cooperative coalition game $\Game(\mathbb{N}, \{\mathbb{V}_i\}_{i\in\mathbb{N}}, \{\mathbb{J}_i\}_{i\in\mathbb{N}}, \{\bm{\Omega}_i\}_{i\in\mathbb{N}})$ where $\mathbb{N} = \{1,\ldots,N\}$ is the set of coalitions, $\mathbb{V}_i = \{1,\ldots,m_i\}$ is the set of agents in the $i$-th coalition, $\mathbb{J}_i = \{J_{i1}, \ldots, J_{im_i}\}$ is the set of cost functions in the $i$-th coalition, $J_{ij}(\mathbf{x}): \prod_{i\in\mathbb{N}} \bm{\Omega}_i \to \mathbb{R}$ is the cost function of the $j$-th agent in the $i$-th coalition, $\mathbf{x}$ denotes the actions of all agents in all coalitions, $\bm{\Omega}_i = \prod_{j\in\mathbb{V}_i} \Omega_{ij}$ is the feasible set of agents in the $i$-th coalition, and $\Omega_{ij} \subseteq \mathbb{R}^{r}$ is the feasible set of the $j$-th agent in the $i$-th coalition. Furthermore, the actions $\mathbf{x} = \operatorname{col}\{\mathbf{x}_1, \ldots, \mathbf{x}_{N} \}$ can also be noted as $(\mathbf{x}_i, \mathbf{x}_{-i})$, where $\mathbf{x}_i = \operatorname{col}\{ {x}_{i1}, \ldots, {x}_{im_i}\} \in \bm{\Omega}_i$ represents the actions of all agents in the $i$-th coalition, and $\mathbf{x}_{-i} = \operatorname{col}\{\mathbf{x}_{1},\mathbf{x}_{2}, \ldots, \mathbf{x}_{i-1}, \mathbf{x}_{i+1}, \ldots, \mathbf{x}_{N} \}\in\prod_{k\in\mathbb{N}\setminus\{i\}}\bm{\Omega}_{k}$ represents the actions of all agents in all coalitions except those in the $i$-th coalition. Moreover, $n = \sum_{i=1}^{N} m_i$ is the total number of agents in the coalition game.

Assume that the cost function of each coalition is a linear combination of the local cost functions of the agents within the same coalition. Therefore, the cost function of the $i$-th coalition is defined as
\begin{equation*}
	J_i(\mathbf{x}_i, \mathbf{x}_{-i}) = \frac{1}{m_i} \sum_{j=1}^{m_i} J_{ij}(\mathbf{x}_i, \mathbf{x}_{-i}), \quad \forall i \in \mathbb{N},
\end{equation*}
where $J_{i}(\mathbf{x}_i, \mathbf{x}_{-i})$ is the cost function of the $i$-th coalition, and $J_{ij}(\mathbf{x}_i, \mathbf{x}_{-i})$ is the local cost function of the $j$-th agent in the $i$-th coalition. Each agent is only aware of its own cost function and adjusts its own action to minimize the coalition's cost function.

Considering both local and coupling constraints, the feasible set of the $j$-th agent in the $i$-th coalition is given by
\begin{equation*}
	\mathbb{X}_{ij} = \left\{ x_{ij} \in \Omega_{ij} \left| (x_{ij}, \mathbf{x}_{i,-j}) \in \bm{\Omega}_i \cap \mathbb{X}_i \right.\right\},
\end{equation*}
where $\Omega_{ij}$ represents the local constraints, $\mathbb{X}_i$ represents the coupling constraints within the $i$-th coalition, and $\mathbf{x}_{i,-j} = \operatorname{col}\{ x_{i1}, \ldots, x_{i(j-1)}, x_{i(j+1)}, \ldots, x_{im_i} \}$ represents the actions of all agents in the $i$-th coalition except for the $j$-th agent. 


For the $i$-th coalition, its objective can be expressed as
\begin{equation}\label{ch4_NCG}
	\begin{aligned}
		\min_{\mathbf{x}_i} ~& J_i(\mathbf{x}_i, \mathbf{x}_{-i}), \quad \forall i \in \mathbb{N}, \\
		\text{s.t.}~& \mathbf{x}_i \in \bm{\Omega}_i\cap \mathbb{X}_i.
	\end{aligned}	
\end{equation}

In coalition game $\Game$, all agents within each coalition collaboratively optimize the coalition's cost function, with each coalition acting as a virtual player, thereby leading to an NE seeking problem. The definition of NE is given as follows.
\begin{definition}\label{ch4_D_1}
	An action profile $\mathbf{x}^{\star} = (\mathbf{x}_i^{\star}, \mathbf{x}_{-i}^{\star})$ is an NE of the non-cooperative coalition game $\Game$ if and only if
	\begin{equation*}
		J_i(\mathbf{x}_i^{\star}, \mathbf{x}_{-i}^{\star}) \leq J_i(\mathbf{x}_i, \mathbf{x}_{-i}^{\star}), \quad \forall \mathbf{x}_i: (\mathbf{x}_i, \mathbf{x}_{-i}^{\star}) \in \bm{\Omega}\cap \mathbb{X}, i \in \mathbb{N},
	\end{equation*}
	where $\bm{\Omega}=\prod_{i\in\mathbb{N}}\bm{\Omega}_i$ and $\mathbb{X}=\prod_{i\in\mathbb{N}}\mathbb{X}_i$.
\end{definition}

To ensure the existence and uniqueness of the NE, we rely on the following assumptions, which are frequently employed in related literature \cite{YE2018266,Ye2019Unified,zou2023,Deng9772719,nguyen2024,liu2024}.
\begin{assumption}\label{Ass_StronglyConvex}
	For all $i\in\mathbb{N}$ and $j\in\mathbb{V}_{i}$, $f_{ij}(\mathbf{x}_{i},\mathbf{x}_{-i})$ is continuously differentiable and $\hbar$-strongly convex with respect to $\mathbf{x}_{i}$ for any given $\mathbf{x}_{-i}\in\prod_{k\in\mathbb{N}\setminus\{i\}}\bm{\Omega}_{k}$.
\end{assumption}

\begin{assumption}\label{Ass_Lipschitz}
	For all $i\in\mathbb{N}$ and $j\in\mathbb{V}_{i}$, $\frac{\partial f_{ij}(\mathbf{x}_{i},\mathbf{x}_{-i})}{\partial \mathbf{x}_i}$ is globally Lipschitz with a constant $\ell$.
\end{assumption}

In this paper, the local and coupling constraints is considered as follows
\begin{equation*}
	\begin{aligned}
		\Omega_{ij}=&\left\{ x_{ij}\in\mathbb{R}^{r}\left| B_{ij}x_{ij}\leq b_{ij}\right.\right\},\\
		\mathbb{X}_i=&\left\{\mathbf{x}_i\in\mathbb{R}^{rm_i}\left|\sum_{j=1}^{m_i} G_{ij} x_{ij} \leq  \sum_{j=1}^{m_i} g_{ij}\right.\right\},
	\end{aligned}
\end{equation*}
where $B_{ij}$ and $G_{ij}$ are constant matrices, $b_{ij}$ and $g_{ij}$ are constant vectors.

\begin{assumption}\label{ExistAss}
	For any $i\in\mathbb{N}$, there exists an action $\mathbf{x}_i$ such that $\sum_{j=1}^{m_i} G_{ij} x_{ij} <  \sum_{j=1}^{m_i} g_{ij}$, and $B_{ij}x_{ij}< b_{ij}$, $\forall j\in\mathbb{V}_i$.
\end{assumption}

Define the pseudogradient of coalition game $\Game$ as $F(\mathbf{x})=\operatorname{col}\{\nabla_{\mathbf{x}_1}J_1(\mathbf{x}),\ldots,\nabla_{\mathbf{x}_N}J_N(\mathbf{x})\}$ where $\nabla_{\mathbf{x}_i}J_i(\mathbf{x})=\frac{\partial J_i(\mathbf{x})}{\partial \mathbf{x}_i}$. With the KKT conditions, one can get the following lemma.

\begin{lemma}
	Under Assumptions \ref{Ass_StronglyConvex} and \ref{ExistAss}, $\mathbf{x}^{\star}=\operatorname{col}\{x_{11}^{\star},$ $x_{12}^{\star},\ldots,x_{1m_1}^{\star},x_{21}^{\star},\ldots,x_{Nm_N}^{\star}\}$ is the NE of game $\Game$ if and only if there exist $\omega_{ij}^{\star} \in \mathbb{R}^r$ and $\lambda_{ij}^{\star} \in \mathbb{R}^p$ such that
	\begin{subequations}
		\begin{align}
			\frac{\partial J_i(\mathbf{x}^{\star})}{\partial x_{ij}} + G_{ij}^{\top} \mathcal{P}_{+}(\lambda_{ij}^{\star}) + B_{ij}^{\top}\omega_{ij}^{\star}= & \mathbf{0}, \label{constraint_ija} \\
			\sum_{j=1}^{m_i} G_{ij} x_{ij}^{\star} \leq & \sum_{j=1}^{m_i} g_{ij}, \label{constraint_ijb}\\
			B_{ij}x_{ij}^{\star} \leq & b_{ij},	\label{constraint_ijc}\\
			\mathcal{P}_{+}(\lambda_{ij}^{\star}) \odot (\sum_{j=1}^{m_i} G_{ij} x_{ij}^{\star}- \sum_{j=1}^{m_i} g_{ij}) =& \mathbf{0}, \label{constraint_ije}\\
			\omega_{ij}^{\star} \odot (B_{ij}x_{ij}^{\star} - b_{ij}) = & \mathbf{0}. \label{constraint_ijf}
		\end{align}
	\end{subequations}
\end{lemma}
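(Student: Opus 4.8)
The plan is to reduce the equilibrium characterization to a family of per-coalition constrained convex programs and then invoke the Karush-Kuhn-Tucker (KKT) theorem on each. By Definition \ref{ch4_D_1}, $\mathbf{x}^{\star}$ is an NE precisely when, for every $i\in\mathbb{N}$, the block $\mathbf{x}_i^{\star}$ solves the minimization in \eqref{ch4_NCG} with the rival actions frozen at $\mathbf{x}_{-i}^{\star}$, i.e. $\mathbf{x}_i^{\star}=\arg\min_{\mathbf{x}_i\in\bm{\Omega}_i\cap\mathbb{X}_i}J_i(\mathbf{x}_i,\mathbf{x}_{-i}^{\star})$. Under Assumption \ref{Ass_StronglyConvex} the objective $J_i=\frac{1}{m_i}\sum_j J_{ij}$ is (strongly) convex in $\mathbf{x}_i$, and the feasible set is cut out by the affine inequalities $B_{ij}x_{ij}\le b_{ij}$ (one block per agent) together with the single coupling inequality $\sum_j G_{ij}x_{ij}\le\sum_j g_{ij}$. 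Thus each coalition faces a convex program with affine constraints, the setting in which the KKT conditions are both necessary and sufficient for optimality.

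For necessity, I would first observe that Assumption \ref{ExistAss} is exactly a Slater condition: it supplies a strictly feasible action, furnishing the constraint qualification that guarantees a minimizer of coalition $i$'s program admits KKT multipliers. Forming the Lagrangian with a single nonnegative multiplier $\mu_i\in\mathbb{R}^p$ for the coupling constraint and nonnegative multipliers $\nu_{ij}$ for the local constraints, stationarity in $x_{ij}$ reads $\partial_{x_{ij}}J_i+G_{ij}^{\top}\mu_i+B_{ij}^{\top}\nu_{ij}=\mathbf{0}$, alongside primal feasibility and complementary slackness. Setting $\omega_{ij}^{\star}=\nu_{ij}$ and, for each agent, $\mathcal{P}_{+}(\lambda_{ij}^{\star})=\mu_i$ (a per-agent copy of the shared coupling multiplier, legitimate because $\mu_i\ge\mathbf{0}$) reproduces \eqref{constraint_ija}--\eqref{constraint_ijc}; the complementary-slackness identities \eqref{constraint_ije}--\eqref{constraint_ijf} then follow from their dot-product counterparts, since a nonnegative multiplier paired with a nonpositive constraint slack vanishes entrywise iff it vanishes in inner product.

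For sufficiency, I would run the argument in reverse using convexity. Given multipliers satisfying \eqref{constraint_ija}--\eqref{constraint_ijf}, the gradient inequality for the convex map $\mathbf{x}_i\mapsto J_i(\mathbf{x}_i,\mathbf{x}_{-i}^{\star})$ yields, for any feasible $\mathbf{x}_i$, that $J_i(\mathbf{x}_i,\mathbf{x}_{-i}^{\star})-J_i(\mathbf{x}_i^{\star},\mathbf{x}_{-i}^{\star})\ge\sum_j(\partial_{x_{ij}}J_i(\mathbf{x}^{\star}))^{\top}(x_{ij}-x_{ij}^{\star})$. Substituting stationarity \eqref{constraint_ija} splits the right-hand side into a local part and a coupling part; feasibility ($B_{ij}x_{ij}\le b_{ij}$ and $\sum_jG_{ij}x_{ij}\le\sum_jg_{ij}$) together with the complementary-slackness conditions \eqref{constraint_ije}--\eqref{constraint_ijf} renders each part nonnegative, so $J_i(\mathbf{x}_i,\mathbf{x}_{-i}^{\star})\ge J_i(\mathbf{x}_i^{\star},\mathbf{x}_{-i}^{\star})$ for all feasible $\mathbf{x}_i$, which is exactly the NE inequality of Definition \ref{ch4_D_1} for every $i$.

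The step I expect to be most delicate is the coupling term in the sufficiency direction. Collapsing $\sum_j \mathcal{P}_{+}(\lambda_{ij}^{\star})^{\top}G_{ij}(x_{ij}-x_{ij}^{\star})$ into $\mu_i^{\top}\sum_j G_{ij}(x_{ij}-x_{ij}^{\star})$---so that primal feasibility and \eqref{constraint_ije} may be applied to the aggregated slack---requires the per-agent coupling multipliers to coincide within each coalition. I would therefore make explicit that the $\mathcal{P}_{+}(\lambda_{ij}^{\star})$ share a common value $\mu_i$ across $j\in\mathbb{V}_i$, reflecting the consensus that the distributed dual dynamics enforce at equilibrium; this alignment, rather than the routine convexity bookkeeping, is the crux. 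The remaining care lies purely in verifying the constraint qualification from Assumption \ref{ExistAss}, so that KKT is genuinely necessary and not merely sufficient.
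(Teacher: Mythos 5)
Your proposal is correct and follows exactly the route the paper intends but never writes out: the paper dispenses with this lemma in one sentence (``With the KKT conditions, one can get the following lemma''), so the per-coalition reduction, the reading of Assumption \ref{ExistAss} as a Slater condition, and the two-directional KKT argument for a convex program with affine constraints are precisely the missing content. The one substantive point you raise is also the right one to worry about: as literally stated, the lemma does not require the per-agent coupling multipliers $\mathcal{P}_{+}(\lambda_{ij}^{\star})$ to agree across $j\in\mathbb{V}_i$, and without that consensus the sufficiency direction genuinely fails (a simple two-agent quadratic example with an active coupling constraint admits unequal nonnegative multipliers satisfying \eqref{constraint_ija}--\eqref{constraint_ijf} at a non-optimal point). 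The paper implicitly supplies this missing consensus only later, in the equilibrium analysis of Lemma \ref{ch5_equi}, where \eqref{Equ_f} forces $\mathcal{P}_{+}(\lambda_{i1}^{*})=\cdots=\mathcal{P}_{+}(\lambda_{im_i}^{*})$; your insistence on making the common value $\mu_i$ explicit in the lemma itself is therefore an improvement, not a detour. One small addition worth making for the same reason: the sufficiency argument for the local part needs $\omega_{ij}^{\star}\geq\mathbf{0}$, which the statement also omits but which your construction $\omega_{ij}^{\star}=\nu_{ij}$ provides in the necessity direction; state it as a hypothesis (or note that \eqref{constraint_ijf} alone does not imply it) so that both directions are airtight.
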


In the coalition game $\Game$, each agent operates with access solely to its individual cost function $J_{ij}$, the individual action $x_{ij}$, and the constraint parameters $B_{ij}$, $b_{ij}$, $G_{ij}$, and $g_{ij}$. To collaborative optimization of coalition's cost function $J_i(\mathbf{x})$, the agents in the $i$-th coalition must be able to estimate the gradient of coalition's cost function $\nabla_{\mathbf{x}_i}J_i(\mathbf{x})$ and to update the Lagrange multipliers $\lambda_{ij}$ associated with coupling constraints across the agents in the same coalition. Therefore, $\forall i\in\mathbb{N}$, the communication topology of the agents in the $i$-th coalition is denoted by $\mathcal{G}_i(\mathbb{V}_i,\mathcal{E}_i)$, with the set of agents $\mathbb{V}_i$, the edge set $\mathcal{E}_i$, the adjacency matrix $\mathcal{A}_i=[a^{i}_{ij}]_{m_i\times m_i}$ and the Laplacian matrix $\mathcal{L}_i$. Moreover, considering the cost function may involve all the actions $\mathbf{x}$, each agent must also be able to estimate the actions of other agents. Hence, renumber the agents in all coalitions so that the $j$-th agent in the $i$-th coalition is assigned index $\sum_{l=1}^{i-1}m_l + j$. The communication topology of all the agents is denoted by $\bar{\mathcal{G}}(\bar{\mathbb{N}},\bar{\mathcal{E}})$, with the set of agents $\bar{\mathbb{N}}=\{1,2,\ldots,n\}$, the edge set $\bar{\mathcal{E}}$, the adjacency matrix $\bar{\mathcal{A}}=[\bar{a}_{pq}]_{n\times n}$, and the Laplacian matrix $\bar{\mathcal{L}}$.


\begin{assumption}\label{ch5_AssGraph}
	For all $i\in\mathbb{N}$, $\mathcal{G}_i$ is undirected and connected. Furthermore, $\bar{\mathcal{G}}$ is directed and strongly connected.
\end{assumption}

Assume that the system of the $j$-th agent in the $i$-th coalition is characterized by the following EL dynamics
\begin{equation}\label{EL_Sys}
	E_{ij}(x_{ij})\ddot{x}_{ij}+C_{ij}(x_{ij},\dot{x}_{ij})\dot{x}_{ij}+D_{ij}(x_{ij})=u_{ij}+d_{ij},
\end{equation}
where $x_{ij}\in\mathbb{R}^{r}$ is the generalized position vector, $\dot{x}_{ij}\in\mathbb{R}^{r}$ is the generalized velocity vector, $\ddot{x}_{ij}\in\mathbb{R}^{r}$ is the generalized acceleration vector, $E_{ij}(x_{ij})\in\mathbb{R}^{r\times r}$ is the generalized inertia matrix, $C_{ij}(x_{ij},\dot{x}_{ij})$ is the generalized centrifugal and Coriolis matrices, $D_{ij}(x_{ij})\in\mathbb{R}^{r}$ is the generalized gravity vector, $u_{ij}\in\mathbb{R}^{r}$ is the control torque vector, and $d_{ij} \in \mathbb{R}^r$ is a time-varying disturbance with an unknown constant bound $\tilde{d}_{ij}$.

\begin{lemma}[\hspace{-0.001cm}\cite{Ortega1998}]\label{EL_property}
	For the EL system \eqref{EL_Sys}, the following statements hold:
	\begin{enumerate}[label=(\arabic*)]
		\item $E_{ij}(x_{ij})$ is a positive definite matrix.
		\item\label{EL_PRO_2} $\dot{E}_{ij}(x_{ij})-2C_{ij}(x_{ij},\dot{x}_{ij})$ is a skew-symmetric matrix.
		\item\label{EL_PRO_3} For any $\hat{y}_{ij}$, $\tilde{y}_{ij} \in \mathbb{R}^{r}$, $E_{ij}(x_{ij})\hat{y}_{ij}+C_{ij}(x_{ij},\dot{x}_{ij})\tilde{y}_{ij}+D_{ij}(x_{ij}) = \Upsilon_{ij}(x_{ij},\dot{x}_{ij},\hat{y}_{ij},\tilde{y}_{ij}){\mu_{ij}}$ holds, where $\Upsilon_{ij}(x_{ij},\dot{x}_{ij},\hat{y}_{ij},\tilde{y}_{ij}) \in \mathbb{R}^{r \times m}$ is a known regression matrix and ${\mu}_{ij} \in \mathbb{R}^{v}$ is an unknown constant vector.
		\item There exist positive constants $c_{ij}^{e}$, $c_{ij}^{E}$, $c_{ij}^{c}$, $c_{ij}^{C}$, and $c_{ij}^{D}$ such that $c_{ij}^{e} \leq \|E_{ij}(x_{ij})\| \leq c_{ij}^{E}$, $ \|D_{ij}(x_{ij})\| \leq c_{ij}^{D}$, and $c_{ij}^{c}\|\dot{x}_{ij}\|^2 \leq \|C_{ij}(x_{ij},\dot{x}_{ij})\dot{x}_{ij}\| \leq c_{ij}^{C}\|\dot{x}_{ij}\|^2$.
		Moreover, there exist positive constants $\tilde{c}_{ij}^{e}$, $\tilde{c}_{ij}^{E}$, $\tilde{c}_{ij}^{c}$, $\tilde{c}_{ij}^{C}$, and $\tilde{c}_{ij}^{D}$ such that $\tilde{c}_{ij}^{e} \leq \|\dot{E}_{ij}(x)\| \leq \tilde{c}_{ij}^{E}$, $\|\dot{G}_{ij}(x_{ij})\| \leq \tilde{c}_{ij}^{D}$, and $\tilde{c}_{ij}^{c}\|\dot{x}_{ij}\|^2 \leq \|\dot{C}_{ij}(x_{ij},\dot{x}_{ij})\dot{x}_{ij}\| \leq \tilde{c}_{ij}^{C}\|\dot{x}_{ij}\|^2$.
	\end{enumerate}
\end{lemma}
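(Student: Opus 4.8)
The plan is to derive all four items directly from the Lagrangian structure of the mechanical system, holding the agent index $(i,j)$ fixed throughout. Writing the kinetic energy of agent $(i,j)$ as $T_{ij}=\tfrac12\dot x_{ij}^{\top}E_{ij}(x_{ij})\dot x_{ij}$, item (1) is immediate: the physical kinetic energy is strictly positive whenever $\dot x_{ij}\neq\mathbf{0}$, so the quadratic form associated with $E_{ij}(x_{ij})$ is positive definite and hence $E_{ij}(x_{ij})\succ\mathbf{0}$ at every configuration.

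For item (2), I would adopt the standard Christoffel-symbol representation of the Coriolis matrix, $[C_{ij}]_{kl}=\sum_{s=1}^{r}\tfrac12\big(\partial_{x_s}[E_{ij}]_{kl}+\partial_{x_l}[E_{ij}]_{ks}-\partial_{x_k}[E_{ij}]_{ls}\big)\dot x_s$. Combining the chain rule $[\dot E_{ij}]_{kl}=\sum_s\partial_{x_s}[E_{ij}]_{kl}\,\dot x_s$ with the symmetry $[E_{ij}]_{kl}=[E_{ij}]_{lk}$, a direct entrywise computation gives $[\dot E_{ij}-2C_{ij}]_{kl}=\sum_s\big(\partial_{x_k}[E_{ij}]_{ls}-\partial_{x_l}[E_{ij}]_{ks}\big)\dot x_s$, and swapping $k\leftrightarrow l$ shows this equals $-[\dot E_{ij}-2C_{ij}]_{lk}$, which is exactly skew-symmetry. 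This is the passivity property and holds only for the Christoffel choice of $C_{ij}$, so that convention must be stated explicitly.

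For item (3), I would exploit the fact that the entries of $E_{ij}$, $C_{ij}$, and $D_{ij}$ are affine in the lumped physical parameters (link masses, moments of inertia, centre-of-mass offsets) collected into the unknown constant vector $\mu_{ij}$. Since $E_{ij}\hat y_{ij}+C_{ij}\tilde y_{ij}+D_{ij}$ is linear in these entries, it is linear in $\mu_{ij}$, so one factors it as $\Upsilon_{ij}(x_{ij},\dot x_{ij},\hat y_{ij},\tilde y_{ij})\mu_{ij}$, where the regression matrix $\Upsilon_{ij}$ depends only on the measured signals $x_{ij},\dot x_{ij},\hat y_{ij},\tilde y_{ij}$. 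Constructing $\Upsilon_{ij}$ is then a bookkeeping exercise of collecting the coefficient of each scalar parameter.

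Item (4) is the step requiring the most care, and it is the main obstacle. The lower bound $c^{e}_{ij}\le\|E_{ij}(x_{ij})\|$ follows at once from item (1), since positive definiteness bounds the smallest singular value away from zero; and the quadratic-in-velocity bounds on the centrifugal/Coriolis terms, $c^{c}_{ij}\|\dot x_{ij}\|^2\le\|C_{ij}\dot x_{ij}\|\le c^{C}_{ij}\|\dot x_{ij}\|^2$, are inherited directly from the bilinear-in-velocity construction of $C_{ij}$. The remaining uniform bounds on $\|E_{ij}\|$, $\|\dot E_{ij}\|$, $\|D_{ij}\|$, and $\|\dot C_{ij}\dot x_{ij}\|$, however, are \emph{not} generic consequences: they encode the physical restriction that the configuration enters the inertia entries and the gravity vector only through bounded functions (e.g., trigonometric terms for revolute joints). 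I would therefore obtain them by differentiating the bounded configuration-dependent entries along trajectories via the chain rule, being careful that the velocity-derivative bounds on $\dot E_{ij}$ and $\dot C_{ij}\dot x_{ij}$ are produced consistently with those assumed structural bounds, and I would flag clearly that item (4) invokes this mechanical-class restriction rather than following from the Lagrangian structure alone.
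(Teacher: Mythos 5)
The paper offers no proof of this lemma: it is imported verbatim from the cited reference on passivity-based control of Euler--Lagrange systems, so there is no in-paper argument to compare against. Your reconstruction is the standard textbook derivation and is essentially correct: positive definiteness of $E_{ij}$ from positivity of kinetic energy, skew-symmetry of $\dot E_{ij}-2C_{ij}$ via the Christoffel-symbol factorization (your entrywise computation is right, and you are correct that the property is tied to that particular choice of $C_{ij}$), and the linear-in-parameters property from affineness of the model in the lumped physical parameters. Your caveat on item (4) is also well placed: the uniform configuration-independent bounds are not consequences of the Lagrangian structure but a restriction to a mechanical subclass (bounded configuration dependence, e.g.\ revolute joints), and indeed some of the bounds as literally stated---notably the uniform positive lower bound $\tilde c^{e}_{ij}\le\|\dot E_{ij}\|$, which fails whenever $\dot x_{ij}=\mathbf{0}$---can only be read as standing assumptions on the admissible class rather than provable facts. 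The one overstatement is your claim that the uniform lower bound $c^{e}_{ij}\le\|E_{ij}(x_{ij})\|$ ``follows at once'' from item (1): pointwise positivity of the smallest eigenvalue does not give a bound uniform over all configurations without the same structural restriction (or a compactness argument), so that bound belongs with the assumed subclass as well.
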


The objective of this paper is to develop a robust distributed NE seeking algorithm in coalition games. Specifically, we design control inputs $u_{ij}$ such that the EL systems \eqref{EL_Sys} can converge to the NE of coalition game \eqref{ch4_NCG}.


\section{Distributed Robust NE Seeking Strategy}\label{Sec_DNES}
In this section, a robust distributed NE seeking algorithm is first proposed. Subsequently, the convergence analysis and the main theorem are presented.

\subsection{Algorithm Design}
Considering system \eqref{EL_Sys} of the $j$-th agent in the $i$-th coalition, the controller $u_{ij}$ is designed as follows
\begin{subequations}\label{EL_controller}
	\begin{align}
		u_{ij} =& \Upsilon_{ij}(x_{ij}, \dot{x}_{ij}, \ddot{\hat{x}}_{ij}, \dot{\hat{x}}_{ij})\hat{\mu}_{ij} -\gamma e_{ij} - \operatorname{sgn}(e_{ij})\hat{d}_{ij}, \label{EL_controller_a}\\
		\dot{\hat{\mu}}_{ij} =& -\Upsilon^{\top}_{ij}(x_{ij}, \dot{x}_{ij}, \ddot{\hat{x}}_{ij}, \dot{\hat{x}}_{ij}) e_{ij}, \label{EL_controller_b}\\
		\dot{\hat{d}}_{ij} =& e_{ij}^{\top} \operatorname{sgn}(e_{ij}), \label{EL_controller_c}\\
		e_{ij} =& \dot{x}_{ij} - \dot{\hat{x}}_{ij}, \label{EL_controllerd} \\
		\dot{\hat{x}}_{ij} =& \vartheta_{ij} - (x_{ij} - \eta_{ij}), \label{EL_controllere}
	\end{align}
\end{subequations}
where $\gamma$ is a positive constant gain.

For all $i \in \mathbb{N}$ and $j \in \mathbb{V}_{i}$, the auxiliary variables $\vartheta_{ij}$ and $\eta_{ij}$ are governed by
\begin{subequations}\label{AuxSys}
	\begin{align}
		\dot{\eta}_{ij} =& \vartheta_{ij}, \\
		\dot{\vartheta}_{ij} =& -\alpha \vartheta_{ij} - \big( R_{ij} \xi_{ij} + G_{ij}^{\top} \mathcal{P}_{+}(\lambda_{ij}) \notag\\
		&+ B_{ij}^{\top}\mathcal{P}_{+}(\omega_{ij} + B_{ij}\eta_{ij} - b_{ij})\big),
	\end{align}
\end{subequations}
where $\alpha$ is a positive constant gain, $\xi_{ij}=\operatorname{col}\{\xi_{ij1},\ldots,$ $\xi_{ijm_i}\}$, $R_{ij} = \operatorname{row}\{R_{ij1},\ldots, R_{ijm_i}\}$ with $R_{ijk} = I_r$ if $k = j$ and $R_{ijk} = \mathbf{0}$ otherwise.

For all $i \in \mathbb{N}$ and $j \in \mathbb{V}_{i}$, the Lagrange multipliers $\lambda_{ij}$ and $\omega_{ij}$ are updated by
\begin{subequations}\label{LagMulti}
	\begin{align}
		\dot{\omega}_{ij} =& B_{ij}\vartheta_{ij} - \omega_{ij} + \mathcal{P}_{+}(\omega_{ij} + B_{ij}\eta_{ij} - b_{ij}), \\
		\dot{\lambda}_{ij} =& -\lambda_{ij} + \mathcal{P}_{+}(\lambda_{ij}) + G_{ij} \eta_{ij} - g_{ij} - \sum_{l=1}^{m_i} a^{i}_{jl} (\rho_{ij} - \rho_{il}) \notag\\
		&- \sum_{l=1}^{m_i} a^{i}_{jl} (\mathcal{P}_{+}(\lambda_{ij}) - \mathcal{P}_{+}(\lambda_{il})), \\
		\dot{\rho}_{ij} =& \sum_{l=1}^{m_i} a^{i}_{jl} (\mathcal{P}_{+}(\lambda_{ij}) - \mathcal{P}_{+}(\lambda_{il})),
	\end{align}
\end{subequations}
where $\omega_{ij}$ is the Lagrange multiplier for the local constraint $\Omega_{ij}$, and $\mathcal{P}_{+}(\lambda_{ij})$ represents the Lagrange multiplier for the coupling constraint $\mathbb{X}_{i}$.

For all $i \in \mathbb{N}$, $j \in \mathbb{V}_{i}$, and $k \in \mathbb{V}_{i}$, $\xi_{ijk}$ is utilized to track the coalition gradient and is generated by
\begin{subequations}\label{TrackGrad}
	\begin{align}
		\dot{\xi}_{ijk} =& -\beta \Big( \xi_{ijk} + \sum_{l=1}^{m_i} a^{i}_{jl} (\xi_{ijk} - \xi_{ilk}) \notag \\
		&+ \sum_{l=1}^{m_i} a^{i}_{jl} (\zeta_{ijk} - \zeta_{ilk}) - \frac{\partial J_{ij}(\chi_{ij})}{\partial x_{ik}} \Big), \\
		\dot{\zeta}_{ijk} =& \beta \sum_{l=1}^{m_i} a^{i}_{jl} (\xi_{ijk} - \xi_{ilk}),
	\end{align}
\end{subequations}
where $\beta$ is a positive constant gain, $\chi_{ij} = \operatorname{col}\{s_{\sum_{l=1}^{i-1} m_l + j,1},$ $\ldots, s_{\sum_{l=1}^{i-1} m_l + j, n}\} \in \mathbb{R}^{rn}$ represents the estimate of all actions by the $j$-th agent of the $i$-th coalition, $s_{\sum_{l=1}^{i-1} m_l + j, l}$ denotes the estimate of the auxiliary variable $\eta_{ij}$ for the reindexed $l$-th agent, $l \in \bar{\mathbb{N}}$, and $\eta_{ij}$ is reindexed as $\varsigma_{\sum_{l=1}^{i-1} m_l + j}$, i.e., $\eta_{ij} \triangleq \varsigma_{\sum_{l=1}^{i-1} m_l + j}$.

For all $ p, l \in \bar{\mathbb{N}} $, $ s_{p,l} $ denotes the estimate of the $ p $-th agent on the auxiliary variable $ \varsigma_l $ of the $ l $-th agent after renumbering. Let $ s_{p,-p} = \operatorname{col}\{s_{p,1}, s_{p,2}, \ldots, s_{p,p-1}, s_{p,p+1}, \ldots, s_{p,n}\} \in \mathbb{R}^{r(n-1)} $, and it is governed by
\begin{align}\label{EstAct}
	\dot{s}_{p,-p} =& -\kappa \sum_{l=1}^{n} \bar{a}_{pl} (s_{p,-p} - s_{l,-p}), 
\end{align}
where $\kappa$ is a positive constant gain, $s_{l,-p} = \operatorname{col}\{s_{l,1},s_{l,2}, \ldots,$ $s_{l,p-1}, s_{l,p+1}, \ldots, s_{l,n}\}$, and $s_{l,l} = \varsigma_{l}$.

\begin{remark}
	The proposed algorithm comprises five interconnected parts: the control law \eqref{EL_controller}, the auxiliary system \eqref{AuxSys}, the Lagrange multiplier system \eqref{LagMulti}, the gradient estimate system \eqref{TrackGrad}, and the action estimate system \eqref{EstAct}.
\end{remark}

\subsection{Convergence Analysis}
For all $p\in\bar{\mathbb{N}}$, define
\begin{align*}
	\Theta_p =& \big[\mathbf{1}^{\top}_{p-1},1,\mathbf{1}^{\top}_{n-p}\big]\otimes I_{r} \in \mathbb{R}^{r \times nr},\\
	 \Xi_p=&\left[ {\begin{array}{*{20}{c}}
			{{I_{p - 1}}}&{\bf{0}}&{\bf{0}}\\
			{\bf{0}}&{\bf{0}}&{{I_p}}
	\end{array}} \right]\otimes I_{r} \in \mathbb{R}^{(n-1)r \times nr}.
\end{align*}

Define $\Xi = \operatorname{diag}\{\Xi_1,\ldots,\Xi_n\} \in \mathbb{R}^{n(n-1)r\times n^2r}$ and $\Theta = \operatorname{diag}\{\Theta_1,\ldots,\Theta_n\} \in \mathbb{R}^{nr\times n^2r}$. It can be verified by some algebraic manipulations that $\Theta\Theta^{\top} = I_{nr}$, $\Theta\Xi^{\top} = \mathbf{0}$, $\Xi \Xi^{\top} = I_{n(n-1)r\times n(n-1)r}$, $\Xi \Theta^{\top} = \mathbf{0}$ and $\Theta^{\top}\Theta + \Xi^{\top}\Xi = I_{n^2r}$. Then, one can get
\begin{equation}\label{ExtS_Relation}
	\eta = \Theta s, \quad z = \Xi s, \quad s = \Theta^{\top} \eta + \Xi^{\top} z,
\end{equation}
where $\eta  = \operatorname{col}\{\eta_{1}, \ldots, \eta_{N}\}$, $\eta_i = \operatorname{col}\{\eta_{i1}, \ldots, \eta_{im_i}\}$, $s=\operatorname{col}\{s_1,\ldots,s_n\}$, $s_{l}=(\varsigma_{l},s_{l,-l})$, $\forall l\in\bar{\mathbb{N}}$, and $z = \operatorname{col}\{s_{1,-1}, \ldots, s_{n,-n}\}$.

Let $\bar{z} = z - \Xi(\mathbf{1} \otimes\eta )$. It follows from \eqref{ExtS_Relation} that
\[
\Xi^{\top} \Xi (\mathbf{1}_n \otimes \eta) + R^{\top} \eta = \mathbf{1}_N \otimes \eta,\quad
s = \Xi^{\top} \bar{z} + \mathbf{1}_N \otimes \eta.
\]
Moreover, define $\mathbf{L} = \bar{\mathcal{L}} \otimes I_{nr}$. With \eqref{ExtS_Relation}, system \eqref{EstAct} can be reformulated as
\begin{equation*}
	\dot{z} = -\kappa \Xi \mathbf{L} (\Xi^{\top} z + \Theta^{\top} \eta).
\end{equation*}
By the definition of $\bar{z}$, one can obtain
\begin{equation}\label{s_Sys}
	\begin{aligned}
		\dot{\bar{z}} =& \dot{z} - \Xi(\mathbf{1}_n \otimes \dot{\eta}) \\
		=& -\kappa \Xi \mathbf{L} \Xi^{\top} \bar{z} - \Xi(\mathbf{1}_n \otimes \dot{\eta}) \\
		=& -\kappa \Xi \mathbf{L} \Xi^{\top} \bar{z} - \Xi(\mathbf{1}_n \otimes \vartheta ).
	\end{aligned}
\end{equation}

\begin{lemma}[\hspace{-0.001cm}\cite{Gadjov8354898}]\label{Lemma_Graph2}
	Under Assumption \ref{ch5_AssGraph}, $\Xi(\mathbf{L}+\mathbf{L}^{\top})\Xi^{\top}$ is a positive definite matrix.
\end{lemma}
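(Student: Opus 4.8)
The plan is to establish positive definiteness by collapsing this high-dimensional claim onto a statement about a single grounded Laplacian, exploiting the Kronecker factorization $\mathbf{L}=\bar{\mathcal{L}}\otimes I_{nr}$ together with the block-selection structure of $\Xi$. Since $\Xi(\mathbf{L}+\mathbf{L}^{\top})\Xi^{\top}$ is symmetric, it suffices to show $v^{\top}\Xi(\mathbf{L}+\mathbf{L}^{\top})\Xi^{\top}v>0$ for every $v\neq\mathbf{0}$. Setting $w=\Xi^{\top}v$ and using $\Xi\Xi^{\top}=I$ (so that $\Xi^{\top}$ is injective), this is equivalent to $w^{\top}(\mathbf{L}+\mathbf{L}^{\top})w>0$ for every nonzero $w$ in the range of $\Xi^{\top}$. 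First I would characterize this range: by the definition of $\Xi_p$, a vector $w=\operatorname{col}\{w_1,\ldots,w_n\}$ with $w_p\in\mathbb{R}^{nr}$ lies in the range of $\Xi^{\top}$ exactly when the $p$-th $r$-block of $w_p$ vanishes for each $p$, i.e.\ each agent's estimate of its \emph{own} component is suppressed.

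Next I would decouple the quadratic form across the inner coordinate. Writing $w_p=\operatorname{col}\{w_p^{(1)},\ldots,w_p^{(n)}\}$ with $w_p^{(k)}\in\mathbb{R}^r$, a direct expansion yields
\begin{equation*}
w^{\top}(\mathbf{L}+\mathbf{L}^{\top})w=\sum_{k=1}^{n}\tilde{w}^{(k)\top}\big((\bar{\mathcal{L}}+\bar{\mathcal{L}}^{\top})\otimes I_{r}\big)\tilde{w}^{(k)},
\end{equation*}
where $\tilde{w}^{(k)}=\operatorname{col}\{w_1^{(k)},\ldots,w_n^{(k)}\}$ and the constraint $w_p^{(p)}=\mathbf{0}$ forces the $k$-th block of $\tilde{w}^{(k)}$ to vanish. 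Each summand is thus the quadratic form of $(\bar{\mathcal{L}}+\bar{\mathcal{L}}^{\top})\otimes I_{r}$ restricted to the subspace in which the $k$-th block is grounded, and the total is strictly positive as soon as some $\tilde{w}^{(k)}\neq\mathbf{0}$, which is forced by $w\neq\mathbf{0}$. This reduces the lemma to the purely graph-theoretic claim that, for every index $k$, the principal submatrix of $\bar{\mathcal{L}}+\bar{\mathcal{L}}^{\top}$ obtained by deleting row and column $k$ is positive definite.

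Finally I would settle this grounded-Laplacian claim from connectivity. For an undirected connected graph the grounded Laplacian is a classical nonsingular $M$-matrix and is positive definite, which closes the argument at once. The main obstacle is the directedness of $\bar{\mathcal{G}}$ assumed in Assumption \ref{ch5_AssGraph}: $\bar{\mathcal{L}}$ is nonsymmetric, and $\bar{\mathcal{L}}+\bar{\mathcal{L}}^{\top}$ need not even be positive semidefinite when in- and out-degrees are unbalanced, so the naive symmetrization cannot be used. To overcome this I would follow \cite{Gadjov8354898} and symmetrize through the Perron weighting: strong connectivity supplies a strictly positive left null vector $r$ of $\bar{\mathcal{L}}$, and the balanced surrogate $R\bar{\mathcal{L}}+\bar{\mathcal{L}}^{\top}R$ with $R=\operatorname{diag}\{r_1,\ldots,r_n\}$ is the Laplacian of a connected mirror graph, hence positive semidefinite with kernel spanned by $\mathbf{1}_n$; its grounded versions are then positive definite, and absorbing these fixed positive weights into the Lyapunov metric transfers definiteness back to the restricted forms above. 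The delicate point throughout is to verify that the block-selection $\Xi$ annihilates precisely the consensus direction $\mathbf{1}_n$ that spans this kernel, so that no zero eigenvalue survives the restriction and strict positivity is retained.
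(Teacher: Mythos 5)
Your reduction in the first three steps is correct and is the right way to read this lemma: since $\Xi^{\top}$ is injective, positive definiteness of $\Xi(\mathbf{L}+\mathbf{L}^{\top})\Xi^{\top}$ is equivalent to positive definiteness, for every $k\in\bar{\mathbb{N}}$, of the principal submatrix of $\bar{\mathcal{L}}+\bar{\mathcal{L}}^{\top}$ obtained by deleting the $k$-th row and column. (The paper offers no proof to compare against --- the lemma is imported from the cited reference, which establishes it for an \emph{undirected} connected graph; there $\mathbf{L}=\mathbf{L}^{\top}$ and your steps plus the classical grounded-Laplacian fact finish immediately.) The gap is in your final step. Positive definiteness of the grounded versions of the Perron-weighted symmetrization $R\bar{\mathcal{L}}+\bar{\mathcal{L}}^{\top}R$ says nothing about the grounded versions of the \emph{unweighted} $\bar{\mathcal{L}}+\bar{\mathcal{L}}^{\top}$: these are different matrices, and ``absorbing the weights into the Lyapunov metric'' proves a different statement. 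The lemma as stated, and as used in \eqref{Ext_V_Z} where $V$ contains the unweighted term $\tfrac12\|\bar z\|^2$ and $\lambda_{\mathbf{L}}=\lambda_{\min}(\Xi(\mathbf{L}+\mathbf{L}^{\top})\Xi^{\top})$ must be positive, concerns the unweighted matrix, so the weighting trick cannot be invoked without also changing the Lyapunov function and the claim.

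This gap is not repairable, because the unweighted statement is false for general strongly connected digraphs. Take $r=1$ and the strongly connected digraph on the seven nodes $\{g,0,1,\dots,5\}$ with edges $g\to i$ and $i\to 0$ for $i=1,\dots,5$, together with $0\to g$. For the vector $x$ with $x_{g}=0$, $x_{0}=2$, and $x_{i}=1$ for $i=1,\dots,5$, one has $x^{\top}(\bar{\mathcal{L}}+\bar{\mathcal{L}}^{\top})x=2x^{\top}\mathcal{D}x-2x^{\top}\bar{\mathcal{A}}x=2\cdot 9-2\cdot 10=-2<0$, so the principal submatrix grounded at $g$ is indefinite; embedding this $x$ as $\tilde{w}^{(g)}$ in your decomposition (with $\tilde{w}^{(k)}=\mathbf{0}$ for $k\neq g$) produces a $w$ in the range of $\Xi^{\top}$ with $w^{\top}(\mathbf{L}+\mathbf{L}^{\top})w<0$. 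The source of the failure is exactly the in/out-degree imbalance you flagged: $x^{\top}(\bar{\mathcal{L}}+\bar{\mathcal{L}}^{\top})x=\sum_{(i,j)\in\bar{\mathcal{E}}}(x_i-x_j)^2+\sum_{i}(d_i^{\mathrm{out}}-d_i^{\mathrm{in}})x_i^2$, and a node with large in-degree surplus can make the second sum dominate even on the grounded subspace. The lemma therefore needs $\bar{\mathcal{G}}$ to be undirected (as in the cited source) or at least weight-balanced, in which case $\bar{\mathcal{L}}+\bar{\mathcal{L}}^{\top}$ is the Laplacian of the connected underlying undirected graph and your reduction completes the proof without any weighting.
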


Substituting controller \eqref{EL_controller} into system \eqref{EL_Sys}, the vector form of the agents in the $i$-th coalition is given by
\begin{subequations}\label{ELSys_i}
	\begin{align}
		E_{i}  (\mathbf{x}_{i}&) \ddot{\mathbf{x}}_{i} + C_{i}(\mathbf{x}_{i},\dot{\mathbf{x}}_{i}) \dot{\mathbf{x}}_{i} + D_{i}(\mathbf{x}_{i}) \dot{\mathbf{x}}_{i} \notag\\
		=& -\gamma e_{i} +\Upsilon_{i}(\mathbf{x}_{i}, \dot{\mathbf{x}}_{i}, \ddot{\hat{\mathbf{x}}}_{i}, \dot{\hat{\mathbf{x}}}_{i})\hat{\mu}_{i}- \operatorname{sgn}(\varepsilon_{i})\hat{d}_{i} +d_{i}, \\
		\dot{\hat{\mu}}_{i} =& -\Upsilon^{\top}_{i}(\mathbf{x}_{i}, \dot{\mathbf{x}}_{i}, \ddot{\hat{\mathbf{x}}}_{i}, \dot{\hat{\mathbf{x}}}_{i}) e_{i}, \\
		\dot{\hat{d}}_{i} =& \operatorname{sgn}(\varepsilon_{i}^{\top})e_i, \\
		e_{i} =& \dot{\mathbf{x}}_{i} - \dot{\hat{\mathbf{x}}}_{i}, \\
		\dot{\hat{\mathbf{x}}}_{i} =& \vartheta_{i} - (\mathbf{x}_{i} - \eta_{i}), 
	\end{align}
\end{subequations}
where $E_i(\mathbf{x}_{i})=\operatorname{diag}\{E_{i1}(x_{i1}),\ldots,E_{im_i}(x_{im_i})\}$, $e_i=\operatorname{col}\{e_{i1},\ldots,e_{im_i}\}$, $\hat{\mu}_i=\operatorname{col}\{\hat{\mu}_{i1},\ldots,\hat{\mu}_{im_i}\}$,  $\varepsilon_{i}=\operatorname{diag}\{e_{i1},\ldots,e_{im_i}\}$, $\hat{d}_i=\operatorname{col}\{\hat{d}_{i1},\ldots,\hat{d}_{im_i}\}$, $d_i=\operatorname{col}\{d_{i1},\ldots,d_{im_i}\}$, $\eta_i=\operatorname{col}\{\eta_{i1},\ldots,\eta_{im_i}\}$, $D_{i}(\mathbf{x}_{i})=\operatorname{diag}\{D_{i1}(x_{i1}),\ldots,D_{im_i}(x_{im_i})\}$, $\Upsilon_{i}(\mathbf{x}_{i}, \dot{\mathbf{x}}_{i}, \ddot{\hat{\mathbf{x}}}_{i}, \dot{\hat{\mathbf{x}}}_{i})=\operatorname{diag}\{\Upsilon_{i1}(x_{i1}, \dot{x}_{i1}, \ddot{\hat{x}}_{i1}, \dot{\hat{x}}_{i1}),\hspace{-0.08cm}\ldots\hspace{-0.08cm},\hspace{-0.08cm}\Upsilon_{im_i}(x_{im_i}, \dot{x}_{im_i}, \ddot{\hat{x}}_{im_i}, \dot{\hat{x}}_{im_i})\}$, and  $C_{i}(\mathbf{x}_i,\dot{\mathbf{x}}_{i})=\operatorname{diag}\{C_{i1}({x}_{i1},\dot{x}_{i1}),\ldots,C_{im_i}({x}_{im_i},\dot{x}_{im_i})\}$.

The vector form of the auxiliary system \eqref{AuxSys} for the $i$-th coalition is given by
	\begin{align}
		\dot{\eta}_i=&\vartheta_i,\notag \\
		\dot{\vartheta}_i=&\hspace{-0.1cm}-\hspace{-0.1cm}\alpha\vartheta_i\hspace{-0.1cm}-\hspace{-0.1cm}(R_i\xi_i\hspace{-0.1cm}+\hspace{-0.1cm}G_i^{\top}\mathcal{P}_{+}(\lambda_{i})\hspace{-0.1cm}+\hspace{-0.1cm}B_i^{\top}\mathcal{P}_{+}(\omega_i\hspace{-0.1cm}+\hspace{-0.1cm}B_i\eta_i\hspace{-0.05cm}-\hspace{-0.05cm}b_i)),\label{AuxSys_i}
	\end{align}
where $R_i=\operatorname{diag}\{R_{i1},\ldots,R_{im_i}\}$, $\xi_{i}=\operatorname{col}\{\xi_{i1},\ldots,\xi_{im_i}\}$, $\chi_i=\operatorname{col}\{\chi_{i1},\ldots,\chi_{im_i}\}$, $\lambda_i = \operatorname{col}\{\lambda_{i1},  \ldots, \lambda_{im_i}\}$, $G_i=\operatorname{diag}\{G_{i1},\ldots,G_{im_i}\}$, $B_i=\operatorname{diag}\{B_{i1},\ldots,B_{im_i}\}$, $\omega_i = \operatorname{col}\{\omega_{i1}, \ldots, \omega_{im_i}\}$, and $b_i = \operatorname{col}\{b_{i1}, \ldots, b_{im_i}\}$.

The vector form of the Lagrange multiplier update system \eqref{LagMulti} for the $i$-th coalition is given by
\begin{subequations}\label{LagMulti_i}
	\begin{align}
		\dot{\omega}_i=&B_i\vartheta_i-\omega_i+\mathcal{P}_{+}(\omega_i+B_i\eta_i-b_i),\\
		\dot{\lambda}_{i}=&-\lambda_{i}+\mathcal{P}_{+}(\lambda_i)+G_i\eta_i-g_i\notag \\
		&-(\mathcal{L}_i\otimes I_p)\mathcal{P}_{+}(\lambda_i)-(\mathcal{L}_i\otimes I_p)\rho_i, \label{LagMulti_ic}\\
		\dot{\rho}_i=&(\mathcal{L}_i\otimes I_p)\mathcal{P}_{+}(\lambda_i),
	\end{align}
\end{subequations}
where $g_i = \operatorname{col}\{g_{i1},\ldots,g_{im_i}\}$ and $\rho_i = \operatorname{col}\{\rho_{i1}, \ldots, \rho_{im_i}\}$.

The vector form of the gradient tracking system \eqref{TrackGrad} for the $i$-th coalition is given by
\begin{subequations}\label{TrackGrad_i}
	\begin{align}
		\dot{\xi}_i=&\hspace{-0.05cm}-\hspace{-0.05cm}\beta\big(\xi_i\hspace{-0.05cm}+\hspace{-0.05cm}(\mathcal{L}_i\hspace{-0.05cm}\otimes\hspace{-0.05cm} I_{rm_i})\xi_i\hspace{-0.05cm}+\hspace{-0.05cm}(\mathcal{L}_i\hspace{-0.05cm}\otimes\hspace{-0.05cm} I_{rm_i})\zeta_i\hspace{-0.05cm}-\hspace{-0.05cm}\Gamma_i(\chi_i)\big), \label{coal_if}\\
		\dot{\zeta}_i=&\beta(\mathcal{L}_i\otimes I_{rm_i})\xi_i,
	\end{align}
\end{subequations}
where $\Gamma_i(\chi_i) = \operatorname{col}\big\{\frac{\partial J_{i1}(\chi_{i1})}{\partial y_{i1}}, \frac{\partial J_{i1}(\chi_{i1})}{\partial y_{i2}},\ldots, \frac{\partial J_{i1}(\chi_{i1})}{\partial y_{im_i}}, $ $\frac{\partial J_{i2}(\chi_{i2})}{\partial y_{i1}}, \ldots, \frac{\partial J_{im_i}(\chi_{im_i})}{\partial y_{im_i}}\big\}$.

For the convenience of equilibrium analysis, define $E(\mathbf{x})=\operatorname{diag}\{E_{1}(\mathbf{x}_{1}),\ldots,E_{N}(\mathbf{x}_{N})\}$, $\Upsilon(\mathbf{x}, \dot{\mathbf{x}}, \ddot{\hat{\mathbf{x}}}, \dot{\hat{\mathbf{x}}})=\operatorname{diag}\{\Upsilon_{1}(x_{1}, \dot{x}_{1}, \ddot{\hat{x}}_{1}, \dot{\hat{x}}_{1}),\hspace{-0.02cm}\ldots,\hspace{-0.08cm}\Upsilon_{N}(x_{N}, \dot{x}_{N}, \ddot{\hat{x}}_{N}, \dot{\hat{x}}_{N})\}$, $e=\operatorname{col}\{e_{1},\ldots,e_{N}\}$, $D(\mathbf{x})=\operatorname{diag}\{D_{1}(\mathbf{x}_{1}),\ldots,D_{N}(\mathbf{x}_{N})\}$,  $C(\dot{\mathbf{x}})=\operatorname{diag}\{C_{1}(\dot{x}_{1}),\ldots,C_{N}(\dot{x}_{N})\}$, $\hat{\mu}=\operatorname{col}\{\hat{\mu}_{1},\ldots,\hat{\mu}_{N}\}$, $\varepsilon=\operatorname{diag}\{e_{1},\ldots,e_{N}\}$, $\hat{d}=\operatorname{col}\{\hat{d}_{1},\ldots,\hat{d}_{N}\}$, $d=\operatorname{col}\{d_{1},\ldots,d_{N}\}$, and $\eta=\operatorname{col}\{\eta_{1},\ldots,\eta_{N}\}$. The uncertain closed-loop EL system \eqref{ELSys_i} for all coalitions takes
\begin{subequations}\label{HighOrder_Co}
	\begin{align}
		&E(\mathbf{x}) \ddot{\mathbf{x}} + C(\dot{\mathbf{x}}) \dot{\mathbf{x}} + D(\mathbf{x}) \mathbf{x} \notag\\
		=& -\gamma e +\Upsilon(\mathbf{x}, \dot{\mathbf{x}}, \ddot{\hat{\mathbf{x}}}, \dot{\hat{\mathbf{x}}})\hat{\mu}- \operatorname{sgn}(\varepsilon)\hat{d} +\Psi(\phi)d, \\
		\dot{\hat{\mu}} =& -\Upsilon^{\top}(\mathbf{x}, \dot{\mathbf{x}}, \ddot{\hat{\mathbf{x}}}, \dot{\hat{\mathbf{x}}}) e, \\
		\dot{\hat{d}} =& \operatorname{sgn}(\varepsilon^{\top})e_i, \\
		e =& \dot{\mathbf{x}} - \dot{\hat{\mathbf{x}}}, \\
		\dot{\hat{\mathbf{x}}} =& \vartheta - (\mathbf{x} - \eta).
	\end{align}
\end{subequations}

Define $R=\operatorname{diag}\{R_{1}, \ldots, R_{N}\}$, $\xi=\operatorname{col}\{\xi_{1}, \ldots, \xi_{N}\}$, $\chi = \operatorname{col}\{\chi_{1}, \ldots, \chi_{N}\}$, $\lambda = \operatorname{col}\{\lambda_{1},  \ldots, \lambda_{N}\}$, $G = \operatorname{diag}\{G_{1}, \ldots, G_{N}\}$, $\omega = \operatorname{col}\{\omega_{1},  \ldots, \omega_{N}\}$, $B=\operatorname{diag}\{B_1,\ldots,B_N\}$, and $b = \operatorname{col}\{b_{1}, \ldots, b_{N}\}$. The auxiliary system \eqref{AuxSys_i} for all coalitions is given by
	\begin{align}
		\dot{\eta} =& \vartheta, \notag\\
		\dot{\vartheta} =& \hspace{-0.07cm}-\hspace{-0.07cm} \alpha\vartheta \hspace{-0.07cm}-\hspace{-0.07cm} \big(R \xi (\chi ) \hspace{-0.07cm}+\hspace{-0.07cm} G^{\top} \mathcal{P}_{+}(\lambda ) \hspace{-0.07cm}+\hspace{-0.07cm} B^{\top}\mathcal{P}_{+}(\omega \hspace{-0.07cm}+\hspace{-0.07cm} B\eta \hspace{-0.07cm}-\hspace{-0.07cm} b )\big).\label{AuxSys_Co}
	\end{align}

The Lagrange multiplier update system \eqref{LagMulti_i} for all coalitions is represented by
\begin{subequations}\label{LagMulti_Co}
	\begin{align}
		\dot{\omega} =& B\vartheta -\omega + \mathcal{P}_{+}(\omega + B \eta - b) , \\
		\dot{\lambda} =& -\lambda + \mathcal{P}_{+}(\lambda) + G \eta - g - (\mathcal{L} \otimes I_p) \mathcal{P}_{+}(\lambda) - (\mathcal{L} \otimes I_p) \rho, \\
		\dot{\rho} =& (\mathcal{L} \otimes I_p) \mathcal{P}_{+}(\lambda),
	\end{align}
\end{subequations}
where $g = \operatorname{col}\{g_1, \ldots, g_N\}$, $\rho = \operatorname{col}\{\rho_1, \ldots, \rho_N\}$, and $\mathcal{L} = \operatorname{diag}\{\mathcal{L}_1, \ldots, \mathcal{L}_N\}$.

The gradient tracking system \eqref{TrackGrad_i} for all coalitions is specified by
\begin{subequations}\label{TrackGrad_Co}
	\begin{align}
		\dot{\xi} =& -\beta \big( \xi + (\mathcal{L} \otimes I_{rN}) \xi + (\mathcal{L} \otimes I_{rN}) \zeta - \Gamma(\chi) \big), \\
		\dot{\zeta} =& \beta (\mathcal{L} \otimes I_{rN}) \xi,
	\end{align}
\end{subequations}
where $\Gamma(\chi) = \operatorname{col}\left\{\Gamma_1(\chi_1), \ldots, \Gamma_N(\chi_N)\right\}$.

Based on the aforementioned analysis, one can get the following lemma, which formally characterizes the relationship between the NE and the equilibrium of the closed-loop network system.
\begin{lemma}\label{ch5_equi}
	Suppose that Assumptions \ref{Ass_StronglyConvex}-\ref{ch5_AssGraph} hold. If $(\bar{z}^{*},\mathbf{x}^{*},$ $\eta^{*},\mu^{*},\hat{d}^{*},\hat{\mathbf{x}}^{*}, \vartheta^{*}, \omega^{*}, \lambda^{*}, \rho^{*}, \xi^{*}, \zeta^*)$ is an equilibrium of systems \eqref{s_Sys}, \eqref{HighOrder_Co}, \eqref{AuxSys_Co}, \eqref{LagMulti_Co}, and \eqref{TrackGrad_Co}, then $\mathbf{x}^*$ is the NE of coalition game $\Game$.
\end{lemma}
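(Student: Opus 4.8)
The plan is to show that forcing every time-derivative in \eqref{s_Sys}, \eqref{HighOrder_Co}, \eqref{AuxSys_Co}, \eqref{LagMulti_Co}, and \eqref{TrackGrad_Co} to vanish drives the starred configuration to satisfy the KKT system \eqref{constraint_ija}--\eqref{constraint_ijf}; the claim then follows directly from the KKT characterization of the NE established in the preceding lemma. I would exploit the cascade structure of the interconnection. From $\dot{\eta}=\vartheta$ one immediately reads off $\vartheta^{*}=\mathbf{0}$. Substituting $\vartheta^{*}=\mathbf{0}$ into the equilibrium form of \eqref{s_Sys} gives $\Xi\mathbf{L}\Xi^{\top}\bar{z}^{*}=\mathbf{0}$; pre-multiplying by $\bar{z}^{*\top}$ and invoking Lemma \ref{Lemma_Graph2} (positive definiteness of $\Xi(\mathbf{L}+\mathbf{L}^{\top})\Xi^{\top}$) yields $\bar{z}^{*}=\mathbf{0}$, so the action-estimate layer has reached consensus and $\chi_{ij}^{*}=\eta^{*}$ for every $(i,j)$. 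In parallel, imposing $\dot{\hat{\mathbf{x}}}=\mathbf{0}$ in \eqref{HighOrder_Co} together with $\vartheta^{*}=\mathbf{0}$ forces $\mathbf{x}^{*}=\eta^{*}$ and hence $e^{*}=\mathbf{0}$, so the EL states settle at $\eta^{*}$ with zero velocity and the adaptive/sign terms need not be analyzed further.

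Next I would resolve the gradient-tracking layer. Setting $\dot{\zeta}=\mathbf{0}$ forces $(\mathcal{L}_{i}\otimes I)\xi_{i}^{*}=\mathbf{0}$, which by connectivity of $\mathcal{G}_{i}$ makes $\xi_{ij}^{*}$ identical across $j$; substituting this back into $\dot{\xi}=\mathbf{0}$ and left-multiplying by $\mathbf{1}_{m_{i}}^{\top}\otimes I$ annihilates the Laplacian terms and identifies the common value with the coalition gradient $\nabla_{\mathbf{x}_{i}}J_{i}(\eta^{*})$, evaluated at the true argument thanks to $\chi_{ij}^{*}=\eta^{*}$. Consequently $R\xi^{*}$ reproduces the pseudogradient $F(\mathbf{x}^{*})$ componentwise. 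Combining this with $\dot{\vartheta}=\mathbf{0}$ in \eqref{AuxSys_Co} (and $\vartheta^{*}=\mathbf{0}$) yields exactly the stationarity relation \eqref{constraint_ija} once I label $\omega_{ij}^{\star}:=\mathcal{P}_{+}(\omega_{ij}^{*}+B_{ij}x_{ij}^{*}-b_{ij})$ and $\lambda_{ij}^{\star}:=\lambda_{ij}^{*}$.

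The local-constraint conditions follow from $\dot{\omega}=\mathbf{0}$: with $\vartheta^{*}=\mathbf{0}$ this collapses to the fixed-point identity $\omega^{*}=\mathcal{P}_{+}(\omega^{*}+B\eta^{*}-b)$, which is the natural-residual form of a complementarity system and is equivalent, componentwise via the definition of $\mathcal{P}_{+}(\cdot)$, to primal feasibility \eqref{constraint_ijc} and complementarity \eqref{constraint_ijf} with multiplier $\omega_{ij}^{\star}=\omega_{ij}^{*}$.

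The hard part will be the coupling constraint, because here feasibility and complementarity must be recovered from distributed primal--dual--consensus dynamics rather than from a centralized multiplier. I would first use $\dot{\rho}=\mathbf{0}$, i.e. $(\mathcal{L}_{i}\otimes I_{p})\mathcal{P}_{+}(\lambda_{i}^{*})=\mathbf{0}$, and connectivity of $\mathcal{G}_{i}$ to conclude that $\mathcal{P}_{+}(\lambda_{ij}^{*})$ is common across the coalition. Then, summing the equilibrium of \eqref{LagMulti_ic} over $j$ and using $\mathbf{1}_{m_{i}}^{\top}\mathcal{L}_{i}=\mathbf{0}$ to cancel the consensus terms, I would obtain $\sum_{j}G_{ij}x_{ij}^{*}-\sum_{j}g_{ij}=\sum_{j}\bigl(\lambda_{ij}^{*}-\mathcal{P}_{+}(\lambda_{ij}^{*})\bigr)$; since $\mathcal{P}_{+}(\lambda_{ij}^{*})\ge\lambda_{ij}^{*}$ componentwise, the right-hand side is $\le\mathbf{0}$, giving \eqref{constraint_ijb}. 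Complementarity \eqref{constraint_ije} then emerges from a componentwise case split on the common multiplier value, which is either strictly positive (forcing the aggregate residual to vanish) or zero. The decisive obstacle is precisely this coupling step, as it hinges on correctly interlocking the $\rho$-driven multiplier consensus, the coalition-wise summation that eliminates the Laplacian coupling, and the projection identities. Having verified all five KKT relations, the earlier KKT lemma yields that $\mathbf{x}^{*}$ is the NE of $\Game$.
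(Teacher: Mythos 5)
Your proposal is correct and follows essentially the same route as the paper's proof: set all derivatives to zero, deduce $\vartheta^{*}=\mathbf{0}$, $\bar{z}^{*}=\mathbf{0}$, and $\mathbf{x}^{*}=\eta^{*}$, recover consensus and the coalition gradient from the tracking layer by left-multiplying with $\mathbf{1}_{m_i}^{\top}\otimes I$, read off local feasibility and complementarity from the projection fixed point $\omega^{*}=\mathcal{P}_{+}(\omega^{*}+B\eta^{*}-b)$, and handle the coupling constraint via multiplier consensus from $\dot{\rho}=\mathbf{0}$, coalition-wise summation, the inequality $\mathcal{P}_{+}(\lambda^{*})\geq\lambda^{*}$, and a case split for complementary slackness, before invoking the KKT lemma. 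The only cosmetic difference is that you make explicit the quadratic-form argument (via Lemma \ref{Lemma_Graph2}) for $\bar{z}^{*}=\mathbf{0}$, which the paper leaves implicit.
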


\begin{proof}
	For all $i \in \mathbb{N}$, in virtue of \eqref{AuxSys_i}, \eqref{LagMulti_i}, and \eqref{TrackGrad_i}, one can verify that
	\begin{subequations}
		\begin{align}
			\mathbf{0} =& \vartheta_i^{*}, \label{Equ_a}\\
			\mathbf{0} =& -\alpha\vartheta_i^{*} - \big(R_i \xi_i^{*}(\chi_i^{*}) + G_i^{\top} \mathcal{P}_{+}(\lambda_i^{*}) \notag \\
			&+ B_i^{\top}\mathcal{P}_{+}(\omega_i^{*} + B_i\eta_i^{*} - b_i)\big), \\
			\mathbf{0} =& -\omega_i^{*} + \mathcal{P}_{+}(\omega_i^{*} + B_i\eta_i^{*} - b_i) + B_i \vartheta_i^{*}, \label{Equ_c} \\
			\mathbf{0} =& -\lambda_i^{*} + \mathcal{P}_{+}(\lambda_i^{*}) + G_i \eta_i^{*} - g_i \notag\\
			&- (\mathcal{L}_i \otimes I_p) \mathcal{P}_{+}(\lambda_i^{*}) - (\mathcal{L}_i \otimes I_p) \rho_i^{*}, \label{Equ_e} \\
			\mathbf{0} =& (\mathcal{L}_i \otimes I_p) \mathcal{P}_{+}(\lambda_i^{*}), \label{Equ_f} \\
			\mathbf{0} =& \hspace{-0.08cm}- \hspace{-0.08cm} \beta \big( \xi_i^{*} \hspace{-0.08cm} + \hspace{-0.08cm} (\mathcal{L}_i \otimes I_{r m_i}) \xi_i^{*} \hspace{-0.08cm} + \hspace{-0.08cm} (\mathcal{L}_i \otimes I_{r m_i}) \zeta_i^{*} \hspace{-0.08cm} - \hspace{-0.08cm} \Gamma_i^{*}(\chi_i^{*}) \big), \label{Equ_g} \\
			\mathbf{0} =& \beta (\mathcal{L}_i \otimes I_{r m_i}) \xi_i^{*}. \label{Equ_h}
		\end{align}
	\end{subequations}
With \eqref{s_Sys}, one can further get that
	\begin{equation}\label{Equ_1}
		\mathbf{0} = -\kappa \Xi \mathbf{L} \Xi^{\top} \bar{z}^{*} - \Xi(\mathbf{1}_n \otimes \vartheta^{*} ).
	\end{equation}
Since $\vartheta_i^* = \mathbf{0}$, $\forall i \in \mathbb{N}$, it follows from \eqref{Equ_1} and $s = \Xi^{\top} \bar{z} + \mathbf{1}_n \otimes \eta$ that $\bar{z}^* = \mathbf{0}$ and $s^* = \mathbf{1}_n \otimes \eta^*$. 

For the Lagrange multipliers associated with the local constraints, substituting \eqref{Equ_a} into \eqref{Equ_c} yields
	\begin{align}\label{asa_conp1}
		\omega_i^{*} =& \mathcal{P}_{+}(\omega_i^{*} + B_i\eta_i^{*} - b_i).
	\end{align}
For all $j \in \mathbb{V}_i$, $B_{ij}\eta_{ij}^{*} \leq b_{ij}$ ensures compliance with condition \eqref{constraint_ijc}. Conversely, if $\eta_{ij}^{*} > b_{ij}$, the scenario directly violates \eqref{asa_conp1}, establishing $B_{ij}\eta_{ij}^* \leq b_{ij}$ as the necessary correspondence to \eqref{constraint_ijc}. Furthermore, $B_{ij}\eta_{ij}^{*} < b_{ij}$ implies $\omega_{ij} = \mathbf{0}$ through \eqref{constraint_ijc}, whereas $B_{ij}\eta_{ij}^* = b_{ij}$ necessitates $\omega_{ij} > \mathbf{0}$. Therefore, it is straightforward to verify that $\omega_i^{*}$ satisfies the complementary slackness condition \eqref{constraint_ijf}.

	Moreover, it is easy to see that \eqref{Equ_f} holds if and only if $\mathcal{P}_{+}(\lambda^{*}_{i1}) = \mathcal{P}_{+}(\lambda^{*}_{i2}) = \ldots = \mathcal{P}_{+}(\lambda^{*}_{im_i})$.  Left-multiplying both sides of \eqref{Equ_e} by $(\mathbf{1}_{m_i} \otimes I_p)^{\top}$ yields
	\begin{equation}\label{asa_con1}
		\sum_{j=1}^{m_i} \lambda_{ij}^{*} - \sum_{j=1}^{m_i} \mathcal{P}_{+}(\lambda_{ij}^{*}) = \sum_{j=1}^{m_i} G_{ij} \eta_{ij}^{*} - \sum_{j=1}^{m_i} g_{ij}.
	\end{equation}
	Since $\mathcal{P}_{+}(\lambda_{ij}^*) \geq \lambda_{ij}^{*}$, $\forall j \in \mathbb{V}_{i}$, it is easy to verify that
	\begin{equation}
		\sum_{j=1}^{m_i} G_{ij} \eta_{ij}^{*} - \sum_{j=1}^{m_i} g_{ij} \leq \mathbf{0},
	\end{equation}
	which corresponds to \eqref{constraint_ijb}. If $\mathcal{P}_{+}(\lambda_{ij}^*) = \mathbf{0}$, $\forall j \in \mathbb{V}_{i}$, it follows from \eqref{asa_con1} that $\sum_{j=1}^{m_i} G_{ij} \eta_{ij}^{*} - \sum_{j=1}^{m_i} g_{ij} = \mathbf{0}$. Furthermore, if $\mathcal{P}_{+}(\lambda_{ij}^*) > \mathbf{0}$, then $\lambda_{ij}^* - \mathcal{P}_{+}(\lambda_{ij}^{*}) = \mathbf{0}$, which implies $\sum_{j=1}^{m_i} G_{ij} \eta_{ij}^{*} - \sum_{j=1}^{m_i} g_{ij} = \mathbf{0}$. Therefore, $\mathcal{P}_{+}(\lambda_{ij}^*)$ satisfies condition \eqref{constraint_ije}.

	In virtue of \eqref{Equ_h}, one can verify that $\xi_{i1k}^* = \xi_{i2k}^* = \ldots = \xi_{im_ik}^*$, $\forall k \in \mathbb{V}_i$. Left-multiplying both sides of \eqref{Equ_g} by $\mathbf{1}_{m_i}^{\top} \otimes I_{m_ir}$ gives
	\begin{equation*}
		\sum_{l=1}^{m_i} \xi^*_{ilk} = m_i \frac{\partial J_i(\eta^*)}{\partial x_{ik}}.
	\end{equation*}
	Hence, it is easy to obtain that $\xi_{ijk}^* = \frac{\partial J_i(\eta^*)}{\partial x_{ik}}$, $\forall j \in \mathbb{V}_i, i \in \mathbb{N}$. Therefore, one can further get
	\begin{equation*}
		\frac{\partial J_i(\eta^*)}{\partial x_{ij}} + G_i^{\top}\mathcal{P}_{+}(\lambda_{i}^{*}) + B_i^{\top}\mathcal{P}_{+}(\omega_i^{*} + B_i\eta_i^{*} - b_i) = \mathbf{0},
	\end{equation*}
	which corresponds to \eqref{constraint_ija}. Consequently, the equilibrium $\eta^*$ is the NE $\mathbf{x}^{\star}$.
	
	Considering the closed-loop system \eqref{ELSys_i}, one can obtain $\mathbf{0} = \vartheta_{i}^{*} - (\mathbf{x}_{i}^{*} - \eta_{i}^{*})$. By using \eqref{Equ_a}, one can further get $\mathbf{x}^{*}=\eta^{*}$. Therefore, $x^{*}$ coincides with the NE $\mathbf{x}^{\star}$.
\end{proof}

Define a pair of orthogonal matrices $[\frac{\mathbf{1}_{m_i}}{\sqrt{m_i}}, ~U_i]$ such that $\frac{\mathbf{1}^{\top}_{m_i}}{\sqrt{m_i}}U_i = \mathbf{0}$, $U_i^{\top}U_i = I_{m_i-1}$, and $U_iU_i^{\top} = I_{m_i} - \frac{1}{m_i}\mathbf{1}_{m_i}\mathbf{1}_{m_i}^{\top}$. Decompose $\zeta_i$ as $\zeta_i = \operatorname{col}\{\zeta_i^1, \zeta_i^2\} = ([\frac{\mathbf{1}_{m_i}}{\sqrt{m_i}}, ~U_i]^{\top} \otimes I_{rm_i}) \zeta_i$. Thus, it is easy to see that
\begin{subequations}
	\begin{align}
		\dot{\zeta}_i^1 =& \mathbf{0}, \\
		\dot{\zeta}_i^2 =& (U_{i}^{\top}\mathcal{L}_i \otimes I_{rm_i}) \xi_i. \label{zeta_i_2}
	\end{align}
\end{subequations}
In light of Assumption \ref{ch5_AssGraph}, the matrix
\[
M_i = \left[ {\begin{array}{*{20}{c}}
		{-I_{rm_i^2} - \mathcal{L}_i \otimes I_{rm_i}} & -\mathcal{L}_i U_i \otimes I_{rm_i} \\
		{U_{i}^{\top} \mathcal{L}_i \otimes I_{rm_i}} & \mathbf{0}
\end{array}} \right]
\]
is Hurwitz \cite{moreau2004}. Therefore, there exist positive definite matrices $P_i$ and $Q_i$ such that
\[
P_i M_i + M_i^{\top} P_i = -Q_i.
\]

Let $v_i = \operatorname{col}\{ \xi_i, \zeta_i^{2} \}$. By the definition of $M_i$, it follows from \eqref{coal_if} and  \eqref{zeta_i_2} that
\begin{align}
	\dot{v}_i =& \beta M_i v_i + \beta \left[ {\begin{array}{*{20}{c}}
			{\Gamma_i(\chi_i)} \\
			\mathbf{0}
	\end{array}} \right].
\end{align}
Furthermore, let $\xi_i^{q}$ and $\zeta_i^{2q}$ denote the quasi-steady states of the gradient tracking system \eqref{coal_if} and \eqref{zeta_i_2}, respectively, which satisfy
\[
\xi_i^q = \frac{\mathbf{1}_{m_i} \mathbf{1}_{m_i}^{\top} \otimes I_{rm_i}}{m_i} \Gamma_i(\chi_i),
\]
and
\begin{equation*}
	\left[ {\begin{array}{*{20}{c}}
			{-I_{rm_i^2} - \mathcal{L}_i \otimes I_{rm_i}} & -\mathcal{L}_i U_i \otimes I_{rm_i} \\
			{U_i^{\top} \mathcal{L}_i \otimes I_{rm_i}} & \mathbf{0}
	\end{array}} \right] \hspace{-0.1cm} v_i^q + \left[ {\begin{array}{*{20}{c}}
			{\Gamma_i(\chi_i)} \\
			\mathbf{0}
	\end{array}} \right] \hspace{-0.1cm}= \mathbf{0}
\end{equation*}
with $v_i^q = \operatorname{col}\{ \xi_i^q, \zeta_i^{2q} \}$.

Define $\bar{v}_i^q = v_i - v_i^{q*}$, where $v_i^{q*}$ satisfies $M_i v_i^{q*} + H_i(\chi_i^{*}) = \mathbf{0}$ with $H_i(\chi_i^{*}) = \operatorname{col}\{ \Gamma_i(\chi_i^{*}),\mathbf{0} \}$. Therefore, it is easy to see that
\begin{equation}\label{ch5_dot_v_i_q}
	\dot{\bar{v}}_i^q = \beta M_i \bar{v}_i^q + H_i(\chi_i) - H_i(\chi_i^*).
\end{equation}

\begin{theorem}\label{lemma_axuEqu}
	Suppose that Assumptions \ref{Ass_StronglyConvex}-\ref{ch5_AssGraph} hold. There exist sufficiently large parameters $\alpha$, $\beta$, $\gamma$, and $\kappa$ for algorithm \eqref{EL_controller}, \eqref{AuxSys}, \eqref{LagMulti}, \eqref{TrackGrad}, and \eqref{EstAct} such that $\lim_{t \to +\infty} \|\mathbf{x}(t) - \mathbf{x}^{\star}\| = 0$.
\end{theorem}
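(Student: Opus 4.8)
The plan is to certify convergence by a cascade-plus-composite-Lyapunov argument that first decouples the physical (Euler--Lagrange) layer from the reference (NE-seeking) layer, and then uses the freedom in the gains $\alpha,\beta,\gamma,\kappa$ to dominate every remaining interconnection. Throughout, the target point is the equilibrium of \eqref{s_Sys}, \eqref{HighOrder_Co}, \eqref{AuxSys_Co}, \eqref{LagMulti_Co}, \eqref{TrackGrad_Co}, which by Lemma \ref{ch5_equi} coincides with the NE $\mathbf{x}^{\star}=\eta^{*}$, so convergence to the equilibrium is convergence to $\mathbf{x}^{\star}$.

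First I would handle the physical layer. Introducing the parameter errors $\tilde{\mu}=\hat{\mu}-\mu$ (with $\mu$ the stacked true inertial parameters) and $\check{d}=\hat{d}-\bar{d}$ (with $\bar{d}$ the stacked true disturbance bounds), the regression identity in Lemma \ref{EL_property}\ref{EL_PRO_3} reduces the closed loop \eqref{HighOrder_Co} to $E(\mathbf{x})\dot{e}=\Upsilon\tilde{\mu}-\gamma e-\operatorname{sgn}(\varepsilon)\hat{d}+d-C(\dot{\mathbf{x}})e$. Taking $V_{1}=\tfrac12 e^{\top}E(\mathbf{x})e+\tfrac12\|\tilde{\mu}\|^{2}+\tfrac12\|\check{d}\|^{2}$, the skew-symmetry in Lemma \ref{EL_property}\ref{EL_PRO_2} annihilates the inertia/Coriolis cross term, the parameter-adaptation law in \eqref{HighOrder_Co} cancels $e^{\top}\Upsilon\tilde{\mu}$, and the sign-function bound-adaptation law makes $e^{\top}d-e^{\top}\operatorname{sgn}(\varepsilon)\hat{d}+\check{d}^{\top}\dot{\hat{d}}\le 0$, yielding $\dot{V}_{1}\le-\gamma\|e\|^{2}$. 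This certifies boundedness of the adaptive states and $e\in\mathcal{L}_{2}\cap\mathcal{L}_{\infty}$. Writing $\chi=\mathbf{x}-\eta$, the filter $\dot{\hat{\mathbf{x}}}=\vartheta-(\mathbf{x}-\eta)$ together with $e=\dot{\mathbf{x}}-\dot{\hat{\mathbf{x}}}$ gives $\dot{\chi}=e-\chi$, so $e\to 0$ forces $\mathbf{x}-\eta\to 0$ through a stable first-order filter.

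Next I would treat the reference layer as a distributed NE-seeking flow and apply a two-time-scale decomposition enabled by large $\beta$ and $\kappa$. For the gradient tracker, \eqref{ch5_dot_v_i_q} reads $\dot{\bar{v}}_{i}^{q}=\beta M_{i}\bar{v}_{i}^{q}+H_{i}(\chi_{i})-H_{i}(\chi_{i}^{*})$ with $M_{i}$ Hurwitz; using $P_{i}M_{i}+M_{i}^{\top}P_{i}=-Q_{i}$ and $V_{2}=\sum_{i}(\bar{v}_{i}^{q})^{\top}P_{i}\bar{v}_{i}^{q}$, Assumption \ref{Ass_Lipschitz} makes the forcing $\ell$-Lipschitz in $\chi_{i}$, so large $\beta$ drives $\xi$ onto the true coalition pseudogradient. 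For the action estimator, Lemma \ref{Lemma_Graph2} applied to $V_{3}=\tfrac12\|\bar{z}\|^{2}$ with \eqref{s_Sys} yields $\dot{V}_{3}\le-\tfrac{\kappa}{2}\lambda_{\min}\!\big(\Xi(\mathbf{L}+\mathbf{L}^{\top})\Xi^{\top}\big)\|\bar{z}\|^{2}+\|\bar{z}\|\,\|\vartheta\|$, so large $\kappa$ drives the estimates to consensus, $s\to\mathbf{1}_{n}\otimes\eta$.

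On the resulting slow manifold, the primal--dual dynamics in $(\eta,\vartheta,\omega,\lambda,\rho)$ form a damped saddle-point flow for the augmented Lagrangian of \eqref{ch4_NCG}. I would build $V_{4}$ from a heavy-ball primal term in $(\tilde{\eta},\vartheta)$ plus the projection-gap functions of Lemma \ref{Projection_property} for the multipliers $\omega,\lambda$ and a quadratic consensus term in $\rho$, so that strong convexity (Assumption \ref{Ass_StronglyConvex}) supplies a $-\hbar\|\tilde{\eta}\|^{2}$ margin while the monotonicity inequalities of Lemma \ref{Projection_property} cancel the primal--dual cross terms and render the complementarity residuals nonpositive. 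Finally I would assemble $V=V_{1}+c_{2}V_{2}+c_{3}V_{3}+c_{4}V_{4}$ and bound every off-diagonal coupling by Young's inequality. \textbf{The hard part} will be exactly this domination: the gradient actually fed to the primal layer is evaluated at the estimate $\chi_{ij}$, so its error depends on $\bar{z}$ and on $\mathbf{x}-\eta$, and the $\vartheta$-coupling reappears in $\dot{V}_{3}$. Closing the loop requires showing the quasi-steady-state approximation errors are uniformly Lipschitz and can be absorbed into the strong-convexity margin once $\alpha,\beta,\gamma,\kappa$ are large enough to make $\dot{V}$ negative definite in all error variables; Barbalat's lemma then gives $e,\chi,\bar{v}^{q},\bar{z}\to 0$, whence $\mathbf{x}\to\eta\to\eta^{*}=\mathbf{x}^{\star}$.
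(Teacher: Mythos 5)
Your proposal is correct in outline and uses essentially the same ingredients as the paper's proof: the skew-symmetry and regression properties of Lemma \ref{EL_property} with adaptive/sign-function cancellation for the EL layer, the projection inequalities of Lemma \ref{Projection_property} for the multiplier dynamics, the Hurwitz matrix $M_i$ with $P_iM_i+M_i^{\top}P_i=-Q_i$ for the gradient tracker, Lemma \ref{Lemma_Graph2} for the action estimator, Young's inequality to let large $\alpha,\beta,\kappa$ (and $\gamma$) dominate the cross terms, and Barbalat's lemma. The one substantive difference is organizational, and it stems from a misreading that you flag as ``the hard part'': the estimate $\chi_{ij}$ entering the gradient tracker \eqref{TrackGrad} is built from the variables $s_{p,l}$, which estimate the \emph{auxiliary} states $\eta_{ij}$ (reindexed as $\varsigma_l$), not the physical positions $\mathbf{x}$. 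Consequently the reference-generating subsystem \eqref{s_Sys}, \eqref{AuxSys_Co}, \eqref{LagMulti_Co}, \eqref{TrackGrad_Co} is completely autonomous --- no term in it depends on $\mathbf{x}$ or on the tracking error $e$ --- and the interconnection is a strict one-directional cascade rather than a feedback loop. The paper exploits this by splitting the proof into two independent steps: first the reference layer with $V=\tfrac12\|\bar z\|^{2}+\sum_i V_i$ (yielding $\eta\to\mathbf{x}^{\star}$ for $\alpha,\beta,\kappa$ large), then the EL tracking layer with $W_{ij}$ (yielding $e_{ij}\to 0$ and $x_{ij}-\eta_{ij}\to 0$ for $\gamma>\tfrac14$), so no composite weights $c_2,c_3,c_4$ and no absorption of $\|\mathbf{x}-\eta\|$ into the strong-convexity margin are required. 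Your composite Lyapunov function would still go through, but the coupling you defer to ``close the loop'' is in fact absent, so the closure is immediate.
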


\begin{proof} 
	The proof proceeds in two steps.		
	
	\textit{Step 1}: To facilitate the subsequent analysis, define $\bar{\eta}_i = \eta_i - \eta_i^*$, $\bar{\vartheta}_i = \vartheta_i - \vartheta_i^*$, $\bar{\xi}_i = \xi_i - \xi_i^*$, $\bar{\omega}_i = \omega_i - \omega_i^*$, and $\bar{\varpi}_i = \varpi_i - \varpi_i^*$. In virtue of Lemma \ref{ch5_equi}, one can verify that
	\begin{subequations}
		\begin{align}
			\dot{\bar{\eta}}_i =& \bar{\vartheta}_i, \label{error_a}\\
			\dot{\bar{\vartheta}}_i =& -\alpha \bar{\vartheta}_i - R_i \bar{\xi}_i - G_i^{\top} \mathcal{P}_{+}(\lambda_i) + G_i^{\top} \mathcal{P}_{+}(\lambda_i^*) \notag\\
			&- B_i^{\top}\mathcal{P}_{+}(\omega_i + B_i\eta_i - b_i) + B_i^{\top}\mathcal{P}_{+}(\omega_i^* + B_i\eta_i^* - b_i), \label{error_b}\\
			\dot{\bar{\omega}}_i =& -\omega_i + \mathcal{P}_{+}(\omega_i +B_i\eta_i - b_i) + B_i\vartheta_i, \label{error_d}\\
			\dot{\bar{\rho}}_i =& (\mathcal{L}_i \otimes I_p) (\mathcal{P}_{+}(\lambda_i) - \mathcal{P}_{+}(\lambda_i^*)). \label{error_f}
		\end{align}
	\end{subequations}

	In light of Lemma \ref{Projection_property}, one can get $\|\lambda_i - \mathcal{P}_{+}(\lambda_i^*)\|^2 - \|\lambda_i-\mathcal{P}_{+}(\lambda_i)\|^2 \geq 0$. Therefore, consider the following Lyapunov function candidate
	\begin{align*}
		V = \frac{1}{2}  \|\bar{z}\|^2 + \sum_{i=1}^{N} V_i,
	\end{align*}
	where $V_i = V_{i1} + V_{i2}  + (\bar{v}_i^q)^{\top} P_i \bar{v}_i^q$, and
	\begin{align*}
		V_{i1} =& \frac{1}{2} (\alpha - 1) \|\bar{\eta}_i\|^2 + \frac{1}{2} \|\bar{\eta}_i + \bar{\vartheta}_i\|^2+\frac{1}{2} \|\bar{\omega}_i\|^2, \\
		V_{i2} =& \frac{1}{2} \|\lambda_i - \mathcal{P}_{+}(\lambda_i^*)\|^2 - \frac{1}{2} \|\lambda_i - \mathcal{P}_{+}(\lambda_i)\|^2 + \frac{1}{2} \|\bar{\rho}_i\|^2.
	\end{align*}
	
	The time derivative of $V_{i1}$ along the trajectories of \eqref{error_a} and \eqref{error_b} is given by
	\begin{equation}\label{ch5_Vi1}
		\begin{aligned}
			\dot{V}_{i1} =& -(\alpha-1)\|\bar{\vartheta}_i\|^2 - \bar{\eta}_i^{\top}R_i\bar{\xi}_i - \bar{\vartheta}_i^{\top}R_i\bar{\xi}_i-\|\bar{\omega}_i\|^2	\\
			&- \bar{\eta}_i^{\top}B_i^{\top}(\mathcal{P}_{+}(\omega_i + B_i\eta_i - b_i) - \mathcal{P}_{+}(\omega_i^{*} + B_i\eta_i^{*} - b_i)) \\
			& - \bar{\vartheta}_i^{\top}B_i^{\top}(\mathcal{P}_{+}(\omega_i + B_i\eta_i - b_i) - \mathcal{P}_{+}(\omega_i^{*} + B_i\eta_i^{*} - b_i)) \\
			& - (\bar{\eta}_i+\bar{\vartheta}_i)^{\top}G_i^{\top}(\mathcal{P}_{+}(\lambda_i) - \mathcal{P}_{+}(\lambda_i^{*}))+\bar{\omega}_i^{\top}B_i\bar{\vartheta}_{i}\\
			&+\bar{\omega}_i^{\top}(\mathcal{P}_{+}(\omega_i + B_i\eta_i - b_i) - \mathcal{P}_{+}(\omega_i^{*} + B_i\eta_i^{*} - b_i)).
		\end{aligned}
	\end{equation}
Furthermore, with \eqref{asa_conp1}, one can verify that
	\begin{equation}\label{Vi1_Ext2}
		\begin{aligned}
			&-\|\omega_i-\omega_i^{*}\|^2+(\omega_i-\omega_i^{*})^{\top}\\
			&\times\big(\mathcal{P}_{+}(\omega_i + B_i\eta_i - b_i) - \mathcal{P}_{+}(\omega_i^{*} + B_i\eta_i^{*} - b_i)\big)\\
			=&-\|\omega_{i}-\mathcal{P}_{+}(\omega_i + B_i\eta_i - b_i)\|^2\\
			&-(\mathcal{P}_{+}(\omega_i + B_i\eta_i - b_i)-\omega_i^{*})^{\top}(\omega_i-\mathcal{P}_{+}(\omega_i + B_i\eta_i - b_i)).
		\end{aligned}
	\end{equation}
In virtue of Lemma \ref{Projection_property}, it is easy to see that
	\begin{equation}\label{Proj_Local_Lagr}
		\begin{aligned}
			&\left(\omega_{i}+B_{i}\eta_{i}-b_{i}-\mathcal{P}_{+}(\omega_{i}+B_{i}\eta_{i}-b_{i})\right)^{\top}\\
			&\times\left(\mathcal{P}_{+}(\omega_{i}+B_{i}\eta_{i}-b_{i})-\mathcal{P}_{+}(\omega_{i}^{*}+B_{i}\eta_{i}^{*}-b_{i})\right)\geq 0.
		\end{aligned}
	\end{equation}
Furthermore, it follows from \eqref{Vi1_Ext2} and \eqref{Proj_Local_Lagr} that
	\begin{equation}\label{Vi1_Ext1}
		\begin{aligned}
			&-\|\omega_i-\omega_i^{*}\|^2-\left(\eta_i-\eta_i^{*}\right)^{\top}B_i^{\top}\\
			&\times \left(\mathcal{P}_{+}(\omega_i + B_i\eta_i - b_i) - \mathcal{P}_{+}(\omega_i^{*} + B_i\eta_i^{*} - b_i)\right)\\
			&+(\omega_i-\omega_i^{*})^{\top}\left(\mathcal{P}_{+}(\omega_i + B_i\eta_i - b_i) - \mathcal{P}_{+}(\omega_i^{*} + B_i\eta_i^{*} - b_i)\right)\\
			\leq&-\|\omega_{i}-\mathcal{P}_{+}(\omega_i + B_i\eta_i - b_i)\|^2-(\omega_i^{*})^{\top}(B_i\eta_i^{*}-b_i)\\
			&+(\mathcal{P}_{+}(\omega_i + B_i\eta_i - b_i))^{\top}(B_i\eta_i^{*}-b_i)\\
			\leq&-\|\omega_{i}-\mathcal{P}_{+}(\omega_i + B_i\eta_i - b_i)\|^2,
		\end{aligned}
	\end{equation}
	where the last inequality holds by $\mathcal{P}_{+}(\omega_i + B_i\eta_i - b_i)\geq \mathbf{0}$, $B_i\eta_i^{*}\leq b_i$, and $(\omega_i^{*})^{\top}(B_i\eta_i^{*}-b_i)=0$.
	
	By Young's inequality, one can further get
		\begin{align}
		&	\bar{\omega}_i^{\top} B_i\bar{\vartheta}_i - \bar{\vartheta}_i^{\top}B_i^{\top} \left( \mathcal{P}_{+}(\omega_i + \eta_i - b_i) - \mathcal{P}_{+}(\omega_i^* + \eta_i^* - b_i) \right) \notag \\
			\leq & \|B_i\|^2 \|\bar{\vartheta}_i\|^2 + \frac{1}{4} \|\omega_i - \mathcal{P}_{+}(\omega_i + \eta_i - b_i)\|^2.\label{ch5_Vi5_young}
		\end{align}
Hence, substituting \eqref{Vi1_Ext1} and \eqref{ch5_Vi5_young} into \eqref{ch5_Vi1} yields
	\begin{equation}\label{ch5_Vi1_New}
		\begin{aligned}
			\dot{V}_{i1} \leq& -(\alpha-1-\|B_i\|^2)\|\bar{\vartheta}_i\|^2 - \bar{\eta}_i^{\top}R_i\bar{\xi}_i- \bar{\vartheta}_i^{\top}R_i\bar{\xi}_i\\
			&-\frac{3}{4}\|\omega_{i}-\mathcal{P}_{+}(\omega_i + B_i\eta_i - b_i)\|^2  \\
			& - \bar{\eta}_i^{\top}G_i^{\top}(\mathcal{P}_{+}(\lambda_i) - \mathcal{P}_{+}(\lambda_i^{*}))\\
			& - \bar{\vartheta}_i^{\top}G_i^{\top}(\mathcal{P}_{+}(\lambda_i) - \mathcal{P}_{+}(\lambda_i^{*})).
		\end{aligned}
	\end{equation}
Moreover, in virtue of Lemma \ref{Projection_property}, the time derivative of $V_{i2}$ along the trajectories of \eqref{LagMulti_ic}, \eqref{Equ_e}, and \eqref{error_d} can be obtained as
		\begin{align}
			\dot{V}_{i2} = & \bar{\rho}_i^{\top} (\mathcal{L}_i \otimes I_p) \left( \mathcal{P}_{+}(\lambda_{i}) \hspace{-0.05cm} -\hspace{-0.05cm} \mathcal{P}_{+}(\lambda_{i}^{*}) \right)\hspace{-0.05cm} +\hspace{-0.05cm}(\mathcal{P}_{+}(\lambda_{i}) \hspace{-0.05cm}- \hspace{-0.05cm} \mathcal{P}_{+}(\lambda_{i}^{*}))^{\top} \notag \\
			&\times \big( -\lambda_{i} + \mathcal{P}_{+}(\lambda_i) + G_i \eta_i - g_i - (\mathcal{L}_i \otimes I_p) \mathcal{P}_{+}(\lambda_i)  \notag \\
			& - (\mathcal{L}_i \otimes I_p) \rho_i \big) \notag \\
			\leq & \hspace{-0.1cm}-\hspace{-0.1cm} (\mathcal{P}_{+}(\lambda_{i}) \hspace{-0.1cm}-\hspace{-0.1cm} \mathcal{P}_{+}(\lambda_{i}^{*}))^{\top} \hspace{-0.1cm} (\mathcal{L}_i \hspace{-0.1cm} \otimes \hspace{-0.1cm} I_p) (\mathcal{P}_{+}(\lambda_{i}) \hspace{-0.1cm}-\hspace{-0.1cm} \mathcal{P}_{+}(\lambda_{i}^{*})).\label{ch5_Vi2}
		\end{align}
The time derivative of $V_{i3}$ along the trajectory of \eqref{ch5_dot_v_i_q} is given by
		\begin{align}
			\dot{V}_{i3}= & (\bar{v}^q_i)^{\top} P_i \left( \beta M_i \bar{v}_i^q + H_{i}(\chi_i) - H_{i}(\chi_i^*) \right) \notag\\
			&+ \left( \beta M_i \bar{v}_i^q + H_{i}(\chi_i) - H_{i}(\chi_i^*) \right)^{\top} P_i \bar{v}^q_i \notag \\
			\leq & 2 (\bar{v}_i^q)^{\top} P_i \left( H_{i}(\chi_i) \hspace{-0.08cm} - \hspace{-0.08cm} H_{i}(\chi_i^*) \right)  \hspace{-0.08cm} - \hspace{-0.08cm} \beta \lambda_{\min}(Q_i) \|\bar{v}^q_i\|^2. \label{ch5_Vi7}
		\end{align}
By combining \eqref{ch5_Vi1_New}, \eqref{ch5_Vi2}, and \eqref{ch5_Vi7}, one has
	\begin{align*}
		\dot{V}_i \leq & -(\alpha-1-\|B_i\|^2)\|\bar{\vartheta}_i\|^2 - \bar{\eta}_i^{\top} R_i \bar{\xi}_i - \bar{\vartheta}_i^{\top} R_i \bar{\xi}_i \\
		&- (\mathcal{P}_{+}(\lambda_{i}) - \mathcal{P}_{+}(\lambda_{i}^{*}))^{\top} (\mathcal{L}_i \otimes I_p)(\mathcal{P}_{+}(\lambda_{i}) - \mathcal{P}_{+}(\lambda_{i}^{*})) \\
		& - \frac{3}{4} \|\omega_{i} - \mathcal{P}_{+}(\omega_{i} + B_i\eta_{i} - b_{i})\|^2 - \beta \lambda_{\min}(Q_i) \|\bar{v}_i^q\|^2 \\
		&+ 2 (\bar{v}_i^q)^{\top} P_i (H_{i}(\chi_i) - H_{i}(\chi_i^*)).
	\end{align*}
Let $H(\chi) = \operatorname{col}\{H_1(\chi_1), \ldots, H_N(\chi_N)\}$. Then, one can further get
	\begin{equation}\label{ch5_vvv1}
		\begin{aligned}
			\sum_{i=1}^{N} \dot{V}_i
			\leq & -(\alpha-1-\|B\|^2)\|\bar{\vartheta}\|^2 -  \beta \min_{i \in \mathbb{N}} \{\lambda_{\min}(Q_i)\} \|\bar{v}^q\|^2 \\
			&- (\mathcal{P}_{+}(\lambda) - \mathcal{P}_{+}(\lambda^{*}))^{\top} (\mathcal{L} \otimes I_p) (\mathcal{P}_{+}(\lambda) - \mathcal{P}_{+}(\lambda^{*})) \\
			& - \frac{3}{4} \|\omega - \mathcal{P}_{+}(\omega + B\eta - b)\|^2 - \bar{\eta}^{\top} R \bar{\xi} - \bar{\vartheta}^{\top} R \bar{\xi}\\
			& + 2 (\bar{v}^q)^{\top} P (H(\chi) - H(\chi^*)),
		\end{aligned}
	\end{equation}
	where $\bar{v}^q = \operatorname{col}\{\bar{v}_1^q, \ldots, \bar{v}_N^q\}$ and $P = \operatorname{diag}\{P_1, \ldots, P_N\}$.
	
	The time derivative of $\frac{1}{2}\|\bar{z}\|^2$ along the trajectory of \eqref{s_Sys} is obtained as
	\begin{equation}\label{Ext_V_Z}
			\frac{1}{2} \frac{d\|\bar{z}\|^2}{dt}	\leq  \ -\frac{\kappa}{2} \lambda_{\mathbf{L}} \|\bar{z}\|^2 - \bar{z}^{\top} \Xi (\mathbf{1}_n \otimes \vartheta ),
	\end{equation}
	where $\lambda_{\mathbf{L}}=\lambda_{\min}(\Xi (\mathbf{L}+\mathbf{L}^{\top}) \Xi^{\top})$.

	By Assumptions \ref{Ass_StronglyConvex} and \ref{Ass_Lipschitz}, it is easy to verify that
	\begin{equation}\label{ch5_vvv2}
		\begin{aligned}
			-\bar{\eta}^{\top} R \bar{\xi} = & -\bar{\eta}^{\top} \big( R\xi(s) - R\xi^q(s) + R\xi^q(s) - F(\eta) \\
			&+ F(\eta) - F(\eta^*) \big) \\
			\leq & \ \|\bar{\eta}\| \|\bar{v}^q\| + \ell \|\eta\| \|\bar{z}\| - \hbar \|\bar{\eta}\|^2 \\
			\leq & \ \frac{1}{\hbar} \|\bar{v}^q\|^2 + \frac{\ell^2}{\hbar} \|\bar{z}\|^2 - \frac{\hbar}{2} \|\bar{\eta}\|^2,
		\end{aligned}
	\end{equation}
	where the last inequality holds by $\|\bar{\eta}\|\|\bar{v}^q\| \leq \frac{\hbar}{4} \|\bar{\eta}\|^2 + \frac{1}{\hbar} \|\bar{v}^q\|^2$ and $\ell \|\eta\|\|\bar{z}\| \leq \frac{\hbar}{4} \|\eta\|^2 + \frac{\ell^2}{\hbar} \|\bar{z}\|^2$.

	Similarly to \eqref{ch5_vvv2}, it holds that
		\begin{align}
			-\bar{\vartheta}^{\top} R \bar{\xi} = & -\bar{\vartheta}^{\top} \big( R\xi(s) - R\xi^q(s) + R\xi^q(s) - F(\eta) \notag \\
			&+ F(\eta) - F(\eta^*) \big) \notag \\
			\leq & \ \|\bar{\vartheta}\| \|\bar{v}^q\| + \ell \|\vartheta\| \|\bar{z}\| + \ell \|\bar{\vartheta}\| \|\bar{\eta}\| \notag \\
			\leq & \ (1 \hspace{-0.1cm}+ \hspace{-0.1cm} \frac{2\ell^2}{\hbar}) \|\bar{\vartheta}\|^2  \hspace{-0.1cm}+ \hspace{-0.1cm} \frac{1}{2} \|\bar{v}^q\|^2  \hspace{-0.1cm}+  \hspace{-0.1cm} \frac{\ell^2}{2} \|\bar{z}\|^2  \hspace{-0.1cm}+  \hspace{-0.1cm} \frac{\hbar}{8} \|\bar{\eta}\|^2,\label{ch5_vvv3}
		\end{align}
	where the last inequality holds by $\|\bar{\vartheta}\|\|\bar{v}^q\| \leq \frac{1}{2} \|\bar{\vartheta}\|^2 + \frac{1}{2} \|\bar{v}^q\|$, $\ell \|\bar{\vartheta}\|\|\bar{\eta}\| \leq \frac{\hbar}{4} \|\bar{\eta}\|^2 + \frac{\ell^2}{\hbar} \|\bar{\vartheta}\|^2$, and $\ell \|\bar{\vartheta}\|\|\bar{z}\| \leq \frac{1}{2} \|\bar{\vartheta}\|^2 + \frac{\ell^2}{2} \|\bar{z}\|^2$.

	By Assumption \ref{Ass_Lipschitz}, it is easy to verify that $\left\|\frac{\partial v_i^q(y)}{\partial x_{ij}}\right\| \leq \ell$, $\forall i \in \mathbb{N},j \in \mathbb{V}_i$. Therefore, one can get
	\begin{equation}\label{ch5_vvv4}
		\begin{aligned}
			&2\|\bar{v}^q\| \|P\| \|H(\chi) - H(\chi^*)\| \\
			\leq & \ 2\ell \|\bar{v}^q\| \|P\| \|\bar{z}\| + 2\ell \|\bar{v}^q\| \|P\| \|\bar{\eta}\| \\
			\leq & \ \ell {c}_{p} \left( 1 + \frac{8\ell {c}_{p}}{\hbar} \right) \|\bar{v}^q\|^2 + \ell {c}_{p} \|\bar{z}\|^2 + \frac{\hbar}{8} \|\bar{\eta}\|^2,
		\end{aligned}
	\end{equation}
	where ${c}_{p} = \max_{i \in \mathbb{N}} \{ \lambda_{\max}(P_i) \}$.

	In light of Lemma \ref{ch5_equi}, it follows from \eqref{Equ_a} that
	\begin{equation}\label{vvv1}
		z^{\top}  (\mathbf{1}_n \otimes \vartheta) \leq \frac{1}{2}\|z\|^2 +  \frac{n}{2} \|\bar{\vartheta}\|^2.
	\end{equation}

	By substituting \eqref{ch5_vvv2}, \eqref{ch5_vvv3}, and \eqref{ch5_vvv4} into \eqref{ch5_vvv1}, followed by substituting \eqref{vvv1} into \eqref{Ext_V_Z}, one can obtain
	\begin{equation*}
		\begin{aligned}
			\dot{V}\leq& -\left(\alpha-2-\|B\|^2-\frac{n}{2}-\frac{2\ell^2}{\hbar}\right)\|\bar{\vartheta}\|^2-\frac{\hbar}{4}\|\bar{\eta}\|^2\\
			&-\left(\beta\min_{i\in\mathbb{N}}\{\lambda_{\min}(Q_i)\}-\ell {c}_{p}-\frac{1+8\ell^2{c}_{p}^2}{\hbar}-\frac{1}{2}\right)\|\bar{v}^q\|^2\\
			&-\left(\frac{\kappa}{2}\lambda_{\mathbf{L}} -\frac{1}{2}-\ell {c}_{p}-\frac{\ell^2}{\hbar} -\frac{\ell^2}{2}\right)\|\bar{z}\|^2\\
			&-\frac{1}{2}\|\omega-\mathcal{P}_{+}(\omega+B\eta-l)\|^2\\
			&-(\mathcal{P}_{+}(\lambda)-\mathcal{P}_{+}(\lambda^{*}))^{\top}(\mathcal{L}\otimes I_p)(\mathcal{P}_{+}(\lambda)-\mathcal{P}_{+}(\lambda^{*})).
		\end{aligned}
	\end{equation*}
Since $\mathcal{L}_i$ is a positive semi-definite matrix for all $i \in \mathbb{N}$, select $\alpha$, $\beta$, and $\kappa$ such that 
	\begin{equation*}
		\begin{aligned}
			\alpha > &2+\|B\|^2 + \frac{2\ell^2}{\hbar} + \frac{n}{2},\\
			\kappa > &\frac{2}{\lambda_{\mathbf{L}}} \left( \frac{1}{2} + \ell {c}_{p} + \frac{\ell^2}{\hbar} + \frac{\ell^2}{2} \right), \\
			\beta > &\frac{1}{\min_{i\in\mathbb{N}}\{\lambda_{\min}(Q_i)\}} \left( \ell {c}_{p} + \frac{8 \ell^2 {c}_{p}^2}{\hbar} + \frac{1}{\hbar} +\frac{1}{2} \right),
		\end{aligned}
	\end{equation*}
	which implies 
	\begin{equation}\label{V_Ext_End}
		\dot{V}(t) \leq - \psi^{\top}(t)\Lambda\psi(t)
	\end{equation}
	with $\psi = \operatorname{col}\{\bar{\vartheta}, \bar{\eta}, \bar{v}^q, \bar{z}\}$, $c_{\Lambda_3}=\beta \min_{i \in \mathbb{N}} \{ \lambda_{\min}(Q_i) \} - \ell {c}_{p}-\frac{1}{\hbar} - \frac{8 \ell^2 {c}_{p}^2}{\hbar} - \frac{1}{2}$, $c_{\Lambda_4}=\frac{\kappa}{2}\lambda_{\mathbf{L}} - \frac{1}{2} - \ell {c}_{p} - \frac{\ell^2}{\hbar} - \frac{\ell^2}{2}$ and
	\begin{equation*}
		\Lambda = \left[ {\begin{array}{*{20}{c}}
				{\frac{\hbar}{4}} & 0 & 0 & 0 \\
				0 & {\alpha - 2 -\|B\|^2-\frac{n}{2} - \frac{2 \ell^2}{\hbar}} & 0 & 0 \\
				0 & 0 & c_{\Lambda_3} & 0 \\
				0 & 0 & 0 & c_{\Lambda_4}
		\end{array}} \right] \succ \mathbf{0}.
	\end{equation*}
	
	Consequently, $\bar{\eta}$, $\bar{\vartheta}$, $\bar{\omega}$, $\bar{\varpi}$, $\bar{v}_i^{q}$, and $\bar{z}$ are all bounded. Using \eqref{error_a} and the boundedness of $\bar{\vartheta}$, it can be concluded that $\bar{\eta}$ is uniformly continuous. Furthermore, since $\dot{V}(t) \leq 0$, it follows that $V(t)$ is bounded, ensuring the existence and finiteness of $V(+\infty)$. Therefore, integrating inequality \eqref{V_Ext_End} over infinite horizon yields
	\begin{equation*}
		\int_{0}^{+\infty}\frac{\hbar}{4}\|\bar{\eta}(t)\|dt \leq V(0) - V(+\infty).
	\end{equation*}
	In summary, by applying Barbalat's Lemma \cite{khalilNonlinearSystems2002}, $\eta(t)$ asymptotically converges to the NE $\mathbf{x}^{\star}$.

	\textit{Step 2}: For all $i\in\mathbb{N}$ and $j\in\mathbb{V}_i$, substituting the controller \eqref{EL_controller_a} into the EL system \eqref{EL_Sys} yields
		\begin{align}
			&E_{ij}(x_{ij}) \ddot{x}_{ij} + C_{ij}({x}_{ij},\dot{x}_{ij}) \dot{x}_{ij} + D_{ij}(x_{ij}) x_{ij} \notag \\
			=&\Upsilon_{ij}(x_{ij},\dot{x}_{ij},\ddot{\hat{x}}_{ij},\dot{\hat{x}}_{ij})\hat{\mu}_{ij}-\gamma e_{ij}-\hat{d}_{ij}\operatorname{sgn}(e_{ij})+d_{ij}.\label{ELPf1}
		\end{align}
In light of Lemma \ref{EL_property}\ref{EL_PRO_3}, it follows that
	\begin{equation}\label{ELPf2}
		\begin{aligned}
		&E_{ij}(x_{ij})\ddot{\hat{x}}_{ij}+C_{ij}({x}_{ij},\dot{x}_{ij})\dot{\hat{x}}_{ij}+D_{ij}(x_{ij})\\
		=&\Upsilon_{ij}(x_{ij},\dot{x}_{ij},\ddot{\hat{x}}_{ij},\dot{\hat{x}}_{ij}){\mu}_{ij}.
		\end{aligned}		
	\end{equation}
Subtracting \eqref{ELPf2} from \eqref{ELPf1} yields
		\begin{align}
			E_{ij}(x_{ij})\dot{e}_{ij}=&-C_{ij}({x}_{ij},\dot{x}_{ij})e_{ij}-\gamma e_{ij}-\hat{d}_{ij}\operatorname{sgn}(e_{ij})+d_{ij}\notag \\
			&+\Upsilon_{ij}(x_{ij},\dot{x}_{ij},\ddot{\hat{x}}_{ij},\dot{\hat{x}}_{ij})(\hat{\mu}_{ij}-{\mu}_{ij}).\label{ELControl}
		\end{align}

	Consider the following Lyapunov function candidate
\begin{align*}
	W_{ij} =&\frac{1}{2}\|x_{ij}-\eta_{ij}\|^2+ \frac{1}{2}e_{ij}^{\top}E_{ij}(x_{ij})e_{ij} \\
	&+ \frac{1}{2}\|{\mu}_{ij}-\hat{\mu}_{ij}\|^2 + \frac{1}{2}\|\tilde{d}_{ij}-\hat{d}_{ij}\|^2.
\end{align*}
	
	The time derivative of $W_{ij}$ along the trajectories of \eqref{ELControl}, \eqref{EL_controller_b}, and \eqref{EL_controller_c} can be obtained as
	\begin{equation*}
		\begin{aligned}
			\dot{W}_{ij} 
			\leq &-\|x_{ij}-\eta_{ij}\|^2+\|e_{ij}\|\|x_{ij}-\eta_{ij}\|+\frac{1}{2}e_{ij}^{\top}e_{ij}\\
			&-\gamma e_{ij}^{\top}e_{ij} + e_{ij}^{\top}d_{ij} - \tilde{d}_{ij} e_{ij}^{\top}\operatorname{sgn}(e_{ij}).
		\end{aligned}
	\end{equation*}
	
	Furthermore, it follows that
	\begin{equation*}
		\begin{aligned}
			e_{ij}^{\top}d_{ij}-\tilde{d}_{ij} e_{ij}^{\top}\operatorname{sgn}(e_{ij}) \leq  \|e_{ij}\|\|d_{ij}\|-\tilde{d}_{ij} e_{ij}^{\top}\operatorname{sgn}(e_{ij})
			\leq 0.
		\end{aligned}
	\end{equation*}
	Therefore, it is easy to see that
	\begin{equation}\label{STEP2_EXT}
		\dot{W}_{ij} \leq -\|x_{ij}-\eta_{ij}\|^2+\|e_{ij}\|\|x_{ij}-\eta_{ij}\| -\gamma e_{ij}^{\top}e_{ij}.
	\end{equation}
If $\gamma>\frac{1}{4}$, $\dot{W}_{ij}(t) \leq 0$ and $W_{ij}(t) \geq 0$ for all $t \in [0, +\infty)$. It follows that $e_{ij}$, $\mu_{ij} - \hat{\mu}_{ij}$, and $\tilde{d}_{ij} - \hat{d}_{ij}$ are bounded, ensuring the existence and finiteness of $W_{ij}(+\infty)$. By combining \eqref{EL_controllerd} and \eqref{EL_controllere}, one has
	\begin{equation}\label{STEP2_EXT22}
		\dot{x}_{ij} + x_{ij} = e_{ij} + \dot{\eta}_{ij} + \eta_{ij}.
	\end{equation}
	In light of \textit{Step 1}, both $\dot{\eta}_{ij}$ and $\eta_{ij}$ are bounded, which implies that $\dot{x}_{ij}$ and $x_{ij}$ are also bounded. Furthermore, utilizing Lemma \ref{lemma_axuEqu} and the boundedness of $\ddot{x}_{ij}$, it follows from \eqref{STEP2_EXT22} that $\ddot{\hat{x}}_{ij}$ and $\dot{x}_{ij}$ are also bounded. As a result, $\Upsilon_{ij}(x_{ij}, \dot{x}_{ij}, \ddot{\hat{x}}_{ij}, \dot{\hat{x}}_{ij})$ remains bounded. Besides, by Lemma \ref{EL_property} and \eqref{ELControl}, it can be concluded that $\|e_{ij}\|$ is uniformly continuous.

	Integrating inequality \eqref{STEP2_EXT} over infinite horizon gives that
	\begin{equation*}
		\int_{0}^{t}\gamma e_{ij}^{\top}(\tau)e_{ij}(\tau)d\tau \leq W_{ij}(0)-W_{ij}(+\infty).
	\end{equation*}

Therefore, by applying Barbalat's Lemma \cite{khalilNonlinearSystems2002}, one can obtain $\lim_{t \to +\infty}\|e_{ij}(t)\|=0$. Combining \eqref{EL_controllerd} and \eqref{EL_controllere} yields $\lim_{t \to +\infty}\|x_{ij}(t)-\eta_{ij}(t)\|=0$. Consequently, invoking \textit{Step 1} gives that $\lim_{t \to +\infty}\|\mathbf{x}(t)-\eta(t)\|=0$ and $\lim_{t \to +\infty}\|\mathbf{x}(t)-\mathbf{x}^{\star}\|=0$.	
\end{proof}

\begin{remark}
	In contrast to the distributed strategies proposed in \cite{YE2018266,Ye2019Unified,ZENG201920,Liu9745386,CHENG10156887,zou2023,nguyen2024,liu2024}, the proposed algorithm is capable of handling both local and coupling constraints. In comparison with the constraints in \cite{ZENG201920} and
	\cite{Deng9772719}, the constraints in this paper are more general. In addition, different from \cite{Liu9745386,Zhang8792368,LIU2026112603,Romano8727896}, less information of disturbances is required.
	\end{remark}

\section{USV Swarm Confrontation Model}\label{Sec_SWARMCONUSV}
USV swarms have explored various aspects of cooperation, such as formation tracking \cite{Rasmus2021Formation}, escort control \cite{Wen2024escort}, and task allocation \cite{MA2021227}. These studies focus on scenarios where all USVs share a common goal, making cooperative optimization a suitable framework. However, in confrontational scenarios where USVs are divided into adversarial teams, such single-team cooperative approaches no longer apply. Instead, a coalition game framework wherein USVs within each team cooperate to achieve team goals while competing with the opposing team is employed to simulate and analyze interactions between adversarial and cooperative USV teams. Motivated by this observation, consider a situational game between red and blue USV swarms within a designated maritime region. 

%

\subsection{USV Model}
A point-mass model is utilized to described the motion of USV \cite{antonelli2018}. The dynamics of the $j$-th USV in the $i$-th swarm is given by
\begin{equation}\label{ch5_USV_dyn}
\begin{aligned}
	\dot{x}_{ij} =& \Psi_{ij}(\phi_{ij})\nu_{ij},  \\
	M_{ij}\dot{\nu}_{ij} =& -\Pi_{ij}(\nu_{ij})\nu_{ij}-\Phi_{ij}\nu_{ij}+\tau_{ij}+d_{ij},
\end{aligned}
\end{equation}
where $x_{ij} = \operatorname{col}\{x^{\text{usv}}_{ij}, y^{\text{usv}}_{ij},\phi_{ij}\}$ represents the position and orientation of the USV in the inertial frame, with $x^{\text{usv}}_{ij}$ and $y^{\text{usv}}_{ij}$ being the centroid position, and $\phi_{ij}$ the heading angle, $\nu_{ij} = \operatorname{col}\{\varrho_{ij}, \delta_{ij}, \theta_{ij}\}$ represents the velocity vector with $\varrho_{ij}$, $\delta_{ij}$, and $\theta_{ij}$ being the surge speed, sway speed, and yaw rate, respectively, $\tau_{ij} = \operatorname{col}\{\tau_{ij,\varrho}, \tau_{ij,\delta}, \tau_{ij,\theta}\}$ represents the control inputs with $\tau_{ij,\varrho}$, $\tau_{ij,\delta}$, and $\tau_{ij,\theta}$ being the surge thrust, sway thrust, and yaw moment, respectively, $d_{ij}$ represents the disturbance. Furthermore, the matrices $M_{ij}$, $\Psi_{ij}(\phi_{ij})$, $\Pi_{ij}(\nu_{ij})$, and $\Phi_{ij}$ correspond to the mass, rotation, damping, and restoring matrices, i.e.,
\begin{align*}
\Psi_{ij}  &=\left[\begin{array}{ccc}
	\cos \phi_{ij} & -\sin \phi_{ij} & 0 \\
	\sin \phi_{ij} & \cos \phi_{ij} & 0 \\
	0 & 0 & 1
\end{array}\right],\\
\Phi_{ij} &= \left[\begin{array}{ccc}
	-X_{\varrho} & 0 & 0 \\
	0 & -Y_{\delta} & -Y_{\theta} \\
	0 & -N_{\delta} & -N_{\theta}
\end{array}\right],\\
M_{ij}  &= \left[\begin{array}{ccc}
	\tilde{m}_{ij}-X_{\dot{\varrho}} & 0 & 0 \\
	0 &\tilde{m}_{ij}-Y_{\dot{\delta}} & \tilde{m}_{ij}x_{g}-Y_{\dot{\theta}} \\
	0 & \tilde{m}_{ij}x_{g}-N_{\dot{\theta}} & I_z-N_{\dot{\theta}}
\end{array}\right],\\
\Pi_{ij} &= \left[\begin{array}{ccc}
	0 & 0 & c_{\Pi_{ij}} \\
	0 & 0 & (\tilde{m}_{ij}-X_{\dot{\varrho}})\varrho_{ij} \\
	-c_{\Pi_{ij}} & -(\tilde{m}_{ij}-X_{\dot{\varrho}})\varrho_{ij} & 0
\end{array}\right],
\end{align*}
where $c_{\Pi_{ij}}=-(\tilde{m}_{ij}-Y_{\dot{\delta}})\delta_{ij}-(\tilde{m}_{ij}x_g-Y_{\dot{\theta}})\theta_{ij}$, $\tilde{m}_{ij}$ represents the weight of the USV, $x_{g}$ represents the distance from the center of mass of the USV to the origin of the body frame, $I_z$ is the yaw moment of inertia, and $X_{*}$, $Y_{*}$, and $N_{*}$ represent the relevant hydrodynamic parameters.

Assume that the parameters $X_{\varrho}$, $Y_{\delta}$, $Y_{\theta}$, $N_{\delta}$, $N_{\theta}$, $X_{\dot{\varrho}}$, $Y_{\dot{\delta}}$, $Y_{\dot{\theta}}$, $N_{\dot{\delta}}$, and $N_{\dot{\theta}}$ are unknown, and that $Y_{\dot{\theta}} = N_{\dot{\theta}}$. Define 
\begin{align}
	E_{ij}(\phi_{ij}) =& \Psi_{ij}(\phi_{ij}) M_{ij} \Psi_{ij}^{\top}(\phi_{ij}), \notag \\
	C_{ij}(\phi_{ij}, \dot{x}_{ij}) =& \Psi_{ij}(\phi_{ij}) \big( \Pi_{ij}\Psi_{ij}^{\top}(\phi_{ij}) \dot{x}_{ij} -  \mathcal{H}(\theta_{ij}) \big) \Psi_{ij}^{\top}(\phi_{ij}), \notag \\
	D_{ij}(\phi_{ij}) =& \Psi_{ij}(\phi_{ij}) \Phi_{ij} \Psi_{ij}^{\top}(\phi_{ij}),\label{TransMartixUSV}
\end{align}
where $\mathcal{H}(\theta_{ij}) = M_{ij}\Psi_{ij}^{\top} \dot{\Psi}_{ij}$. With \eqref{TransMartixUSV}, system \eqref{ch5_USV_dyn} can be rewritten in the EL form as
	\begin{align}
		&E_{ij}(\phi_{ij}) \ddot{x}_{ij} + C_{ij}(\phi_{ij}, \dot{x}_{ij}) \dot{x}_{ij} + D_{ij}(\phi_{ij}) \dot{x}_{ij} \notag \\
		=& \Psi_{ij}(\phi_{ij}) (\tau_{ij} + d_{ij}).\label{USV_EL_Sys}
	\end{align}

It should be noted that although system \eqref{USV_EL_Sys} differs slightly from systems \eqref{ch5_USV_dyn}, the two are fundamentally equivalent, as $\Psi_{ij}^{\top} \Psi_{ij} = \Psi_{ij} \Psi_{ij}^{\top} = I_3$.

Let $\mu_{ij} = \operatorname{col}\{X_{\varrho}, Y_{\delta}, Y_{\theta}, N_{\delta}, N_{\theta}, X_{\dot{\varrho}}, Y_{\dot{\delta}}, Y_{\dot{\theta}}, N_{\dot{\delta}}, N_{\dot{\theta}}\}$. Using Lemma \ref{EL_property}, it follows that for given $x_{ij}$ and $\dot{x}_{ij}$, there exists a matrix $\Upsilon(z^{\text{usv}}_{ij}, \dot{z}_{ij}^{\text{usv}}, \hat{y}, \tilde{y}) \in \mathbb{R}^{3 \times 10}$ such that $\forall \hat{y}, \tilde{y} \in \mathbb{R}^{3}$,
\begin{equation*}
E_{ij}(\phi_{ij}) \hat{y} + C_{ij}(\phi_{ij}, \dot{x}_{ij}) \tilde{y} + D_{ij}(\phi_{ij}) \tilde{y} = \Upsilon(x_{ij}, \dot{x}_{ij}, \hat{y}, \tilde{y}) \mu_{ij}.
\end{equation*}

\subsection{Task Model}
Consider a swarm confrontation between the red and blue USVs with $\mathbb{N} = \{1, 2\}$. Let $\mathbb{V}_1 = \{1, \ldots, m_1\}$ represent the set of red USV swarm. Furthermore, it holds that $\mathbb{V}_1 = \mathbb{V}_1^{\text{iv}} \cup \mathbb{V}_1^{\text{df}}$ where  $\mathbb{V}_1^{\text{iv}} = \{1, \ldots, m_1^{\text{iv}}\}$ is the intrusive USV set of red swarm, and $\mathbb{V}_1^{\text{df}} = \{m_1^{\text{iv}}+1, \ldots, m_1\}$ is the defensive USV set. Similarly, define $\mathbb{V}_2 = \{1, \ldots, m_2\}$ as the set of blue USV swarm where $\mathbb{V}_2 = \mathbb{V}_2^{\text{iv}} \cup \mathbb{V}_2^{\text{df}}$, $\mathbb{V}_2^{\text{iv}} = \{1, \ldots, m_2^{\text{iv}}\}$ is the intrusive USV set of blue swarm, and $\mathbb{V}_2^{\text{df}} = \{m_2^{\text{iv}}+1, \ldots, m_2\}$ is the defensive USV set.


Let the coordinates of the red swarm's command center be denoted as $x_{1}^{\text{p}_{\text{trg}}} = (x_1^{\text{trg}}, y_1^{\text{trg}})$, where $x_1^{\text{trg}}$ is the horizontal coordinate and $y_1^{\text{trg}}$ is the vertical coordinate. Similarly, let the coordinates of the blue swarm's command center be $x_{2}^{\text{p}_{\text{trg}}} = (x_2^{\text{trg}}, y_2^{\text{trg}})$, where $x_2^{\text{trg}}$ represents the horizontal coordinate and $y_2^{\text{trg}}$ represents the vertical coordinate.


Assume that there exists a sufficiently long supply line. Furthermore, the defensive and intrusive sides are assumed to possess supply line coordinates represented by two vertically parallel lines. Specifically, the red side is positioned on the left at $x_\text{line}^{\text{R}}$, while the blue side is positioned on the right at $x_\text{line}^{\text{B}}$.

Define $x^{\text{p}}_{ij}=[x^{\text{usv}}_{ij}, y^{\text{usv}}_{ij}]^{\top}$. For both swarms, the task model of the intrusive USV is described as follows.
\begin{enumerate}
\item Assume that the $k$-th intrusive USV from the $i_1$-th swarm is tasked with deterring the command center of the $i_2$-th swarm where $ i_1, i_2 \in \mathbb{N}$ and $i_1 \neq i_2$. For $k \in \mathbb{V}_{i_1}^{\text{iv}}$, the specific cost function can be formulated as
\begin{equation*}
	{J}^{\text{iv}}_1(x_{i_1k}^{\text{p}}, x_{i_2}^{\text{p}_{\text{trg}}}) =
	\left\|
	\begin{bmatrix}
		x_{i_1k}^{\text{usv}} \\
		y_{i_1k}^{\text{usv}}
	\end{bmatrix} -
	\begin{bmatrix}
		x_{i_2}^{\text{trg}} \\
		y_{i_2}^{\text{trg}}
	\end{bmatrix}
	\right\|^2.
\end{equation*}

\item While executing the intrusive task, assume that the $k$-th intrusive USV from the $i_1$-th swarm must avoid all USVs from the $i_2$-th swarm where $ i_1, i_2 \in \mathbb{N}$ and $i_1 \neq i_2$. For $k \in \mathbb{V}_{i_1}^{\text{iv}}$, the specific cost function is described as
\begin{equation*}
	{J}^{\text{iv}}_2(x_{i_1k}^{\text{p}}, \mathbf{x}_{i_2}^{\text{p}}) =
	-\sum_{q \in \mathbb{V}_{i_2}} \left\|
	\begin{bmatrix}
		x_{i_1k}^{\text{usv}} \\
		y_{i_1k}^{\text{usv}}
	\end{bmatrix} -
	\begin{bmatrix}
		x_{i_2q}^{\text{usv}} \\
		y_{i_2q}^{\text{usv}}
	\end{bmatrix}
	\right\|^2, 
\end{equation*}
where $\mathbf{x}_{i_2}^{\text{p}}=\operatorname{col}\{x_{i_21}^{\text{p}},\ldots,x_{i_2m_2}^{\text{p}}\}$.

\item Assume that the $k$-th intrusive USV from the $i$-th swarm is tasked with executing a pincer attack on the command center. For $ k \in \mathbb{V}_i^{\text{iv}}$ and $ i \in \mathbb{N}$, the specific cost function is designed as
\[
{J^{\text{iv}}_3}( \phi_{ik}) = \left\| \phi_{ik}^{\text{usv}} - \tilde{\phi}_{ik}^{\text{usv}} \right\|^2, 
\]
where $\tilde{\phi}_{ik}^{\text{usv}}$ is the desired attacking angle, uniformly distributed on the intrusive side.

\item Assume that the intrusive USVs from the $i$-th swarm are required to form a cohesive formation to concentrate their strength. For $q \in \mathbb{V}_i^{\text{iv}}$, the specific cost function is developed as
\begin{equation*}
	{J^{\text{iv}}_4}(x_{iq}^{\text{p}}, \mathbf{x}_{i}^{\text{iv}}) = {\left\| \begin{bmatrix}
			x_{iq}^{\text{usv}} \\
			y_{iq}^{\text{usv}}
		\end{bmatrix} - \frac{1}{m_i^{\text{iv}}} \sum_{k \in \mathbb{V}_i^{\text{iv}}}
		\begin{bmatrix}
			x_{ik}^{\text{usv}} \\
			y_{ik}^{\text{usv}}
		\end{bmatrix} - \Delta_{iq} \right\|^2},
\end{equation*}
where $\mathbf{x}_i^{\text{iv}} = \operatorname{col}\{ x_{i1}^{\text{p}}, \ldots, x_{im_i^{\text{iv}}}^{\text{p}} \}$, $\Delta_{iq}\in\mathbb{R}^{2}$ is a displacement with the formation center.
\end{enumerate}

Moreover, the task model of the defensive USV is described as follows.
\begin{enumerate}
\item To intercept the intrusive USV swarm, assume that each defensive USV in a given swarm is assigned to counter a specific intrusive USV from the opposing swarm. Specifically, the $k$-th defensive USV from the $i_1$-th swarm is tasked with defending against the $l$-th intrusive USV from the $i_2$-th swarm where $i_1, i_2 \in \mathbb{N}$ and $i_1 \neq i_2$. For $k \in \mathbb{V}_{i_1}^{\text{df}}$ and $l \in \mathbb{V}_{i_2}^{\text{iv}}$, the specific cost function is described as
\begin{equation*}
	{J}^{\text{df}}_1(x_{i_1k}^{\text{p}}, x_{i_2l}^{\text{p}}) =
	\left\|
	\begin{bmatrix}
		x_{i_1k}^{\text{usv}} \\
		y_{i_2k}^{\text{usv}}
	\end{bmatrix} - \frac{1}{2}
	\begin{bmatrix}
		x_{i_2l}^{\text{usv}} - x_{i_1}^\text{trg} \\
		y_{i_2l}^{\text{usv}} - y_{i_1}^\text{trg}
	\end{bmatrix}
	\right\|^2.
\end{equation*}

\item To perform preemptive interception, assume that each defensive USV within a swarm must also maneuver proactively towards the enemy to establish tactical superiority. Let the $k$-th defensive USV from the $i_1$-th swarm be tasked with actively closing the distance to the $l$-th intrusive USV from the $i_2$-th swarm where $i_1, i_2 \in \mathbb{N}$ and $i_1 \neq i_2$. For $k \in \mathbb{V}_{i_1}^{\text{df}}$ and $l \in \mathbb{V}_{i_2}^{\text{iv}}$, the specific cost function is expressed as
\begin{equation*}
	{J}^{\text{df}}_2(x_{i_1k}^{\text{p}}, x_{i_2l}^{\text{p}}) =
	\left\|
	\begin{bmatrix}
		x_{i_1k}^{\text{usv}} \\
		y_{i_1k}^{\text{usv}}
	\end{bmatrix} -
	\begin{bmatrix}
		x_{i_2l}^{\text{usv}} \\
		y_{i_2l}^{\text{usv}}
	\end{bmatrix}
	\right\|^2.
\end{equation*}

\item To facilitate an immediate response to the maneuvers and intercept the $l$-th intrusive USV from the $i_2$-th swarm, assume that the $k$-th defensive USV from the $i_1$-th swarm must orient itself at an angle opposite to that of the intrusive USV where $i_1, i_2 \in \mathbb{N}$ and $i_1 \neq i_2$. For $k \in \mathbb{V}_{i_1}^{\text{df}}$ and $l \in \mathbb{V}_{i_2}^{\text{iv}}$, the specific cost function is modeled as
\begin{equation*}
	{J}^{\text{df}}_3(\phi_{i_1k}, \phi_{i_2l}) =
	\left\| \phi_{i_1k}^{\text{usv}} + \phi_{i_2l}^{\text{usv}}
	\right\|^2.
\end{equation*}

\item To capture the characteristics of a swarm, assume all defensive USVs should form a coordinated formation. For $q \in \mathbb{V}_{i}^{\text{df}}$, the specific cost function is formulated as
\[
{J}^{\text{df}}_4(x_{iq}^{\text{p}}, \mathbf{x}_{i,-q}^{\text{df}}) = {\left\| \begin{bmatrix}
		{x^{\text{usv}}_{{i}q}}\\
		{y^{\text{usv}}_{{i}q}}
	\end{bmatrix}-\frac{1}{m_i^{\text{df}}} {\sum_{k \in \mathbb{V}_i^{\text{df}}} { \begin{bmatrix}
				{x^{\text{usv}}_{{i}k}}\\
				{y^{\text{usv}}_{{i}k}}
	\end{bmatrix}  }   }-\Delta_{iq} \right\|^2},
\]
where $m_i^{\text{df}}=m_i-m_i^{\text{iv}}$, and $\mathbf{x}_{i,-q}^{\text{df}} = \operatorname{col} \{ {x}_{im_i^{\text{iv}}+1}^{\text{p}},$ $ {x}_{im_i^{\text{iv}}+2}^{\text{p}}, \ldots, {x}_{i(q-1)}^{\text{p}}, {x}_{i(q+1)}^{\text{p}}, \ldots, {x}_{im_i}^{\text{p}} \}$.

\item To optimize the protection of the command center, assume that the geometric center of the defensive USV formation exactly coincides with the position of the command center. The specific cost function is modeled as
\[
{J}^{\text{df}}_5(\mathbf{x}_{i}^{\text{df}}, x_0^{i}) = {\left\| \frac{1}{m_i-m_i^{\text{iv}}} {\sum_{k \in \mathbb{V}_i^{\text{df}}} { \begin{bmatrix}
				{x^{\text{usv}}_{{i}k}}\\
				{y^{\text{usv}}_{{i}k}}
		\end{bmatrix}  } - \begin{bmatrix}
			{x_{i}^{\text{trg}}}\\
			{y_{i}^{\text{trg}}}
	\end{bmatrix}  } \right\|^2},
\]
where $\mathbf{x}_{i}^{\text{df}} = \operatorname{col} \{ {x}_{im_i^{\text{iv}}+1}^{\text{p}}, \ldots, {x}_{im_i}^{\text{p}} \}$.
\end{enumerate}

Beyond the aforementioned task objectives, all USVs operating within the battlefield are constrained by strict spatial boundaries. The operational domain is mathematically defined by the Cartesian coordinates $[X_\text{min}, X_\text{max}]$ and $[Y_\text{min}, Y_\text{max}]$, representing the horizontal and vertical boundaries of the battlefield, respectively. Consequently, these operational domains impose geographical constraints on all USVs in both the red and blue swarms, i.e.,
\begin{equation}\label{GeoConstraints}
\begin{aligned}
	X_\text{min} \leq x_{ij}^{\text{usv}} \leq X_\text{max},\quad Y_\text{min} \leq y_{ij}^{\text{usv}} \leq Y_\text{max}, 
\end{aligned}
\end{equation}

To quantify the logistical sustainability of the USV swarm from both red and blue sides, the supply line length is defined as the maximum distance from any USV position to the corresponding shoreline. This metric serves as a crucial parameter for maintaining effective supply chain management and operational continuity. Considering inherent logistical limitations, it is imperative that the frontline of both red and blue USV formations maintain a strategic distance from the respective shorelines, ensuring the supply line length remains within prescribed operational thresholds. Therefore, the coupling constraints for maintaining the integrity of the supply lines for the swarms are formally expressed as
\begin{equation}\label{SuppConstraints}
\begin{aligned}
\sum_{j\in\mathbb{V}_1} (x_{1j}^{\text{usv}}-x^\text{line}_\text{R}) \le \sum_{j\in\mathbb{V}_1} \tilde{g}_{1j}, 
	\sum_{k\in\mathbb{V}_2} (x^\text{line}_\text{B}-x_{2k}^{\text{usv}}) \le \sum_{k\in\mathbb{V}_2} \tilde{g}_{2k}, 
\end{aligned}
\end{equation}
where $\tilde{g}_{ij}$ is the logistical supply distance.

\subsection{Coalition Game Model}
In light of the coalition game framework, and by incorporating the aforementioned task objective functions along with the local and coupling constraints, one can model the USV swarm confrontation as a coalition game. For the USVs in the red swarm, assuming the cost function of each USV is formulated as a weighted sum of the task objectives, the cost functions are defined as
\begin{subequations}\label{CostFunctionRed}
\begin{align}
	J_{1j} =& \sum_{l=1}^{4}\hat{l}^{R}_{jl} {J}^{\text{iv}}_{l}(\mathbf{x}), \quad \forall j\in\{1,\ldots,m_1^{\text{iv}} \},\label{CostFunctionRed_a}\\
	J_{1j} =& \sum_{l=1}^{5} \hat{l}^{R}_{jl} {{J}^{\text{df}}_l}(\mathbf{x}),\quad  \forall j\in\{m_1^{\text{iv}}+1,\ldots,m_1\}, \label{CostFunctionRed_b}
\end{align}
\end{subequations}
where $\hat{l}^{\text{R}}_{jl}$ is a weighting coefficient. Similarly, for the USVs in the blue swarm, the cost functions are given by
\begin{subequations}\label{CostFunctionBlue}
\begin{align}
	J_{2j} =& \sum_{l=1}^{4}\hat{l}^{B}_{jl} {J}^{\text{iv}}_{l}(\mathbf{x}), \quad \forall j\in\{1,\ldots,m_2^{\text{iv}} \},\label{CostFunctionBlue_a}\\
	J_{2j} =& \sum_{l=1}^{5} \hat{l}^{B}_{jl} {{J}^{\text{df}}_l}(\mathbf{x}),\quad  \forall j\in\{m_2^{\text{iv}}+1,\ldots,m_2\},\label{CostFunctionBlue_b}
\end{align}
\end{subequations}
where $\hat{l}^{B}_{jl}$ is a weighting coefficient.

In virtue of \eqref{GeoConstraints} and the definition of $x_{ij}^{\text{p}}$, the local constraint is formulated as
\begin{align}\label{LocalConstraints}
\Omega_{ij} = \left\{ \left. x_{ij} \in \mathbb{R}^3 \right| B_{ij}x_{ij} \leq b_{ij}\right\}, \quad \forall i\in\mathbb{N},j\in\mathbb{V}_{i},
\end{align}
where $b_{ij}=[X_{\max},Y_{\max},-X_{\min},-Y_{\min}]^{\top}$, and $B_{ij}=\left[1,0,0;0,1,0;{{\rm{ - }}1},0,0;
		0,{{\rm{ - }}1},0 \right]$. Furthermore, in light of \eqref{SuppConstraints}, the coupling constraint can be expressed as
\begin{equation}\label{CoupledConstraints}
\mathbb{X}_i = \left\{ \mathbf{x}_i \in \mathbb{R}^{3m_i} \left| \sum_{j=1}^{m_i} G_{ij} x_{ij} \leq \sum_{j=1}^{m_i} g_{ij} \right.\right\},
\end{equation}
where $\mathbf{x}_{i} = \operatorname{col}\{ x_{i1}, \ldots, x_{im_i} \}$, $G_{1j}=[1,0,0]$, $g_{1j} =\tilde{g}_{1j}+x_{\text{R}}^{\text{line}}$, $\forall j\in\mathbb{V}_1$, and $G_{2j}=[-1,0,0]$, $g_{2j} =\tilde{g}_{2j}-x_{\text{B}}^{\text{line}}$, $\forall j\in\mathbb{V}_2$.

Therefore, the swarm confrontation involving USVs can be mathematically modeled as a coalition game $\Game$, characterized by the cost functions of USVs in the red swarm defined in \eqref{CostFunctionRed} and those in the blue swarm specified in \eqref{CostFunctionBlue}, the local constraints governing individual USV positions presented in \eqref{LocalConstraints}, and the coupling constraints reflecting the interactions between USVs formulated in \eqref{CoupledConstraints}.

\begin{lemma}
Under Assumption \ref{ExistAss}, there exists a unique NE in coalition game $\Game$.
\end{lemma}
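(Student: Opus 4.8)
The plan is to recast the NE of $\Game$ as the unique solution of a variational inequality (VI) and to exploit strong monotonicity of the pseudogradient $F$. Under Assumption~\ref{Ass_StronglyConvex}, each coalition's cost $J_i$ is strongly convex in its own action, so the first-order conditions of the KKT lemma are equivalent to the statement that $\mathbf{x}^{\star}$ solves $\mathrm{VI}(\bm{\Omega}\cap\mathbb{X},F)$, i.e. $(\mathbf{x}-\mathbf{x}^{\star})^{\top}F(\mathbf{x}^{\star})\ge 0$ for all $\mathbf{x}\in\bm{\Omega}\cap\mathbb{X}$. I would first record that the feasible set $\bm{\Omega}\cap\mathbb{X}$ is nonempty, closed, and convex: nonemptiness is exactly the Slater point furnished by Assumption~\ref{ExistAss}, while closedness and convexity follow since the local constraints~\eqref{LocalConstraints} and the coupling constraint~\eqref{CoupledConstraints} are polyhedral.

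Next I would obtain existence and uniqueness simultaneously from a single property: strong monotonicity of $F$ on $\bm{\Omega}\cap\mathbb{X}$. The standard result that a strongly monotone VI over a nonempty closed convex set has a unique solution then delivers the lemma at once. This route is preferable to a compactness argument, because the box constraints~\eqref{GeoConstraints} bound only the position components $x^{\text{usv}}_{ij},y^{\text{usv}}_{ij}$ and leave the heading variables $\phi_{ij}$ unconstrained, so $\bm{\Omega}\cap\mathbb{X}$ is not compact. Since all task costs in~\eqref{CostFunctionRed}--\eqref{CostFunctionBlue} are quadratic, $F$ is affine, $F(\mathbf{x})=\mathcal{M}\mathbf{x}+c$, and strong monotonicity reduces to positive definiteness of the symmetric part $\tfrac12(\mathcal{M}+\mathcal{M}^{\top})$.

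The crux is therefore to verify $\mathcal{M}+\mathcal{M}^{\top}\succ\mathbf{0}$. The own-coalition (diagonal) blocks collect the Hessians of $J_i$ with respect to $\mathbf{x}_i$: the attraction term ${J}^{\text{iv}}_1$, the heading terms ${J}^{\text{iv}}_3,{J}^{\text{df}}_3$, and the formation/centering terms ${J}^{\text{iv}}_4,{J}^{\text{df}}_4,{J}^{\text{df}}_5$ all contribute positive-semidefinite Hessians, whereas the collision-avoidance term ${J}^{\text{iv}}_2$ contributes a negative-definite Hessian in the intruder's own position. Hence the per-coalition strong convexity of Assumption~\ref{Ass_StronglyConvex} (with constant $\hbar>0$) holds only when the weights $\hat{l}$ are chosen so that the convex contributions dominate the repulsive one, and the inter-coalition coupling in the off-diagonal blocks---bounded through the Lipschitz constant $\ell$ of Assumption~\ref{Ass_Lipschitz}---must in turn be dominated by $\hbar$ so that the symmetric part stays positive definite. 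I expect this diagonal-dominance/weight condition to be the main obstacle; once it is secured, strong monotonicity and hence existence and uniqueness of the NE follow immediately.
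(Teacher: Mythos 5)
Your route---recasting the NE as the solution of $\mathrm{VI}(\bm{\Omega}\cap\mathbb{X},F)$ over a set that is nonempty by Assumption~\ref{ExistAss} and closed, convex, and polyhedral by \eqref{LocalConstraints}--\eqref{CoupledConstraints}, and then invoking existence and uniqueness for a strongly monotone VI---is sound and is genuinely more substantive than the paper's own argument, which consists entirely of the remark that the weights are positive and the cost functions quadratic, "therefore the proof is straightforward and omitted." Your preference for strong monotonicity over a compactness argument is also well taken, since the heading variables $\phi_{ij}$ are unconstrained by \eqref{GeoConstraints} and the feasible set is unbounded. The real point of divergence is your observation about ${J}^{\text{iv}}_2$: because it enters with a minus sign, its Hessian in the intruder's own position is negative definite, so "quadratic with positive weights" does not by itself deliver strong convexity of $J_i$ in $\mathbf{x}_i$, let alone $\tfrac12(\mathcal{M}+\mathcal{M}^{\top})\succ\mathbf{0}$; indeed, with the simulation weights ($\hat{l}^{R}_{j2}=5$ against $\hat{l}^{R}_{j1}=0.01$) the repulsive contribution dominates the attraction term in the position block. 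The lemma cites only Assumption~\ref{ExistAss}, so the paper is implicitly assuming that the chosen weights make Assumption~\ref{Ass_StronglyConvex} (and a corresponding monotonicity bound via Assumption~\ref{Ass_Lipschitz}) hold, without saying so. In short, your approach is the correct one and would close the argument once the diagonal-dominance condition on the weights is imposed and verified; but, as you yourself flag, that verification is the crux, and neither your outline nor the paper's one-line proof actually carries it out.
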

\begin{proof}
Since $\hat{l}^{\text{R}}_{jl}$ and $\hat{l}^{B}_{jl}$ in \eqref{CostFunctionRed} and \eqref{CostFunctionBlue} are positive, it can be readily verified that the cost functions are quadratic. Therefore, the proof is straightforward and omitted. 
\end{proof}

For the $j$-th USV in $i$-th swarm, the controller $\tau_{ij}$ is given by
\begin{subequations}\label{USV_controller}
\begin{align}
	\tau_{ij} =& -\gamma \Psi_{ij}^{\top}(\phi_{ij}) e_{ij} + \Psi_{ij}^{\top}(\phi_{ij}) \Upsilon_{ij}(x_{ij}, \dot{x}_{ij}, \ddot{\hat{x}}_{ij}, \dot{\hat{x}}_{ij})\hat{\mu}_{ij} \notag\\
	&- \Psi_{ij}^{\top}(\phi_{ij})\operatorname{sgn}(e_{ij})\hat{d}_{ij}, \label{USV_controller_a}\\
	\dot{\hat{\mu}}_{ij} =& -\Upsilon^{\top}_{ij}(x_{ij}, \dot{x}_{ij}, \ddot{\hat{x}}_{ij}, \dot{\hat{x}}_{ij}) e_{ij}, \label{USV_controller_b}\\
	\dot{\hat{d}}_{ij} =& e_{ij}^{\top} \operatorname{sgn}(e_{ij}), \label{USV_controller_c}\\
	e_{ij} =& \dot{x}_{ij} - \dot{\hat{x}}_{ij}, \label{USV_controllerd} \\
	\dot{\hat{x}}_{ij} =& \vartheta_{ij} - (x_{ij} - \eta_{ij}). \label{USV_controllere}
\end{align}
\end{subequations}
\begin{theorem}
Under Assumptions \ref{ExistAss} and \ref{ch5_AssGraph}, if each USV updates its state according to algorithm \eqref{USV_controller}, \eqref{AuxSys}, \eqref{LagMulti}, \eqref{TrackGrad}, and \eqref{EstAct}, all states asymptotically converge to the NE.
\end{theorem}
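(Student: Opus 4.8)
The plan is to recognize that this theorem is a direct corollary of Theorem~\ref{lemma_axuEqu}: once the USV model is recast in the generalized Euler--Lagrange (EL) form \eqref{USV_EL_Sys} and the control law \eqref{USV_controller} is applied, the resulting closed loop becomes structurally identical to the generic closed-loop EL system governed by \eqref{EL_Sys} and \eqref{EL_controller}. The essential observation is that the rotation matrix $\Psi_{ij}(\phi_{ij})$ is orthogonal, i.e. $\Psi_{ij}\Psi_{ij}^{\top}=\Psi_{ij}^{\top}\Psi_{ij}=I_3$. Substituting $\tau_{ij}$ from \eqref{USV_controller_a} into the right-hand side of \eqref{USV_EL_Sys}, every $\Psi_{ij}$ meets a $\Psi_{ij}^{\top}$ and cancels, so that
\begin{equation*}
\Psi_{ij}\tau_{ij}=-\gamma e_{ij}+\Upsilon_{ij}(x_{ij},\dot{x}_{ij},\ddot{\hat{x}}_{ij},\dot{\hat{x}}_{ij})\hat{\mu}_{ij}-\operatorname{sgn}(e_{ij})\hat{d}_{ij},
\end{equation*}
which coincides exactly with the generic control input $u_{ij}$ of \eqref{EL_controller_a}. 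The remaining disturbance term becomes $\bar{d}_{ij}=\Psi_{ij}(\phi_{ij})d_{ij}$; since $\Psi_{ij}$ is orthogonal, $\|\bar{d}_{ij}\|=\|d_{ij}\|$, so $\bar{d}_{ij}$ inherits the same unknown constant bound $\tilde{d}_{ij}$. Hence the USV closed loop is an instance of the generic EL closed loop with $d_{ij}$ replaced by $\bar{d}_{ij}$, while the auxiliary, Lagrange-multiplier, gradient-tracking, and action-estimation subsystems \eqref{AuxSys}, \eqref{LagMulti}, \eqref{TrackGrad}, \eqref{EstAct} are reused verbatim.

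The second step is to verify that the transformed USV model genuinely satisfies the EL structural properties of Lemma~\ref{EL_property}. By construction $E_{ij}(\phi_{ij})=\Psi_{ij}M_{ij}\Psi_{ij}^{\top}$ is a congruence transform of the positive definite mass matrix $M_{ij}$, hence symmetric positive definite with eigenvalues bounded by those of $M_{ij}$. The skew-symmetry of $\dot{E}_{ij}-2C_{ij}$ is preserved under the orthogonal transformation encoded through $\mathcal{H}(\theta_{ij})$ in \eqref{TransMartixUSV}, and the linear-in-parameters property is precisely the factorization $E_{ij}\hat{y}+C_{ij}\tilde{y}+D_{ij}\tilde{y}=\Upsilon(x_{ij},\dot{x}_{ij},\hat{y},\tilde{y})\mu_{ij}$ already established in the USV section, with the unknown hydrodynamic coefficients collected in $\mu_{ij}$. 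The required boundedness estimates follow from the smoothness and boundedness of $\Psi_{ij}$, $M_{ij}$, $\Pi_{ij}$, and $\Phi_{ij}$ on the operating domain.

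The third step verifies that the coalition game assembled from the task model meets Assumptions~\ref{Ass_StronglyConvex}--\ref{ExistAss}. The aggregated cost functions \eqref{CostFunctionRed} and \eqref{CostFunctionBlue} are weighted sums of quadratics, so their gradients are affine and thus globally Lipschitz, giving Assumption~\ref{Ass_Lipschitz} automatically. For Assumption~\ref{Ass_StronglyConvex}, one inspects the Hessian of each coalition cost with respect to its own actions: the target-deterrence, heading, and formation terms ${J}^{\text{iv}}_{1}$, ${J}^{\text{iv}}_{3}$, ${J}^{\text{iv}}_{4}$ (and their defensive analogues) contribute positive semidefinite blocks, whereas the avoidance term ${J}^{\text{iv}}_{2}$, being concave in the agent's own position, contributes a negative block; choosing the weights $\hat{l}^{\text{R}}_{jl}$, $\hat{l}^{B}_{jl}$ so that the convex contributions dominate renders each coalition cost $\hbar$-strongly convex in its own variables. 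The local and coupling constraints \eqref{LocalConstraints}, \eqref{CoupledConstraints} are of the linear form required, and Assumption~\ref{ExistAss} furnishes a strictly feasible point.

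With the closed-loop dynamics reduced to the generic form and Assumptions~\ref{Ass_StronglyConvex}--\ref{ch5_AssGraph} in force, Lemma~\ref{ch5_equi} identifies every equilibrium of the interconnected system with the NE of $\Game$, and Theorem~\ref{lemma_axuEqu} then yields $\lim_{t\to+\infty}\|\mathbf{x}(t)-\mathbf{x}^{\star}\|=0$ for sufficiently large $\alpha$, $\beta$, $\gamma$, $\kappa$. I expect the main obstacle to be the convexity bookkeeping in the third step: because ${J}^{\text{iv}}_{2}$ is concave in the controlled agent's own coordinates, strong convexity of the coalition objective is not automatic and must be secured through an explicit lower bound on the admissible weights. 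A secondary technical point is confirming that the skew-symmetry property survives the $\Psi_{ij}$-congruence, which relies on the specific structure of $\mathcal{H}(\theta_{ij})$ rather than on a generic argument.
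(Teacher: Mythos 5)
Your proposal is correct and matches the paper's (implicit) argument: the paper offers no separate proof for this theorem, treating it as an immediate consequence of Theorem~\ref{lemma_axuEqu} once the USV dynamics are rewritten in the EL form \eqref{USV_EL_Sys} and the orthogonality $\Psi_{ij}\Psi_{ij}^{\top}=I_3$ is used to identify $\Psi_{ij}\tau_{ij}$ with the generic input $u_{ij}$ of \eqref{EL_controller_a}. Your flagged caveat about the concave avoidance term ${J}^{\text{iv}}_{2}$ is a legitimate one that the paper itself glosses over (its existence/uniqueness lemma is declared ``straightforward and omitted''), so your explicit weight condition makes the argument more careful than the original rather than departing from it.
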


\section{Numerical Example}\label{Sec_SIMULATION}
To set up the simulation environment, let $x^\text{line}_\text{R} = -1000$, $x^\text{line}_\text{B} = 1000$, $X_\text{min} = -1000$, $X_\text{max} = 1000$, $Y_\text{min} = -1000$, and $Y_\text{max} = 1000$. Each side, red and blue, has two command centers positioned along their respective supply lines that must be defended. The coordinates of the red command center are given by $x^{\text{1}}_0=\left(-1000, 0\right)$, while those of the blue command center are $x^{\text{2}}_0= \left(1000, 0\right)$. Additionally, each side, red and blue, is equipped with six USVs. For all USVs, the parameters are shown in Table \ref{ch4_tab_usv} \cite{antonelli2018}.

\begin{table}[!htbp]
	\center
\setlength{\abovecaptionskip}{-0.1cm}
\caption{Parameters of the USV model}\label{ch4_tab_usv}
\tabcolsep 6.5pt
\begin{tabular}{crcrcr}
	\toprule
	Parameter & Value & Parameter & Value & Parameter & Value \\
	\midrule
	$\tilde{m}_{ij}$    & 23.800   & $N_\delta$  & 0.105   & $X_{\dot{\varrho}}$     & -2.000  \\
	$Y_{\dot{\delta}}$ & -10.000  & $N_{\dot{\delta}}$ & 0.000 & $x_g$     & 0.046   \\
	 $N_{\theta}$     & 1.900   & $X_\varrho$  & -0.723 & $I_z$   & 1.760  \\
	$Y_{\dot{\theta}}$  & 0.000   & $Y_{\theta}$  & 0.108   & $N_{\dot{\theta}}$     & -1.000   \\
	  $Y_{\delta}$ & -0.861 &&&&\\
	\bottomrule
\end{tabular}
\end{table}

The task allocation, initial positions, and logistical supply distances of the red and blue USVs are provided in Tables \ref{TableAssign_Red} and \ref{TableAssign_Blue}, respectively. For all red USVs, the parameters are set as $\phi_{1j}=0$ and $\nu_{1j}=\mathbf{0}$ where $j\in\{1,\ldots,6\}$. For the red intrusive USVs, the attack angles are specified as $\tilde{\phi}_{11}^{\text{usv}}=120^{\circ}$ and $\tilde{\phi}_{12}^{\text{usv}}=60^{\circ}$. Similarly, for all blue USVs, the parameters are set as $\nu_{2j}=\mathbf{0}$ where $j\in\{1, \ldots,6\}$. For the blue intrusive USVs, the attack angles are specified as $\tilde{\phi}_{21}^{\text{usv}}=225^{\circ}$, $\tilde{\phi}_{22}^{\text{usv}}=270^{\circ}$, and $\tilde{\phi}_{23}^{\text{usv}}=315^{\circ}$.

\begin{table}[!ht]		
	\center
	\setlength{\abovecaptionskip}{-0.1cm}
	\caption{Task allocation, initial positions, and resupply lines of red USVs}\label{TableAssign_Red}
\tabcolsep 14pt
\begin{tabular}{cccc}
	\toprule
	Red & Task & Initial Position & $g_{1j}$ \\ \midrule
	1 & Attack $\text{TRG}_{2}$ &$[-500, -300]^\top$ & 500  \\
	2 & Attack $\text{TRG}_{2}$ &$[-500, -500]^\top$ & 500  \\
	3 & Intercept $\text{USV}_{21}$ &$[-500, 500]^\top$ & 300 \\
	4 & Intercept $\text{USV}_{22}$ &$[-500, 300]^\top$ & 300 \\
	5 & Intercept $\text{USV}_{23}$ &$[-500, 100]^\top$ & 400 \\
	6 & Intercept $\text{USV}_{23}$ &$[-500, -100]^\top$ & 500 \\\bottomrule
\end{tabular}
\end{table}

\begin{table}[!ht]		
	\center
	\setlength{\abovecaptionskip}{-0.1cm}
	\caption{Task allocation, initial positions, and resupply lines of blue USVs}\label{TableAssign_Blue}
	\tabcolsep 14pt
	\begin{tabular}{cccc}
		\toprule
Blue & Task & Initial Position & $g_{2j}$\\ \midrule
 1 & Attack $\text{TRG}_{1}$ &$[500, -100]^\top$ & 500\\
2 & Attack $\text{TRG}_{1}$ &$[500, -300]^\top$ & 500\\
3 & Attack $\text{TRG}_{1}$ &$[500, -500]^\top$ & 500\\
 4 & Intercept $\text{USV}_{11}$ &$[500, 500]^\top$ & 300\\
 5 & Intercept $\text{USV}_{11}$ &$[500, 300]^\top$ & 300\\
	 6 & Intercept $\text{USV}_{12}$ &$[500, 100]^\top$ & 400\\
		\bottomrule
	\end{tabular}
\end{table}

The communication topologies adopted by the red and blue USV swarms for gradient estimation are illustrated in Figs. \ref{ch5_USV_graphs}(a) and \ref{ch5_USV_graphs}(b), respectively. Moreover, Fig. \ref{ch5_USV_graphs}(c) depicts the communication topology employed by all USVs for auxiliary variable estimation.  

\begin{figure}[!t]
	\centering
	\begin{tikzpicture}[every node/.style={
			circle, draw=none, minimum size=2mm,
			font=\footnotesize\bfseries, text=white
		}, scale=0.6]
		
		\begin{scope}[xshift=0cm,yshift=3.5cm,
			every node/.append style={draw=PalaceRed, fill=PalaceRed}]
			\node (1a) at (0,1.5) {1};
			\node (2a) at (2,1.5) {2};
			\node (3a) at (4,1.5) {3};
			\node (6a) at (0,0) {6};
			\node (5a) at (2,0) {5};
			\node (4a) at (4,0) {4};
			\draw[PalaceRed, thick] (1a) -- (2a);
			\draw[PalaceRed, thick] (1a) -- (6a);
			\draw[PalaceRed, thick] (2a) -- (3a);
			\draw[PalaceRed, thick] (3a) -- (6a);
			\draw[PalaceRed, thick] (3a) -- (4a);
			\draw[PalaceRed, thick] (4a) -- (5a);
			\draw[PalaceRed, thick] (5a) -- (6a);
			\node[font=\footnotesize, text=black, draw=none, fill=none, inner sep=0pt] at (2,-1) {\textnormal{(a)}};
		\end{scope}
		
		\begin{scope}[xshift=6cm,yshift=3.5cm,
			every node/.append style={draw=DeepSeaBlue, fill=DeepSeaBlue}]
			\node (1b) at (0,1.5) {1};
			\node (2b) at (2,1.5) {2};
			\node (3b) at (4,1.5) {3};
			\node (6b) at (0,0) {6};
			\node (5b) at (2,0) {5};
			\node (4b) at (4,0) {4};
			\draw[DeepSeaBlue, thick] (1b) -- (2b);
			\draw[DeepSeaBlue, thick] (1b) -- (4b);
			\draw[DeepSeaBlue, thick] (1b) -- (6b);
			\draw[DeepSeaBlue, thick] (2b) -- (3b);
			\draw[DeepSeaBlue, thick] (3b) -- (4b);
			\draw[DeepSeaBlue, thick] (4b) -- (5b);
			\draw[DeepSeaBlue, thick] (5b) -- (6b);
			\node[font=\footnotesize, text=black, draw=none, fill=none, inner sep=0pt] at (2,-1) {\textnormal{(b)}};
		\end{scope}
		
		\begin{scope}[xshift=0cm,yshift=0cm]
			\foreach \i/\x in {1/0,2/2,3/4,4/6,5/8,6/10} {
				\node[circle, draw=PalaceRed, fill=PalaceRed, minimum size=2mm,
				font=\footnotesize\bfseries, text=white] (N\i) at (\x,1.5) {\i};
			}
			
			\foreach \i/\x in {6/0,5/2,4/4,3/6,2/8,1/10} {
				\node[circle, draw=DeepSeaBlue, fill=DeepSeaBlue, minimum size=2mm,
				font=\footnotesize\bfseries, text=white] (M\i) at (\x,0) {\i};
			}
			
			\draw[ForestGreen, thick, ->] (N1) -- (N2);
			\draw[ForestGreen, thick, ->] (N2) -- (N3);
			\draw[ForestGreen, thick, ->] (N3) -- (N4);
			\draw[ForestGreen, thick, ->] (N4) -- (N5);
			\draw[ForestGreen, thick, ->] (N5) -- (N6);
			
			\draw[ForestGreen, thick, ->] (M1) -- (M2);
			\draw[ForestGreen, thick, ->] (M2) -- (M3);
			\draw[ForestGreen, thick, ->] (M3) -- (M4);
			\draw[ForestGreen, thick, ->] (M4) -- (M5);
			\draw[ForestGreen, thick, ->] (M5) -- (M6);
			
			\draw[ForestGreen, thick, ->] (M6) -- (N1);
			\draw[ForestGreen, thick, ->] (N6) -- (M1);
			
			\draw[ForestGreen, thick, ->] (M6) -- (N3);
			\draw[ForestGreen, thick, ->] (M4) -- (N5);
			
			\node[font=\footnotesize, text=black, draw=none, fill=none, inner sep=0pt] at (5,-1) {\textnormal{(c)}};
		\end{scope}
		
	\end{tikzpicture}
	\vspace*{-0.2cm}
	\caption{Communication graphs. (a) Communication graph $\mathcal{G}_1$ of the USVs in the red swarm. (b) Communication graph $\mathcal{G}_2$ of the USVs in the blue swarm. (c) Communication graph $\bar{\mathcal{G}}$ of all the USVs.}
	\label{ch5_USV_graphs}
\end{figure}

%


For $j \in \{1,2\}$, the parameters of \eqref{CostFunctionRed_a} are assigned to $\hat{l}^{R}_{j1} = 0.01$, $\hat{l}^{R}_{j2} = 5$, $\hat{l}^{R}_{j3} = 1$, and $\hat{l}^{R}_{j4} = 20$. For $j \in \{3,4,5,6\}$, the parameters of \eqref{CostFunctionRed_b} are assigned to $\hat{l}^{R}_{j1} = 2$, $\hat{l}^{R}_{j2} = 1$, $\hat{l}^{R}_{j3} = 30$, $\hat{l}^{R}_{j4} = 5$, and $\hat{l}^{R}_{j5} = 1$. Furthermore, choose $\Delta_{11} = [ 0, 100 ]^{\top}$, 
$\Delta_{12} = [0,-100 ]^{\top}$,
$\Delta_{13} = [0,150 ]^{\top}$,
$\Delta_{14} = [0,50 ]^{\top}$,
$\Delta_{15} = [0,-50]^{\top}$, and
$\Delta_{16} = [ 0, -150 ]^{\top}$. Moreover, for $j \in \{1,2,3\}$, the parameters of \eqref{CostFunctionBlue_a} are set to $\hat{l}^{B}_{j1} = 0.01$, $\hat{l}^{B}_{j2} = 5$, $\hat{l}^{B}_{j3} = 1$, and $\hat{l}^{B}_{j4} = 0.4$. For $j \in \{4,5,6\}$, the parameters of \eqref{CostFunctionBlue_b} are set to $\hat{l}^{B}_{j1} = 2$, $\hat{l}^{B}_{j2} = 5$, $\hat{l}^{B}_{j3} = 10$, $\hat{l}^{B}_{j4} = 1$, and $\hat{l}^{B}_{j5} = 3$. Besides, select
$\Delta_{21} = [0 , 100 ]^{\top}$, 
$\Delta_{22} = [0 , 0 ]^{\top}$, 
$\Delta_{23} = [0 , -10 ]^{\top}$,
$\Delta_{24} = [0 , 100]^{\top}$,
$\Delta_{25} =[0 , 0 ]^{\top}$, and
$\Delta_{26} =[ 0 , -100 ]^{\top}$.

For all $i \in \mathbb{N}$ and $j \in \mathbb{V}_i$, the initial conditions are given by $\hat{\mu}_{ij}(0) = \mathbf{0}$, $\hat{d}_{ij}(0) = 0$, $\hat{x}_{ij}(0) = \mathbf{0}$, $\eta_{ij}(0) = \mathbf{0}$, $\vartheta_{ij}(0) = \mathbf{0}$, $\omega_{ij}(0) = \mathbf{0}$, $\varpi_{ij}(0) = \mathbf{0}$, $\lambda_{ij}(0) = \mathbf{0}$, $\rho_{ij}(0) = \mathbf{0}$, $\xi_{ij}(0) = \mathbf{0}$, and $\zeta_{ij}(0) = \mathbf{0}$. For $p \in \bar{\mathbb{N}}$, $s_{p,-p}(0)$ is set to $\mathbf{0}$. Moreover, for all $i \in \mathbb{N}$ and $j \in \mathbb{V}_i$, the bounded disturbance is specified as $d_{ij} = \operatorname{col}\{2\sin(0.5t), 3\sin(0.3t), 0.1\sin(0.1t)\}$. The control gains for the algorithms described in \eqref{EL_controller}, \eqref{AuxSys}, \eqref{LagMulti}, \eqref{TrackGrad}, and \eqref{EstAct} are specified as $\gamma = 3$, $\alpha = 3$, $\beta = 13$, and $\kappa = 19$. The position trajectories and corresponding heading angles of all USVs are shown in Figs. \ref{ch5_USVResults}(a) and \ref{ch5_USVResults}(b), respectively.

\begin{figure}[!htbp]
	\centering
	\subfigure[Position trajectories of all USVs.]{
		\includegraphics[width=0.22\textwidth]{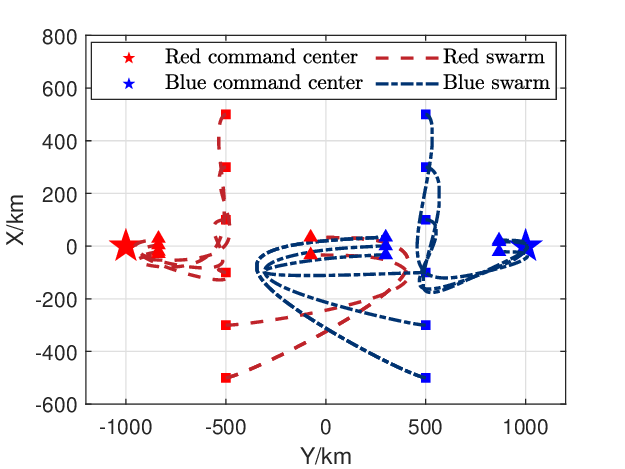}
	}
	\subfigure[Heading angles of all USVs.]{
		\includegraphics[width=0.22\textwidth]{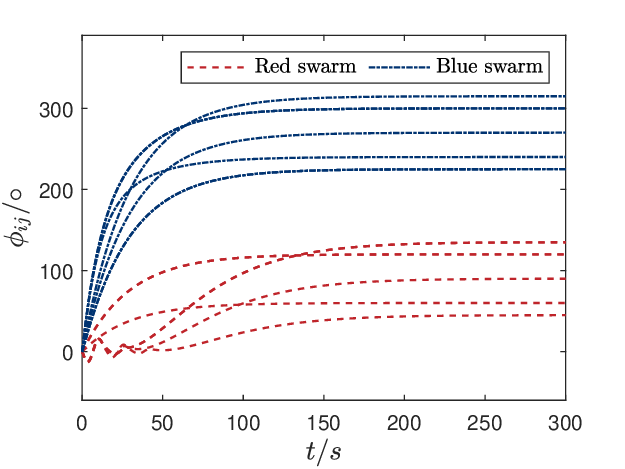}
	}
	\caption{Simulation results of USV formation control.}
	\label{ch5_USVResults}
\end{figure}

The simulation results reveal that both the red and blue sides have reached an equilibrium, which corresponds to the NE. The trajectories of the Lagrange multipliers for the local constraints are presented in Fig. \ref{ch5_Constraints}(a), while those for the coupling constraints are shown in Fig. \ref{ch5_Constraints}(b). As illustrated in Fig. \ref{ch5_USVResults}(a), the equilibrium positions of all USVs lie within the designated boundaries of the battlefield, suggesting that the Lagrange multipliers associated with the local constraints should asymptotically approach zero. This expectation is confirmed by the results depicted in Fig. \ref{ch5_Constraints}(a).

\begin{figure}[!htbp]
	\centering
	\subfigure[Lagrange multipliers of the local constraints.]{
		\includegraphics[width=0.22\textwidth]{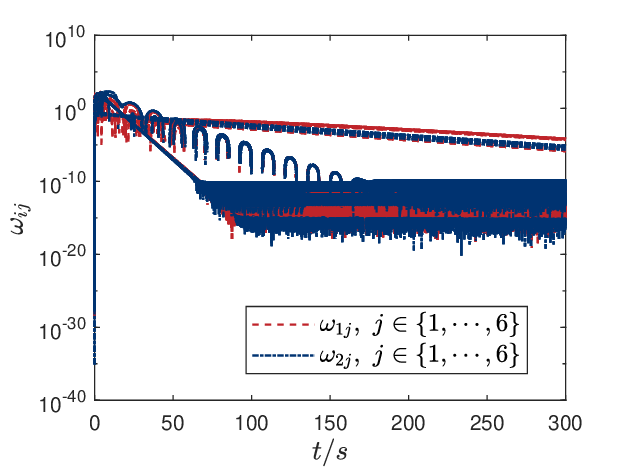}
	}
	\subfigure[Lagrange multipliers of the coupling constraints.]{
		\includegraphics[width=0.22\textwidth]{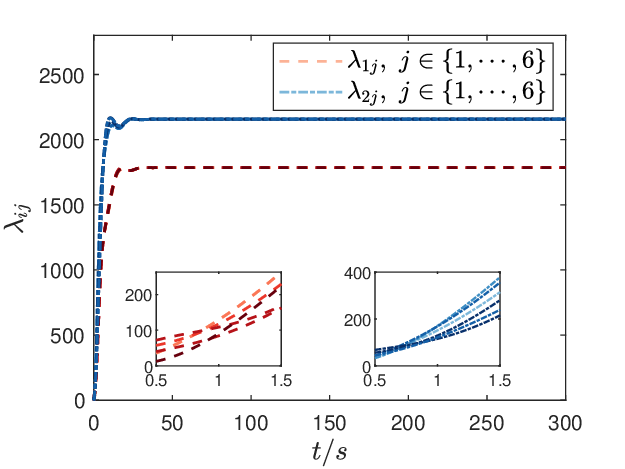}
	}
	\caption{Simulation results of USV formation control.}
	\label{ch5_Constraints}
\end{figure}

%

\section{Conclusion}\label{Sec_Conclu}
In this paper, we proposed a distributed NE seeking strategy tailored for uncertain EL systems subject to disturbances with unknown bounds, while accounting for both local and coupling constraints within coalition games. Furthermore, we utilized the coalition game framework to simulate USV swarm confrontations and formulated a diverse array of tasks, including formation, encirclement, interception, etc. In future work, we will develop distributed online NE seeking methods for time-varying coalition games.

\bibliographystyle{ieeetr}
\bibliography{Ref}

\begin{thebibliography}{10}

\bibitem{rubinstein2007}
J.~{von Neumann} and O.~Morgenstern, {\em Theory of Games and Economic
  Behavior}.
\newblock Princeton, USA: Princeton University Press, 1944.

\bibitem{weiss1999multiagent}
G.~Weiss, {\em Multiagent Systems: A Modern Approach to Distributed Artificial
  Intelligence}.
\newblock Cambridge, USA: MIT Press, 1999.

\bibitem{Nedic2009}
A.~Nedi{\'c} and A.~Ozdaglar, ``Distributed subgradient methods for multi-agent
  optimization,'' {\em IEEE Trans. Automat. Control}, vol.~54, no.~1,
  pp.~48--61, 2009.

\bibitem{SALEHISADAGHIANI2016209}
F.~Salehisadaghiani and L.~Pavel, ``Distributed {N}ash equilibrium seeking: A
  gossip-based algorithm,'' {\em Automatica}, vol.~72, pp.~209--216, 2016.

\bibitem{YE2018266}
M.~Ye, G.~Hu, and F.~L. Lewis, ``{N}ash equilibrium seeking for $n$-coalition
  noncooperative games,'' {\em Automatica}, vol.~95, pp.~266--272, 2018.

\bibitem{huang2022}
B.~Huang, C.~Yang, Z.~Meng, F.~Chen, and W.~Ren, ``Distributed nonlinear
  placement for multicluster systems: A time-varying {N}ash equilibrium-seeking
  approach,'' {\em IEEE Trans. Cybern.}, vol.~52, no.~11, pp.~11614--11623,
  2022.

\bibitem{zhou2024a}
J.~Zhou, G.~Wen, Y.~Lv, T.~Yang, and G.~Chen, ``Distributed resource allocation
  over multiple interacting coalitions: A game-theoretic approach,'' {\em IEEE
  Trans. Automat. Control}, vol.~69, no.~11, pp.~8128--8135, 2024.

\bibitem{Ye2019Unified}
M.~Ye, G.~Hu, F.~L. Lewis, and L.~Xie, ``A unified strategy for solution
  seeking in graphical $n$-coalition noncooperative games,'' {\em IEEE Trans.
  Automat. Control}, vol.~64, no.~11, pp.~4645--4652, 2019.

\bibitem{ZENG201920}
X.~Zeng, J.~Chen, S.~Liang, and Y.~Hong, ``Generalized {N}ash equilibrium
  seeking strategy for distributed nonsmooth multi-cluster game,'' {\em
  Automatica}, vol.~103, pp.~20--26, 2019.

\bibitem{CHENG10156887}
Z.~Chen, X.~Nian, and S.~Li, ``{N}ash equilibrium seeking for incomplete
  cluster game in the cooperation-competition network,'' {\em IEEE Trans. Syst.
  Man Cybern. Syst.}, vol.~53, no.~10, pp.~6542--6552, 2023.

\bibitem{zou2023}
Y.~Zou, Z.~Meng, and M.~V. Basin, ``Consensus-based distributed {N}ash
  equilibrium seeking strategies for constrained noncooperative games of
  clusters,'' {\em IEEE Trans. Syst. Man Cybern. Syst.}, vol.~53, no.~12,
  pp.~7840--7851, 2023.

\bibitem{nguyen2024}
D.~T.~A. Nguyen, M.~Bianchi, F.~D{\"o}rfler, D.~T. Nguyen, and A.~Nedi{\'c},
  ``Constrained multi-cluster game: {{Distributed}} {N}ash equilibrium seeking
  over directed graphs,'' 2024.
\newblock Preprint arXiv:2404.14554.

\bibitem{liu2024}
S.~Liu, D.~Wang, and Z.-G. Wu, ``Distributed multi-coalition games with general
  linear systems over markovian switching networks,'' {\em IEEE Trans. Netw.
  Sci. Eng.}, vol.~11, no.~6, pp.~6157--6168, 2024.

\bibitem{Deng9772719}
Z.~Deng and Y.~Liu, ``{N}ash equilibrium seeking algorithm design for
  distributed nonsmooth multicluster games over weight-balanced digraphs,''
  {\em IEEE Trans. Neural Netw. Learn. Syst.}, vol.~34, no.~12,
  pp.~10802--10811, 2023.

\bibitem{liu2023dynamic}
F.~Liu, J.~Yu, Y.~Hua, X.~Dong, Q.~Li, and Z.~Ren, ``Dynamic generalized {N}ash
  equilibrium seeking for $n$-coalition noncooperative games,'' {\em
  Automatica}, vol.~147, p.~110746, 2023.

\bibitem{deng2024}
Z.~Deng and J.~Luo, ``Distributed strategy design for multicoalition games with
  autonomous high-order players and its application in smart grids,'' {\em IEEE
  Trans. Syst. Man Cybern. Syst.}, vol.~54, no.~8, pp.~4726--4735, 2024.

\bibitem{Yuwen2025}
C.~Yuwen, J.~Han, X.~Liu, and Z.~Zhen, ``Distributed decision-making of general
  linear systems in multi-coalition games and its application to {USV} swarm
  confrontation,'' {\em Int. J. Robust Nonlinear Control}, vol.~35, no.~3,
  pp.~1255--1268, 2025.

\bibitem{Liu9745386}
F.~Liu, X.~Dong, J.~Yu, Y.~Hua, Q.~Li, and Z.~Ren, ``Distributed {N}ash
  equilibrium seeking of $n$-coalition noncooperative games with application to
  {UAV} swarms,'' {\em IEEE Trans. Netw. Sci. Eng.}, vol.~9, no.~4,
  pp.~2392--2405, 2022.

\bibitem{nian2023a}
X.~Nian, F.~Niu, and S.~Li, ``Nash equilibrium seeking for multicluster games
  of multiple nonidentical {E}uler-{L}agrange systems,'' {\em IEEE Trans.
  Control Netw. Syst.}, vol.~10, no.~4, pp.~1732--1743, 2023.

\bibitem{huang2024a}
Y.~Huang, Z.~Meng, and J.~Sun, ``Distributed {N}ash equilibrium seeking for
  multicluster aggregative game of {E}uler-{L}agrange systems with coupled
  constraints,'' {\em IEEE Trans. Cybern.}, vol.~54, no.~10, pp.~5672--5683,
  2024.

\bibitem{Huang2024}
Y.~Huang, Z.~Meng, and J.~Sun, ``Distributed {N}ash equilibrium seeking for
  multicluster aggregative game of {E}uler–{L}agrange systems with coupled
  constraints,'' {\em IEEE Trans. Cybern.}, vol.~54, no.~10, pp.~5672--5683,
  2024.

\bibitem{Nian2023}
X.~Nian, F.~Niu, and S.~Li, ``Nash equilibrium seeking for multicluster games
  of multiple nonidentical {E}uler–{L}agrange systems,'' {\em IEEE Trans.
  Control Netw. Syst.}, vol.~10, no.~4, pp.~1732--1743, 2023.

\bibitem{Zhang8792368}
Y.~Zhang, S.~Liang, X.~Wang, and H.~Ji, ``Distributed {N}ash equilibrium
  seeking for aggregative games with nonlinear dynamics under external
  disturbances,'' {\em IEEE Trans. Cybern.}, vol.~50, no.~12, pp.~4876--4885,
  2020.

\bibitem{LIU2026112603}
L.~Liu, F.~Deng, J.~Chen, and M.~Lu, ``Distributed {N}ash equilibrium seeking
  for aggregative games of linear systems subject to unknown disturbances,''
  {\em Automatica}, vol.~183, p.~112603, 2026.

\bibitem{Romano8727896}
A.~R. Romano and L.~Pavel, ``Dynamic {NE} seeking for multi-integrator
  networked agents with disturbance rejection,'' {\em IEEE Trans. Control Netw.
  Syst.}, vol.~7, no.~1, pp.~129--139, 2020.

\bibitem{clarke2008nonsmooth}
F.~H. Clarke, Y.~S. Ledyaev, R.~J. Stern, and P.~R. Wolenski, {\em Nonsmooth
  Analysis and Control Theory}.
\newblock New York, USA: Springer Press, 2008.

\bibitem{Ortega1998}
R.~Ortega, A.~Loria, P.~J. Nicklasson, and H.~Sira-Ramirez, {\em Passivity
  Based Control of {E}uler-{L}agrange Systems}.
\newblock London, UK: Springer Press, 1998.

\bibitem{Gadjov8354898}
D.~Gadjov and L.~Pavel, ``A passivity-based approach to {N}ash equilibrium
  seeking over networks,'' {\em IEEE Trans. Automat. Control}, vol.~64, no.~3,
  pp.~1077--1092, 2019.

\bibitem{moreau2004}
L.~Moreau, ``Stability of continuous-time distributed consensus algorithms,''
  in {\em Proc. IEEE Conf. Decis. Control}, pp.~3998--4003, 2004.

\bibitem{khalilNonlinearSystems2002}
H.~Khalil, {\em Nonlinear Systems}.
\newblock Upper Saddle River, NJ, USA: Prentice Hall, 2002.

\bibitem{Rasmus2021Formation}
R.~Ringbäck, J.~Wei, E.~S. Erstorp, J.~Kuttenkeuler, T.~A. Johansen, and K.~H.
  Johansson, ``Multi-agent formation tracking for autonomous surface
  vehicles,'' {\em IEEE Trans. Control Syst. Technol.}, vol.~29, no.~6,
  pp.~2287--2298, 2021.

\bibitem{Wen2024escort}
G.~Wen, X.~Fang, H.~Shen, J.~Zhou, and D.~Zheng, ``Distributed leader escort
  control for multiple autonomous surface vessels: Utilizing signed graph to
  model interaction relationships,'' {\em IEEE/ASME Trans. Mechatron.},
  vol.~30, no.~4, pp.~2574--2585, 2025.

\bibitem{MA2021227}
S.~Ma, W.~Guo, R.~Song, and Y.~Liu, ``Unsupervised learning based coordinated
  multi-task allocation for unmanned surface vehicles,'' {\em Neurocomputing},
  vol.~420, pp.~227--245, 2021.

\bibitem{antonelli2018}
G.~Antonelli, {\em Underwater Robots}.
\newblock Cham, Switzerland: Springer Press, 2018.

\end{thebibliography}

\vfill

\end{document}